\documentclass[aps,amsfonts,eqsecnum,superscriptaddress,nofootinbib,longbibliography,superscriptaddress,11pt]{revtex4-2}
\usepackage[letterpaper, left=1in, right=1in, top=1in, bottom=1in]{geometry}

\linespread{1} 
\usepackage{comment}

\usepackage{tcolorbox}
\usepackage{braket}
\usepackage{amsmath}
\usepackage{breqn}
\usepackage{tcolorbox}
\usepackage{xcolor}
\usepackage[osf]{newpxtext} 
\usepackage[utf8]{inputenc} 
\usepackage[T1]{fontenc}    
\usepackage{multirow}
\usepackage{amssymb,amsmath,amsthm,amsfonts,bbm}

\usepackage{bm}
\usepackage{textgreek}
\usepackage{mathtools}
\usepackage{enumitem}
\usepackage{graphicx}
\usepackage[font=small]{caption}
\usepackage[labelformat=simple]{subcaption}
\usepackage{float}
\usepackage[linesnumbered,ruled,vlined,boxed,algo2e]{algorithm2e}
\usepackage{physics}
\usepackage{footnote}
\usepackage{xcolor}
\usepackage{mathrsfs}
\usepackage{bbm}
\usepackage{makecell}
\newtheorem{assumption}{Assumption}

\usepackage[colorlinks,linkcolor=blue,citecolor=blue,anchorcolor=blue]{hyperref}
\usepackage{tikz}
\usetikzlibrary{trees}
\usetikzlibrary{calc}

\usepackage{cleveref}

\newcommand{\thm}[1]{\hyperref[thm:#1]{Theorem~\ref*{thm:#1}}}
\newcommand{\cor}[1]{\hyperref[cor:#1]{Corollary~\ref*{cor:#1}}}
\newcommand{\defn}[1]{\hyperref[defn:#1]{Definition~\ref*{defn:#1}}}
\newcommand{\lem}[1]{\hyperref[lem:#1]{Lemma~\ref*{lem:#1}}}
\newcommand{\prop}[1]{\hyperref[prop:#1]{Proposition~\ref*{prop:#1}}}
\newcommand{\assum}[1]{\hyperref[assum:#1]{Assumption~\ref*{assum:#1}}}
\newcommand{\fig}[1]{\hyperref[fig:#1]{Figure~\ref*{fig:#1}}}
\newcommand{\tab}[1]{\hyperref[tab:#1]{Table~\ref*{tab:#1}}}
\newcommand{\algo}[1]{\hyperref[algo:#1]{Algorithm~\ref*{algo:#1}}}
\renewcommand{\sec}[1]{\hyperref[sec:#1]{Section~\ref*{sec:#1}}}
\newcommand{\append}[1]{\hyperref[append:#1]{Appendix~\ref*{append:#1}}}
\newcommand{\fac}[1]{\hyperref[fac:#1]{Fact~\ref*{fac:#1}}}
\newcommand{\lin}[1]{\hyperref[lin:#1]{Line~\ref*{lin:#1}}}
\newcommand{\prob}[1]{\hyperref[prob:#1]{Problem~\ref*{prob:#1}}}

\def\>{\rangle}
\def\<{\langle}

\usepackage{hyperref}
\usepackage{aliascnt}

\newtheorem{theorem}{Theorem}[section] 

\newaliascnt{proposition}{theorem}

\aliascntresetthe{proposition}

\newaliascnt{definition}{theorem}
\newtheorem{definition}[definition]{Definition}
\aliascntresetthe{definition}

\newaliascnt{corollary}{theorem}
\newtheorem{corollary}[corollary]{Corollary}
\aliascntresetthe{corollary}

\newaliascnt{conjecture}{theorem}

\aliascntresetthe{conjecture}

\newaliascnt{task}{theorem}

\aliascntresetthe{task}

\newaliascnt{observation}{theorem}

\aliascntresetthe{observation}

\newaliascnt{example}{theorem}

\aliascntresetthe{example}

\newaliascnt{lemma}{theorem}
\newtheorem{lemma}[lemma]{Lemma}
\aliascntresetthe{lemma}

\newaliascnt{remark}{theorem}

\aliascntresetthe{remark}

\newaliascnt{fact}{theorem}

\aliascntresetthe{fact}

\newaliascnt{claim}{theorem}

\aliascntresetthe{claim}

\DeclareMathOperator*{\E}{\mathbb{E}}

\def\Tr{\operatorname{Tr}}\def\:{\hbox{\bf:}}

\SetKwInput{KwInput}{Input} 
\SetKwInput{KwParameter}{Parameters}                %
\SetKwInput{KwOutput}{Output}              %
\SetKwInput{KwReturn}{Return}

\let\oldnl\nl
\newcommand{\nonl}{\renewcommand{\nl}{\let\nl\oldnl}}




\begin{document}

\title{Design boosters: from constant-time quantum chaos to $\infty$-designs and beyond}
\date{\today}
\author{Soumik Ghosh}
\affiliation{Department of Computer Science$,$ University of Chicago}
\author{Arjun Mirani}
\affiliation{Leinweber Institute for Theoretical Physics$,$ Stanford University}
\author{Yihui Quek}
\affiliation{École Polytechnique Fédérale de Lausanne}
\affiliation{Massachusetts Institute of Technology}
\author{Michelle Xu}
\affiliation{Leinweber Institute for Theoretical Physics$,$ Stanford University}

\begin{abstract}
    We study a counterintuitive property of `conditioning' on the result of measuring a subsystem of a quantum state: such conditioning can boost design quality, at the cost of increased system size. We work in the setting of {\em deep thermalization} from many-body physics: starting from a bipartite state on a global system $(A,B)$ drawn from a $k$-design, we measure system $B$ in the computational basis, keep the outcome and examine the state that remains in system $A$, approximating the overall ensemble (the `projected ensemble') by a $k'$-design. We ask: how does the design quality change due to this procedure, or how does $k'$ compare to $k$? 
    We give the first rigorous example of unitary dynamics generating a state such that, projection at very early (constant) times can {\em boost} design randomness. These dynamics are those of quantum chaos, modeled by the evolution of a Hamiltonian drawn from the Gaussian Unitary Ensemble (GUE). 
    We show that, even though a state generated by such dynamics at constant time only forms a $k=\mathcal{O}(1)$ design, the projected ensemble is Haar-random (or a $k'=\infty$ design) in the thermodynamic limit (i.e. when $N_B=\infty$). 
    This phenomenon persists even with weaker and more physically realistic assumptions; our results can be appropriately applied to non-GUE Hamiltonians that nevertheless show likely chaotic signatures in their eigenbases. Moreover, we show that with no assumption on how the global state was generated, a $k$-design experiences a \textit{degradation} in design quality to $k' = \lfloor k/2 \rfloor$.  This improves upon best prior results on the deep thermalization of designs. Together, our contributions argue for design boosting as a result of chaos and showcase a novel mechanism to generate good designs. 
    
\end{abstract}

\maketitle

\newpage

\setcounter{page}{1}


\section{Introduction}

What happens to a random bipartite quantum state under projective measurement of one of its subsystems? Does the remaining subsystem get more or less random than the initial state, and does the answer depend on how the initial state was generated? We can phrase this as the question of how close the ensemble on the remaining subsystem is to a quantum state design. 

From a computer science perspective, this connects with the recent explosion of interest in generating quantum state and unitary designs in minimal circuit depth (see for instance \cite{Brando2016randomcircuits, schuster2025random, cui2025unitary}). The mechanism we outline above is an alternative technique for generating designs, aided by projected measurements. Measurements play a crucial role here: without measurements, a circuit with long-range two-qubit gates and with no limitation on the number of ancillas, must have depth at least $\Omega(\log k+\log \log n / \varepsilon)$ to form an $\epsilon$-approximate state $k$-design \cite{cui2025unitary}. When (non-adaptive) measurements are allowed, however, exact or approximate $t$-designs are achievable in constant circuit depth --- thus bypassing the above lower bound --- but with $poly(n,t)$ many qubits \cite{Haferkamp_2020, TurnerMarkham}. These methods stem from the paradigm of measurement-based quantum computing (MBQC). They work by preparing a particular state known as a resource state, on which by doing a particular set of measurements, one can perform universal quantum computation on the unmeasured qubits. 

Our work strips away that structure, showing that by simply time-evolving with a random Hamiltonian from the Gaussian Unitary Ensemble and performing computational basis measurements on a subset of the qubits, the unmeasured qubits form a design. This all occurs in constant time, independent of the desired design quality; if we work in the thermodynamic limit (infinitely-many measured qubits), we can even sample Haar-randomly. 

From a physics perspective, our work builds on the paradigm of deep thermalization \cite{ho2022exact, Cotler2023EmergentQuantumStateDesigns} as an instantiation of universal dynamics in many-body systems, the quest for which has animated much work in condensed matter physics for the past decade. Here, the term {\em universality} refers to the fact that the above procedure is agnostic to microscopic details: a random global Hamiltonian, and computational basis measurements, are all we need. {\em Deep thermalization} simply refers to the procedure of measuring a subsystem and looking at the state on the remaining subsystem.
Our work exhibits improved parameter scalings as compared to previous work on deep thermalization, not only with respect to the number of qubits that must be measured but also in terms of the depth required to generate the initial state that is then measured.
While MBQC experts will be no strangers to the fact that {\em constant-depth} circuits already suffice to generate designs, our work in the deep thermalization paradigm is the first to show that, {\em constant-time} Hamiltonian evolution already suffices. 

Let us now provide more motivation for studying the phenomenon of deep thermalization. Thermalization provides the broader context. The interplay between randomness, chaos, thermalization, and emergent universality is central to the foundations of statistical mechanics, both classical and quantum. In standard thermalization, we expect small subsystems of typical evolved states to look thermal \cite{deutsch1991quantum, srednicki1994chaos, rigol2008thermalization, nandkishore2015review}. Haar-random states have this property; so, up to moment order $k$, do state k-designs. Hence fast constructions of designs give fast proxies for thermal behavior \cite{metger2024simpleconstructionslineardepthtdesigns, chen2024incompressibilityspectralgapsrandom, schuster2025random, cui2025unitary, foxman2025randomunitariesconstantquantum}.\footnote{A quantum state k-design is an ensemble that matches the first k moments of the Haar ensemble.}

That said, global design quality $\neq$ chaos. Recent low-depth constructions achieving strong global design guarantees can still miss hallmark chaotic diagnostics: for example, previously, it was believed that error bounds on approximate $k$-designs translated into information on $k$-point correlators in the system, and hence should be related to chaos; however, recent low-depth design constructions like \cite{schuster2025random} do not have the correct out-of-time-order correlator or subsystem complexity behavior \cite{haah2025growth} to be chaotic. Global design quality is therefore an imperfect lens on chaos.

This, along with inspiration from recent Rydberg-atom experiments, motivates a sharper diagnostic \cite{choi2023preparingrandomstates, Mark_2024}: prepare a state, projectively measure a large register $B$, and look at the residual state on $A$. We say an ensemble exhibits deep thermalization if, after this projection, the induced projected ensemble on $A$ remains Haar-like --- equivalently, if a global state $k$-design yields a projected state $k'$-design for suitable $k$. Standard thermalization is the specific case of  
$k'=1$. This suggests the working conjecture:

\begin{center}
    \emph{Globally chaotic systems should form excellent local projected ensemble designs. }
\end{center}
Now, since deep thermalization is a \emph{much} stronger guarantee imposed on the local subsystem compared to thermalization, we ask whether this conjecture is equally universal as thermalization or whether it is a special property for certain tailored ensembles. We leave as an intriguing open question whether such deep thermalization objects could also have interesting applications to quantum computation.

\subsection{Our contributions}
In this work, as summarized below, we advance the web of connections between quantum designs, deep thermalization, and emergent universality, by establishing new results about the emergence of designs among conditional wavefunction distributions on subsystems. We support the deep thermalization conjecture by studying projected ensembles from a variety of different perspectives: when the global state is (i) prepared by a model of quantum chaotic evolution, and (ii) a generic quantum state design. We improve upon prior studies of such systems. In particular:
\begin{itemize}
\item Firstly, we analyze chaotic time evolutions using GUEs and provide strong evidence for the conjecture of universality of deep thermalization at infinite temperatures. We show that for states evolved under GUE dynamics in the thermodynamic limit, projection at specific constant times results in locally Haar-random subsystems --- in effect {\em boosting} a global $\mathcal{O}(1)$ design into a local $\infty$-design. 

Our conclusion is notably a qualitative reversal of the phenomenology observed previously \cite{Cotler2023EmergentQuantumStateDesigns}, wherein projecting an arbitrary approximate $k$-design actually {\em degrades} the quality of the design on the leftover subsystem. 

\item While GUEs are mathematically appealing objects, their Haar-random eigenvectors have exponential circuit complexity, which make them somewhat unphysical in terms of real experiments. 

We prove that projected ensembles still form good designs, even when we start with more physically attainable assumptions on the dynamics of the original system. More specifically, we show that projected ensembles form approximate $k$-designs when we start off with ensembles of Hamiltonians whose eigenbases are sampled from an $\epsilon$-approximate $4$-design with $\epsilon\leq \mathcal{O}(1/N_A^{k+4})$, as opposed to the GUE's Haar-random eigenbases. 

We also provide finite $N$ corrections coming away from the thermodynamic limit.

\item Finally, we prove a result for an even more generic system, in which the global state may not be the result of time evolution of a GUE Hamiltonian.

We show that we can start from any approximate $2k$-design to get an approximate $k$-design via projection. This improves the results of \cite{Cotler2023EmergentQuantumStateDesigns} by a factor linear in the number of qubits. Nevertheless, this is still a degradation in the design quality, hence supporting the idea that chaos is necessary for design boosting.
\end{itemize}

\subsection{Our setup}\label{sec:deepth}
Here we introduce the formalization attached to deep thermalization and the projected ensemble. In general, we are starting with a state $\ket\Phi\in\mathcal{H}_A\otimes\mathcal{H}_B$. We then measure in the computational basis upon the $n_B$ qubits in the $\mathcal{H}_B$ system. We keep the information $z \in \{0,1\}^{n_B}$ that we've measured: this leaves an ensemble defined on $\mathcal{H}_A$,
\begin{align}
    \mathcal{E}=\{q_z,\ket{\Phi_z}\}.
\end{align}
This ensemble is called the projected ensemble. We define
\begin{align}
    &\ket{\Phi_z} = \ket{\tilde{\Phi}_z}/\sqrt{q_z},\\
    &\ket{\tilde{\Phi}_z} = \left(\mathbb{I}_A\otimes \bra{z}_B \right)\;\ket\Phi_{AB},\\
    &q_z = \braket{\tilde{\Phi}_z},\\
    &z\in\{0,1\}^{n_B}.
\end{align}
Here $\ket{\Phi_z}$ are the possible remaining pure states on $\mathcal{H}_A$ with probabilities $q_z$. We also define the unnormalized state $\ket{\tilde{\Phi}_z}$ for ease of analytical calculation. In this work, we consider purely measuring in the computational basis, but all our results apply to any orthonormal basis.

We aim, then, to see how good of a design $\mathcal{E}$ forms. We call the projected ensemble $\mathcal{E}$ an $\epsilon$-approximate projected ensemble $k$-design under the condition:
\begin{align}
    \norm{\E_z (\ket{\Phi_z}\bra{\Phi_z})^{\otimes k}-\E_{\text{Haar}}(\ketbra{\phi})^{\otimes k}}_1\leq \epsilon
\end{align}
In particular, we will be choosing $\ket\Phi$ from a number of possible setups. For understanding chaotic Hamiltonians, we are taking $\ket\Phi=U(t)\ket{\phi}$ for some initial state $\phi$ and chaotic evolution $U(t)=e^{iHt}$ where $H$ is from some specified ensemble. In our analyses, we suppose a fixed Hamiltonian sample from the ensemble, and after some amount of time evolution, we perform the measurement. We then consider the design quality on average and how it concentrates for typical samples. The observation that the design quality of the projected ensemble $\mathcal{E}$ typically improves for chaotic Hamiltonian ensembles is known as the phenomenon of deep thermalization. For contrast, we also later consider $\ket\Phi=U(t)\ket{\phi}$ where $U(t)$ is more generically some approximate $k$-design ensemble. Our setup is illustrated in Figure \ref{fig:chaos}.
\begin{figure}
    \centering
\includegraphics[width=0.5\textwidth]{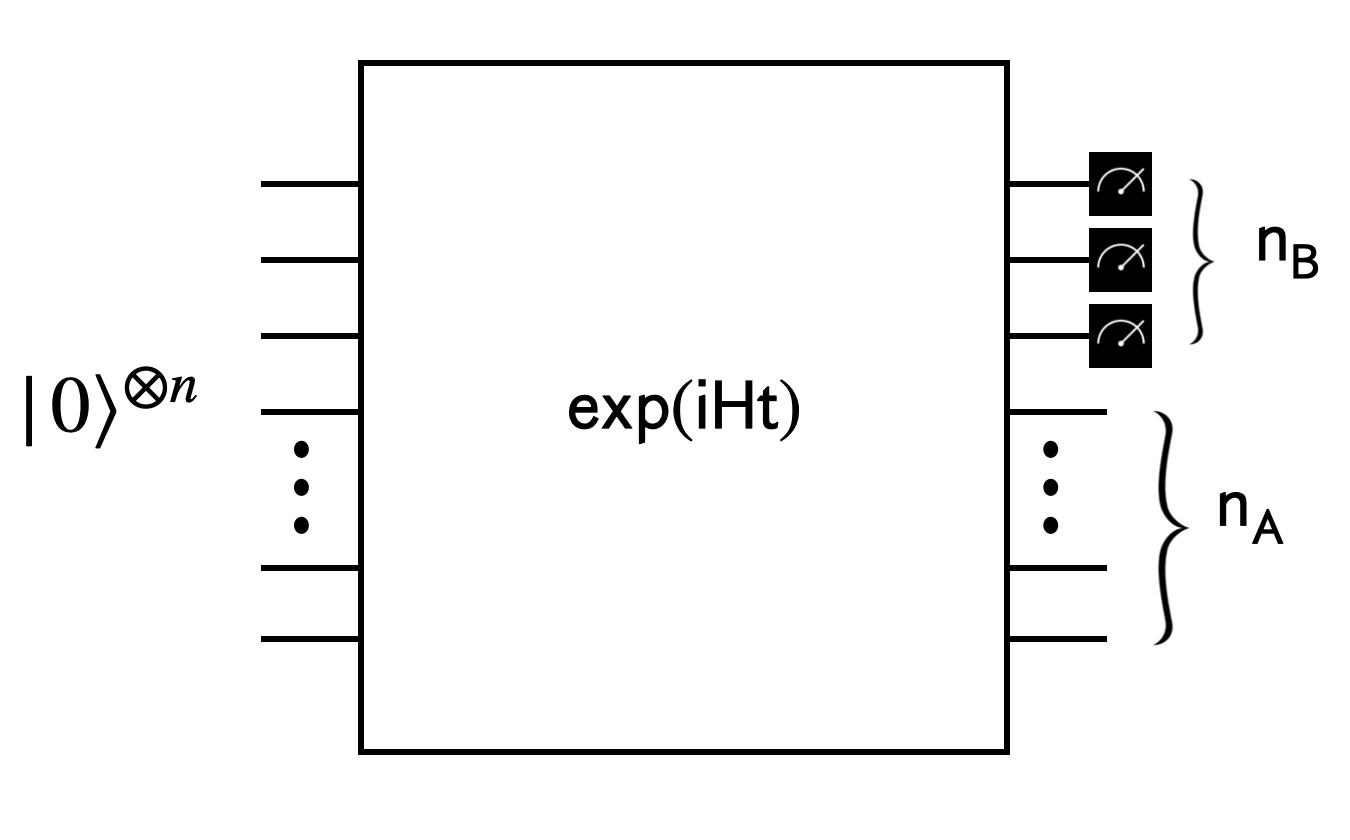}
    \caption{We evolve an initial state under a Hamiltonian $H$, measure the bath comprising of $n_B$ qubits, and get a projected ensemble on $n_A$ qubits.}\label{fig:chaos}
\end{figure}
\subsection{Our work in context}

Projected ensembles are the core object studied in the context of deep thermalization, and much past work has been performed to analyze their design quality. In \cite{Cotler2023EmergentQuantumStateDesigns}, the authors have studied how approximate $k$ designs, after projection, become $k'$ designs with $k \approx nk'$. In \cite{ippoliti2022solvable, ho2022exact}, the authors consider specialized ensembles, i.e. fine-tuned 1D periodically-kicked Ising models or random tensor models, that exhibit deep thermalization when the system we are projecting out, i.e. the bath, is infinite in dimension. This is known as the `thermodynamic limit' of the bath. In \cite{chakraborty2025fast}, the authors show that if the indistinguishability guarantees from a Haar random state can be relaxed to computational guarantees, instead of statistical guarantees at the level of trace distance, deep thermalization can be instantiated with local circuits of polylogarithmic depth and match the Haar ensemble up to polynomially many moments. In \cite{mok2025optimal}, it is shown how if the initial state is a certain type of classical-quantum state, where the quantum part is drawn from a $k$-design, then $k'$ can be much larger than $k$ after projection, resulting in an upgrade of design quality.

Despite these extensive works, one would like to have more resounding evidence of the core hypothesis of deep thermalization. To find a good model to showcase deep thermalization, here are some of the characteristics one would look for:

\begin{itemize}
\item The model should be as \emph{generic} as possible, to justify the intuition that deep thermalization is a universal phenomenon. 

This is not a prominent feature in works like \cite{chakraborty2025fast, ho2022exact, ippoliti2022solvable} where the ensembles have special structures, either cryptographic or geometric.

\item The model should \emph{exhibit deep thermalization rapidly}. This is consistent with experimental evidence using Rydberg atom simulators \cite{choi2023preparingrandomstates,Mark_2024}. 

Hence, modeling deep thermalization using a Haar random state is not satisfactory, as they can only be prepared in exponential time. In \cite{Cotler2023EmergentQuantumStateDesigns}, there are instances of some chaotic Hamiltonians that are conjectured to exhibit deep thermalization at reasonably short time scales. There is some numerical justification to those conjectures but prior to our work, a fully rigorous instance of chaotic Hamiltonians giving rise to deeply thermal ensemble at a short time scale remained open.

\item The \emph{design quality should not degrade too much}, between the global ensemble and the projected ensemble.

The motivation behind this property comes from the perspective of practical application. If the design quality decays drastically, from $ \approx nk$ to $k$, as is the case in \cite{Cotler2023EmergentQuantumStateDesigns}, then there is no utility to preparing a projected $k$-design beyond a tabletop simulation of deep thermalization. For any other application, one could have just directly prepared the $k$-design itself. 

On the other hand, if we can boost the design quality, then there might be interesting applications to shadow tomography and potentially cryptography \cite{mok2025optimal}. 
Note that even though \cite{mok2025optimal} boosts the design quality, they need to start with a classical-quantum mixed state with high entropy, which is no longer a good model of deep thermalization, as intuition dictates that the global state should be pure or low entropy.

\item The model should be as \emph{local} as possible, as Hamiltonians that generate physical evolutions are usually local. 

This would mean that ensembles of local circuits that form $k$-designs would be good examples, if only we prevent a drastic degradation of design quality.
\end{itemize}
\noindent In this work, we make substantial progress on each of these categories in different ways by considering various families of ensembles.


\section{Our results}

\subsection{Projected designs using GUEs}
Our first result is to show that in the thermodynamic limit, the GUE Hamiltonian projected ensemble becomes an exact Haar design at a series of discrete but infinite points, starting at $\mathcal{O}(1)$ times. Curiously, our best analyses show that the global state is only an $\mathcal{O}(1)$ (in system size) design at these early times; see \autoref{app:gue_facts} for details. This means that, our projected ensemble experiences the effect of {\em infinitely boosting} the design quality.

Precisely, we show that:
\begin{theorem}[Informal] 
Let $G$ be an $N_{AB}$-dimensional random matrix drawn from the GUE that acts on $\mathcal{H}_{A}\otimes\mathcal{H}_B$. Then, the projected ensemble $\mathcal{E}$ obtained by evolving with $G$ as a Hamiltonian and then measuring out subsystem $B$ approaches an exact quantum state design in the thermodynamic limit for any time $t$ that satisfies $J_1(2t)/t = 0$.
\end{theorem}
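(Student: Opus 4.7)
The plan is to compute the $k$-th moment operator of the projected ensemble and show that, in the thermodynamic limit $N_B\to\infty$, it equals the Haar $k$-th moment $P_{\mathrm{sym}}/\binom{N_A+k-1}{k}$ for every $k$, precisely under the hypothesis $J_1(2t)/t=0$. I would first exploit the invariance structure of the GUE by writing $G = V\Lambda V^\dagger$ with $V$ Haar-distributed on $U(N_{AB})$ independently of the eigenvalue spectrum $\Lambda$. This makes $U(t) = e^{iGt} = V e^{i\Lambda t} V^\dagger$ into a composition of Haar conjugation with a diagonal ``spectral'' unitary, enabling Weingarten calculus on $V$ to be cleanly combined with the well-understood eigenvalue statistics of the GUE.

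Second, I would linearize the moment operator. Writing
\begin{align*}
    \rho^{(k)}_{\mathcal{E}} = \sum_z q_z^{\,1-k}\,|\tilde{\Phi}_z\rangle\langle\tilde{\Phi}_z|^{\otimes k},
\end{align*}
the factor $q_z^{\,1-k}$ is the chief nonlinearity. The crucial observation is that under GUE dynamics the bath marginal $\mathrm{Tr}_A\bigl(U|\phi\rangle\langle\phi|U^\dagger\bigr)$ equilibrates to $I_B/N_B$ in $\mathcal{O}(1)$ time (the standard first-moment chaos statement), so $q_z$ concentrates sharply to $1/N_B$ as $N_B\to\infty$. Replacing $q_z^{\,1-k}$ by $N_B^{k-1}$ reduces the problem to the genuinely linear quantity $N_B^{k-1}\,\mathbb{E}_G\sum_z |\tilde{\Phi}_z\rangle\langle\tilde{\Phi}_z|^{\otimes k}$. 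Applying Weingarten calculus to the Haar average over $V$ produces a double permutation sum in $(\sigma,\tau)\in S_k^2$ weighted by $\mathrm{Wg}(\sigma\tau^{-1},N_{AB})$, with the eigenvalue dependence entering through products of traces $\mathrm{Tr}(e^{in_j\Lambda t})$ indexed by the cycle structure of the permutations. In the thermodynamic limit these traces are controlled by the Wigner semicircle law,
\begin{align*}
    \lim_{N\to\infty}\tfrac{1}{N}\,\mathbb{E}\,\mathrm{Tr}(e^{iG t}) \;=\; \int_{-2}^{2}\tfrac{\sqrt{4-E^2}}{2\pi}\,e^{iEt}\,dE \;=\; \tfrac{J_1(2t)}{t},
\end{align*}
and at leading order in $1/N_{AB}$ the multi-trace averages factorize. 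Each non-identity Weingarten pairing then acquires at least one factor of $J_1(2t)/t$; the hypothesis $J_1(2t)/t = 0$ therefore eliminates all such contributions, while the identity pairings reassemble into the Haar symmetric projector with the correct normalization.

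Finally, to upgrade from expectation-level agreement to typical convergence I would invoke Lipschitz concentration of smooth functions on the GUE, combined with the sharpness of $q_z \to 1/N_B$, to conclude that $\|\rho^{(k)}_{\mathcal{E}} - \rho^{(k)}_{\mathrm{Haar}}\|_1$ vanishes in probability as $N_B\to\infty$. The main obstacle I anticipate is the combinatorial bookkeeping in step two: one must certify \emph{uniformly in $k$} that every non-identity Weingarten term picks up at least one $J_1(2t)/t$ factor, while identity terms correctly reproduce $P_{\mathrm{sym}}$. The delicate estimates lie in controlling the $1/N_{AB}$ expansion of the Weingarten functions against the $N_B^{k-1}$ prefactor from linearization, and in ruling out obstructions from the connected spectral form factor (ramp/plateau) --- these are $\mathcal{O}(1/N_{AB})$-suppressed at generic $t$ but their cumulative effect across $k$ and across Weingarten pairings must be tracked carefully to preserve the \emph{exact} (rather than merely approximate) equality with the Haar moment in the strict $N_B\to\infty$ limit.
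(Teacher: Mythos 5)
Your high-level skeleton (decompose $e^{iGt}=UDU^\dagger$, Weingarten-average over the Haar eigenbasis, use the semicircle law to identify $\frac{1}{N}\mathbb{E}\tr e^{iGt}\to J_1(2t)/t$, and take $N_B\to\infty$ to kill subleading permutations) matches the paper's. But two of your central steps do not go through as stated. First, the linearization. You replace $q_z^{1-k}$ by $N_B^{k-1}$ on the grounds that $q_z$ concentrates to $1/N_B$. The relative fluctuations of $q_z$ about $1/N_B$ are controlled by $N_A$, not $N_B$: for the (limiting) Gaussian vectors $\ket{\tilde\Phi_z}$ one has $q_z N_B = 1+\Theta(1/\sqrt{N_A})$, which does \emph{not} vanish in the thermodynamic limit of the bath. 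Even in expectation your linearized operator evaluates to $N_B^k\mu_k\,\rho^{(k)}_{\mathrm{Haar}}\to\prod_{j=0}^{k-1}\frac{N_A+j}{N_A}\,\rho^{(k)}_{\mathrm{Haar}}$, a multiplicative error of order $k^2/N_A$ that survives $N_B\to\infty$, so you cannot conclude an \emph{exact} design this way. The paper instead works with the frame potential and handles the normalization $q_z^{1-k}$ by the replica trick, analytically continuing to $n=1-k$ copies so that the total replica number is $R=2(n+k)=2$; this is explicitly flagged as an Assumption (it is not rigorously justified by Carlson's theorem in the negative-copy regime), but it is the device that makes the exactness claim and the entire downstream combinatorics ($\sigma\in S_{2R}=S_4$) tractable. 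If you want to avoid the replica trick you would need to exploit the norm--direction independence of the limiting Gaussian conditional states rather than a concentration-of-$q_z$ argument.

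Second, the elimination mechanism. Your claim that ``each non-identity Weingarten pairing acquires at least one factor of $J_1(2t)/t$'' is false: permutations whose cycle structure produces only multi-letter traces such as $\tr(DD^\dagger)=N_{AB}$ or $\tr(D^2)\tr((D^\dagger)^2)$ carry no factor of $\tr D$ or $\tr D^\dagger$ and are untouched by the hypothesis $J_1(2t)/t=0$ (note $\tr(D^2)$ is governed by $J_1(4t)/2t$, generically nonzero at roots of $J_1(2t)$). In the paper's proof the condition $J_1(2t)/t=0$ only kills terms containing an isolated $\tr D$ or $\tr D^\dagger$; all remaining unwanted permutations are eliminated by a separate counting argument in $N_B$ --- each surviving trace factor must contain at least two letters, capping the spectral contribution at $N_{AB}^{R}$, and the contraction against $\mathcal{Q}_B$ must then supply a full factor of $N_B^2$, which forces the permutation into the form $\sigma'\tau\pi_1\pi_2$ (a transposition on the middle $k$ copies times independent permutations of the two $(n+k)$-blocks, with $\sigma'$ uniquely pinned by requiring $\tr(DD^\dagger)^R$). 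It is this joint spectral-plus-$\mathcal{Q}_B$ bookkeeping, not the vanishing of the first trace moment alone, that reassembles the survivors into $F^{(k)}_{\mathrm{Haar}}$. Your proposal as written would need to be substantially reorganized around these two points before it establishes the theorem.
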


The $J_1$ function is a Bessel function of the first kind, and its periodic roots dictates the discrete, but infinite series of times in which the design quality is ideal, as shown in \autoref{fig:bessel_plot}. The expression given is equal to the first trace moment of our GUE evolution in the thermodynamic limit. We search for its roots to find when the first trace moment vanishes; typically, for a global design, one would expect that {\em all} the trace moments need to vanish, but the magic of the projected ensemble lies in that we only seem to need to control the first trace moment.

After our assumption upon the first trace moment, the core analysis is done by examining the frame potential for the GUE Hamiltonian --- although the frame potential is a Schatten $2$-norm quantity and approximate state designs are usually defined in $1$-norm, finding an exact matching frame potential is necessary and sufficient for an exact design. Following methods in \cite{ippoliti2022solvable}, the frame potential allows us to leverage the replica trick 
and rewrite the expression in a way that can be expanded using the Weingarten calculus. Inspired by \cite{chen2024efficient}, we treat the eigenvalues and eigenbasis distributions of GUE separately; the GUE is a pillar of random matrix theory, and its properties are well-studied and analytically tractable in the infinite-dimensional limit. We find that only permutations contributing to a Haar design on the unmeasured system remain after the thermodynamic limit --- taking this limit is necessary to subleading contributions falling away and failing to contribute to our final answer.

To be more concrete: recall that by using standard techniques, the frame potential of a GUE Hamiltonian can be written as a sum of paths, given by
    \begin{align}
        \mathbb{E}_{G}F^{(k)} = \mathbb{E}_{G}\sum_{z_1, z_2} q_{z_1} q_{z_2} \abs{\langle\Phi_{z_1}|\Phi_{z_2}\rangle}^{2k}.
    \end{align}
\noindent After substituting in the marginal probabilities, this sum can be analyzed using the replica trick. The replica trick is a staple of physics, and in particular the version we will use for deep thermalization converts the denominator from the marginal probabilities into a polynomial we can analyze using Weingarten.
\begin{assumption}[Replica trick for deep thermalization]
    With the definitions given in our deep thermalization setup, the following step is valid:
    \begin{align}
        F^{(k)} &= \sum_{z_1, z_2} \abs{\left\langle\tilde{\Phi}_{z_1}|\tilde{\Phi}_{z_2}\right\rangle}^{2k} \braket{\tilde{\Phi}_{z_1}}^{n}  \braket{\tilde{\Phi}_{z_2}}^{n}\\
        &= \sum_{z_1, z_2} \abs{\left\langle\tilde{\Phi}_{z_1}|\tilde{\Phi}_{z_2}\right\rangle}^{2k} \braket{\tilde{\Phi}_{z_1}}^{1-k}  \braket{\tilde{\Phi}_{z_2}}^{1-k}
    \end{align}
\end{assumption}
The replica trick stated here extends the function towards a negative number of copies $n=1-k$. Strangely enough, the overall expression is then order $R=2(n+k)=2$. Naively, you'd want $R$ to depend on $k$; without this dependence, any bound you find with $R=2$ then applies to all moments $k$ frame potentials, which is surprising. Usually the replica trick is justified by Carlson's theorem, which states conditions for analytically continuing the function, but such conditions do not rigorously apply to the negative copy domain. As such we list it as an assumption, but we note that this is heavily employed in the deep thermalization literature and has numerical evidence to support its use, and hence is a fair assumption for our work\cite{Cotler2023EmergentQuantumStateDesigns,ho2022exact,ippoliti2022solvable}.

Now the frame potential becomes
    \begin{align}
        \mathbb{E}_{G}F^{(k)} = \mathbb{E}_{G}\tr\left(\left(e^{-iGt}|\phi\rangle\langle\phi|e^{iGt}\right)^{\otimes R}_{AB} \mathcal{Q}_B\right),
    \end{align}
where 
    \begin{align}
        \mathcal{Q}_B = \sum_{z_1,z_2} \left|z_1^{\otimes n}z_1^{\otimes k}z_2^{\otimes k}z_2^{\otimes n}\right\rangle \left\langle z_1^{\otimes n}z_2^{\otimes k}z_1^{\otimes k}z_2^{\otimes n}\right|.
    \end{align}

We can then break the GUE evolution into its eigenbases and spectral components, whereupon we average first over  the Haar eigenbases by employing the Weingarten calculus. In the infinite limit, only the dominant Weingarten terms will survive. These are the permutation pairs that match, and we find
    \begin{align}
        \mathbb{E}_{G}F^{(k)} = \frac{1}{N_{AB}^{2R}}\sum_{\sigma\in S_{2R}} \mathbb{E}_{D}\tr(\rho_{D,D^\dagger} \sigma) \tr_{AB}\left(C(\sigma, \ketbra{\phi}^{\otimes R})_{AB}\mathcal{Q}_{B}\right).
    \end{align}
where $\rho_{D,D^\dagger}$ and $C$ are notation that implicitly hold the correct contraction pattern from the frame potential definition. Then more permutations are eliminated from the sum in a number of ways:
    \begin{enumerate}
        \item Because the first trace moment disappears in the thermodynamic limit, permutations $\sigma$ where $\tr(\rho_{D,D^\dagger} \sigma)$ is comprised of traces of pairs of $D$, $D^\dagger$, or mixed, will be dominant. They become the only ones that do not vanish.
        \item The trace over the function $C$ and $\mathcal{Q}_B$ will connect $\phi$'s and $z_1$'s and $z_2$'s depending on the specific permutation. Suppose $z_1 \neq z_2$, since that is the dominant case. In such cases, if the permutation does not keep the $\phi$ and $z$ in separate trace terms, the $C$ function will not be large enough to survive in the thermodynamic limit.
    \end{enumerate}
Permutations satisfying the latter condition take the form $\sigma'\tau\pi_1\pi_2$ where $\sigma'$ affects the $\phi$ subspace, $\tau$ is a transposition on the center $k$ copies of $z$'s, and $\pi_1$ and $\pi_2$ permute the first and last $(n+k)$ copies of $z$'s respectively. Imposing condition 1 sets $\sigma'$, and the remaining degrees of freedom evaluate to exactly the Haar frame potential:
    \begin{align}
        \mathbb{E}_{G}F^{(k)} &= \frac{1}{N_{AB}^{2R}}\sum_{\sigma'\in S_{R}, \pi_1,\pi_2\in S_{R/2}} \mathbb{E}_{D}\tr(\rho_{D,D^\dagger} \sigma'\tau\pi_1\pi_2) \sum_{z_1,z_2}\tr_{AB}\left(\tau\pi_1\pi_2\mathcal{Q}_B\right)\\
        &= \frac{N_B^2-N_B}{N_{AB}^{R}}\sum_{\pi_1,\pi_2\in S_{R/2}} \tr_{A}\left(\tau\pi_1\pi_2\right)\\
        &\rightarrow F^{(k)}_{Haar}.
    \end{align}
For the full details of this proof, see \autoref{app:single_GUE}. While our method requires computes the behavior of GUE evolution on average, no concentration of traces comes into play because in the thermodynamic limit each sample from GUE concentrates perfectly; hence this holds for every sample.

This result, coupled with our work establishing the original design quality after global GUE evolution in \lem{singleG_moments}, shows that the GUE projected ensemble has an infinitely boosted projected design moment compared its global design moment. The chaotic aspects of GUE, in particular, suggest a relationship between design boosting and chaos.

\subsection{Projected designs using more physical Hamiltonians}
We also show our results apply to systems with weaker assumptions than GUE dynamics. While before our GUE results hinted that chaos was the factor necessary to boost the design randomness, our results in this section are more general and hence further support that indeed the chaos of a system is what allows this mechanism to occur.

Like in our proof for the GUE evolution, we can break our modifications into those that deal with the eigenvalues and those that deal with the eigenbasis. Spectrally we find we only require the first trace moment to disappear to get exact projected ensemble designs; alternatively, we can still find approximate projected ensemble designs with ensembles that have 4-design eigenbases. Additionally, we can also do away with the thermodynamic limit, at the expense of getting a worse design. This also allows us to loosen the requirement on time points.

While these modifications provide more realistic assumptions for physical ensembles to satisfy, it is not clear as of now whether such 4-design eigenbases ensembles are chaotic, mostly because there has not been much work done analyzing the eigenbases of chaotic systems. However, given the low moment requirement on the eigenbases, if such ensembles are chaotic, there should exist many choices of spectra that should showcase design boosting and hence support the deep thermalization conjecture.

We note that many of these results below also heavily require the applicability of replica trick (and its use of $R=2$). We argue again that the previous literature in deep thermalization proves its numerical success, and leave the true domain of its applicability as an open question. 

\subsubsection{Modifications on the spectrum}
First modification we can make is spectral:
\begin{lemma}[Informal]
\label{lemma_spectrum modifications}
Consider an ensemble $V = U D U^{\dagger}$ defined on $\mathcal{H}_A \otimes \mathcal{H}_B$, where the matrix of eigenvectors, given by $U\in\mathcal{U}(N_{AB})$, is Haar distributed, and random matrix $D\in\mathcal{U}(N_{AB})$ satisfies:
    \begin{enumerate}
        \item $\tr D = 0$,
        \item $\Pr[\left|\tr D - \E\tr D\right| \geq s ]\rightarrow0\;\; \forall s\;\;\;\;$ as $N_B\rightarrow\infty$. 
    \end{enumerate} 
    Then, the projected ensemble is an exact quantum state design in the thermodynamic limit. 
\end{lemma}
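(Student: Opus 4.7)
The plan is to mirror the structure of the GUE proof sketched above, since this lemma replaces the implicit spectral properties of the GUE at its special times with explicit, ensemble-agnostic assumptions on $D$, while leaving the Haar-eigenbasis condition intact. Because $U$ is Haar distributed, the Weingarten calculation from the GUE analysis carries over verbatim, and the only substantive change is in how one argues that non-trivial permutations die in the thermodynamic limit.

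First I would apply the replica trick as before to rewrite the frame potential as
\begin{equation*}
\mathbb{E}_V F^{(k)} = \mathbb{E}_V \tr\left(\left(V \ketbra{\phi} V^\dagger\right)^{\otimes R}_{AB} \mathcal{Q}_B\right)
\end{equation*}
with $R = 2$, substitute $V = U D U^\dagger$, and apply the Weingarten formula to the Haar average over $U$. The dominant contribution in the large-$N_{AB}$ limit comes from the diagonal pairing of the two Weingarten permutations, collapsing the result to a single sum
\begin{equation*}
\mathbb{E}_V F^{(k)} = \frac{1}{N_{AB}^{2R}} \sum_{\sigma \in S_{2R}} \mathbb{E}_D \tr(\rho_{D,D^\dagger}\, \sigma)\, \tr_{AB}\!\left(C(\sigma, \ketbra{\phi}^{\otimes R})_{AB}\, \mathcal{Q}_B\right),
\end{equation*}
exactly as in the GUE case.

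Next I would classify which permutations $\sigma$ survive the thermodynamic limit. The spectral factor $\tr(\rho_{D,D^\dagger}\, \sigma)$ is a product of traces indexed by the cycles of $\sigma$, and because $D$ is a diagonal unitary each such cycle collapses to $\tr(D^m)$ for some integer $m$ determined by the cycle composition. When $m = 0$ the cycle contributes $\tr I = N_{AB}$, which is the maximum possible; pair-type permutations that partner every $D$ with a $D^\dagger$ achieve this for every cycle. Any cycle with $m = \pm 1$ is forced to zero in probability by condition (1) combined with the concentration in condition (2). Cycles with $|m| \geq 2$ are bounded by $N_{AB}$ via unitarity, so, carrying one fewer factor of $N_{AB}$ than a pair contraction, they are strictly subleading. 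Combining this with the geometric selection rule from $\tr_{AB}(C(\sigma, \cdot)\, \mathcal{Q}_B)$ --- which in the dominant $z_1 \neq z_2$ regime forces $\sigma$ to decouple the $\ket{\phi}$ indices from the $\ket{z}$ indices --- leaves exactly the permutations of the form $\sigma' \tau \pi_1 \pi_2$ identified in the GUE analysis. The surviving sum then evaluates to the Haar frame potential $F^{(k)}_{\text{Haar}}$, which by the necessary-and-sufficient equivalence between frame potential matching and exact design quality gives the result.

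The main obstacle is that the hypotheses only directly control the first spectral moment of $D$, not $\tr D^m$ for $|m| \geq 2$, so the argument must rely on the deterministic bound $|\tr D^m| \leq N_{AB}$ from unitarity rather than any explicit suppression. Showing that permutations containing such cycles lose at least one power of $N_{AB}$ relative to the pair contractions, once the geometric $C$ and $\mathcal{Q}_B$ factors are folded in, requires careful combinatorial accounting but should be systematic and parallel to the GUE argument. A secondary issue is controlling the subleading Weingarten terms arising from non-diagonal pairings; as in the GUE proof, these are polynomial in $1/N_{AB}$ and vanish in the thermodynamic limit. We continue to assume the validity of the replica-trick extension to $n = 1 - k$, as is standard throughout the deep thermalization literature.
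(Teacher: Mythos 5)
Your proposal is correct and follows essentially the same route as the paper, whose proof of this statement is literally a one-line remark that the GUE argument (Theorem~C.1) goes through after substituting for $D$ in $\mathbb{E}_D\tr(\rho_{D,D^\dagger}\sigma)$ --- the only properties of $D$ used there being unitarity and the vanishing, perfectly concentrating first trace moment, exactly as you identify. One small caution: your claim that cycles with $|m|\geq 2$ are spectrally subleading is not right on its own (e.g.\ for $R=2$ the pattern $\tr(D^2)\tr((D^\dagger)^2)$ can still attain the maximal $N_{AB}^R$), but as you correctly note in your closing paragraph, such permutations are instead eliminated by the geometric $C$/$\mathcal{Q}_B$ factor --- which is precisely how the paper's GUE proof disposes of them, since the surviving $\sigma'\tau\pi_1\pi_2$ permutations only ever generate alternating $\tr((DD^\dagger)^p)$ words.
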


This straightforwardly results from our analysis methods, which separate eigenvalue and eigenbasis contributions. Essentially, our proof only requires the first trace moment to vanish and does not care about the rest of the spectral structure of the evolution.

\begin{figure}
    \centering
    \includegraphics[width=0.7\textwidth]{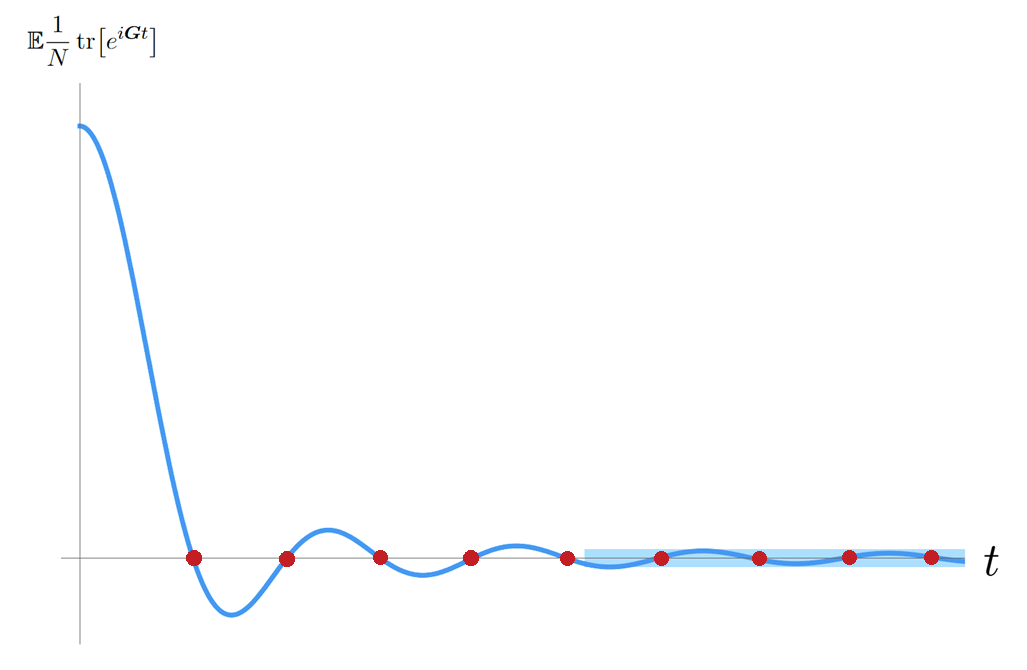}
    \caption{A plot of the first expected trace moment for the Gaussian Hamiltonian plotted against time $t$. At the zeros, marked in red, the projected ensemble of the Gaussian Hamiltonian is an exact design in the thermodynamic limit of the bath. This begins at early (i.e. constant) times and extends for infinite points. In the finite system size limit though, those points are corrected by small subleading permutations and hence are designs. Moreover, at later times, we expect the amplitude of the trace moment decreases enough such that subleading permutations will be small even when the time is not exactly at the roots. Then, we conjecture the projected ensemble for the Gaussian Hamiltonian will always be an approximate design past a certain time.}\label{fig:bessel_plot}
\end{figure}

\subsubsection{Modifications on the eigenbasis}
Just like how we are able to generalize the eigenvalues, the eigenbasis can also be simplified from a Haar random eigenbasis to one that can be sampled from a $4$-design.

\begin{lemma}
Consider an ensemble $V = U D U^{\dagger}$ defined on $\mathcal{H}_A \otimes \mathcal{H}_B$, where the random matrix $D\in\mathcal{U}(N_{AB})$ satisfies the constraints of \Cref{lemma_spectrum modifications} and the matrix of eigenvectors, given by $U$, is an approximate $4$-design with additive error $\epsilon/N_A^{k+4}$. Then, in the thermodynamic limit, with probability $1 - \mathcal{O}(1/N_A)$, the projected ensemble forms a $k$-design for additive error $\epsilon'\leq\mathcal{O}(1/N_A)$.
\end{lemma}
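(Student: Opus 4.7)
The plan is to mirror the proof of the previous lemma while tracking the cost of replacing the Haar average over the eigenbasis $U$ with an average over an approximate $4$-design. Let $\epsilon_0 \coloneqq \epsilon/N_A^{k+4}$ denote the additive $4$-design error. Once the replica trick has collapsed the projected-ensemble frame potential $F^{(k)}$ to an expression of effective order $R=2$ in $V = UDU^\dagger$, it is a polynomial of total degree at most $4$ in the entries of $U$ and of $U^\dagger$ combined. Consequently $\mathbb{E}_U F^{(k)}$ is accessible via the degree-$(2,2)$ (i.e.\ $2$-design) moments of $U$, while the second moment $\mathbb{E}_U (F^{(k)})^2$ needs the degree-$(4,4)$ (i.e.\ $4$-design) moments. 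The $4$-design hypothesis supplies both ingredients needed for a Chebyshev-type concentration argument, and the scaling $\epsilon_0 = \epsilon/N_A^{k+4}$ is chosen precisely to absorb the combinatorial and dimensional factors that accumulate when one finally converts a Schatten-$2$ (frame-potential) bound into a Schatten-$1$ (trace-distance) bound.

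First I would compute $\mathbb{E}_U F^{(k)}$. By definition of an $\epsilon_0$-approximate $4$-design, $\mathbb{E}_U$ agrees with $\mathbb{E}_{\mathrm{Haar}}$ on every polynomial of degree $\le 4$ in $U$ and $\le 4$ in $U^\dagger$ up to additive error $\epsilon_0$ weighted by the operator norm of the polynomial's fixed part. Substituting this approximate average into the Weingarten expansion of the previous lemma, the leading term reproduces the Haar calculation, which by the spectrum hypothesis on $D$ and the thermodynamic limit $N_B \to \infty$ equals $F^{(k)}_{\mathrm{Haar}}$ exactly. The $4$-design correction is bounded by $\epsilon_0$ times the combinatorial weight of the remaining sum over $\sigma \in S_{2R}$, $z_1, z_2$, and Weingarten coefficients; after counting, this weight is at most $\mathcal{O}(N_A^{k+2})$, so $\lvert \mathbb{E}_U F^{(k)} - F^{(k)}_{\mathrm{Haar}} \rvert = \mathcal{O}(\epsilon / N_A^{2})$.

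Next I would compute $\mathbb{E}_U (F^{(k)})^2$ via a parallel Weingarten expansion, now drawing on the full $4$-design moments. The concentration of $\tr D$ from the spectrum assumption is invoked twice — once per replica of $F^{(k)}$ — and the cross-replica Weingarten permutations that could spoil a tight bound are suppressed by the same subleading-in-$N_B$ cancellations that produced the exact projected design in the Haar case. The resulting variance is $\mathrm{Var}_U F^{(k)} = \mathcal{O}(1/N_A^{k+4})$, after which Chebyshev's inequality gives $\lvert F^{(k)} - F^{(k)}_{\mathrm{Haar}} \rvert = \mathcal{O}(1/N_A^{k+2})$ with probability $1 - \mathcal{O}(1/N_A)$. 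Converting via the standard inequality
\begin{equation}
\left\lVert \mathbb{E}_z \ketbra{\Phi_z}^{\otimes k} - \rho^{(k)}_{\mathrm{Haar}} \right\rVert_1 \le \sqrt{\dim(\mathrm{Sym}^k \mathcal{H}_A)} \; \sqrt{F^{(k)} - F^{(k)}_{\mathrm{Haar}}},
\end{equation}
with $\dim(\mathrm{Sym}^k \mathcal{H}_A) = \mathcal{O}(N_A^k)$, then yields the claimed $\epsilon' = \mathcal{O}(1/N_A)$.

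The main obstacle I anticipate is the variance calculation. In expanding $(F^{(k)})^2$ one encounters sums over pairs of permutations in $S_{2R} \times S_{2R}$ together with four measurement labels, and one must verify that every cross-term either matches the squared expectation or is suppressed by at least $N_A^{-(k+4)}$ through a power of $1/N_B$ that survives the thermodynamic limit only after cancelling against the spectrum average. This is the same flavor of combinatorial cancellation that drove the exact-design conclusion in the previous lemma, but it must be re-verified once the $4$-design approximation has injected an additive $\epsilon_0$-noise into each Weingarten coefficient; all other steps of the proof follow the template of the Haar case essentially unchanged.
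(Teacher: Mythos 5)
There is a genuine gap in your moment counting, and it breaks the concentration step. With $R=2$, the quantity $F^{(k)} = \tr\bigl((V\ketbra{\phi}V^\dagger)^{\otimes 2}\mathcal{Q}_B\bigr)$ with $V = UDU^\dagger$ contains \emph{four} copies of $U$ and four of $U^\dagger$ (each $V\ketbra{\phi}V^\dagger$ is degree $(2,2)$ in $U$, and there are two tensor factors). So the \emph{first} moment $\mathbb{E}_U F^{(k)}$ is already a degree-$(4,4)$ polynomial in $U$ — this is precisely why the hypothesis is a $4$-design, not a $2$-design. Your plan to compute $\mathbb{E}_U (F^{(k)})^2$ for a Chebyshev bound would then require degree-$(8,8)$ moments of $U$, i.e.\ an $8$-design, which the hypothesis does not supply. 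The variance calculation you flag as the ``main obstacle'' is not merely combinatorially delicate; it is not controlled at all by the stated assumption, so the Chebyshev route cannot be completed as proposed.

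The paper sidesteps this entirely: it bounds only $\mathbb{E}_{U,D}F^{(k)} - F^{(k)}_{\mathrm{Haar}}$, writing this difference as $\tr\bigl([\mathbb{E}_{U,D}(\cdot)^{\otimes R} - \mathbb{E}_{W,D}(\cdot)^{\otimes R}]\,\mathcal{Q}_B\bigr)$ with $W$ Haar, and applying H\"older with $\norm{\mathcal{Q}_B}_\infty \le 1$ so that the difference is directly the $4$-design additive error $\epsilon/N_A^{k+4}$. It then forms $\Delta^{(k)}_{\mathrm{rms}} = \sqrt{\mathbb{E}F^{(k)}/F^{(k)}_{\mathrm{Haar}} - 1} \le \sqrt{(\epsilon/N_A^{k+4})\binom{k+N_A-1}{k}} = \mathcal{O}(1/N_A^2)$, uses $\mathbb{E}\norm{\rho^{(k)} - \rho^{(k)}_{\mathrm{Haar}}}_1 \le \mathbb{E}\Delta^{(k)} \le \Delta^{(k)}_{\mathrm{rms}}$, and obtains the probability-$1-\mathcal{O}(1/N_A)$ statement by Markov's inequality on the nonnegative trace-distance random variable — no second moment of $F^{(k)}$ is ever needed. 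Your final $L^1$--$L^2$ conversion via $\dim(\mathrm{Sym}^k\mathcal{H}_A)$ is the right ingredient, but you should replace the Chebyshev step with Markov on the trace distance (or equivalently on $\Delta^{(k)}$) to stay within the $4$-design budget.
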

The proof hinges on the following bound on the frame potential difference:
    \begin{align}
        &\abs{\lim_{N_B\rightarrow\infty}\mathbb{E}_{U,D}F^{(k)}-F^{(k)}_{Haar}}  \\     
        &\leq \lim_{N_B\rightarrow\infty}\abs{\mathcal{Q}_B}_{\infty}\abs{\mathbb{E}_{U,D}\left(UDU^\dagger|\phi\rangle\langle\phi|UD^\dagger U^\dagger\right)^{\otimes R}-\mathbb{E}_{W,D}\left(WDW^\dagger|\phi\rangle\langle\phi|WD^\dagger W^\dagger\right)^{\otimes R}}_1.
    \end{align}
This is true due to H\"older's inequality. Recall $|\mathcal{Q}_B|_\infty=1$. Then if R=2, the entire frame potential difference is controlled by a 4-design. In particular, we need the additive error to start off small because to ultimately prove the design quality there is a factor of $F^{k}_{Haar}\sim N^k_A$ that will come into play. At first this may seem like a stringent requirement, but actually since we are assuming the thermodynamic limit, it should not be difficult to get an evolution whose design error scales alright with $N_A$. As previously noted, we do not have an example chaotic system with 4-design eigenbases, but this is a topic of interest to explore for the future.

\subsubsection{Projected designs without the thermodynamic limit}
\label{subsubsection:thermodynamiclimit}
We can also remove the thermodynamic limit, at the expense of no longer getting exact Haar, but rather a design. 
\begin{lemma}
Let $G$ be an $N_{AB}$-dimensional random matrix drawn from the GUE that acts on $\mathcal{H}_{A}\otimes\mathcal{H}_B$. Then, the projected ensemble $\mathcal{E}$ obtained by evolving with $G$ as a Hamiltonian and then measuring out subsystem $B$ is a $k$-design with probability $1 - \mathcal{O}(1/N_A)$ for additive error $\epsilon<\mathcal{O}(1/N_A)$, $k$ such that $N_B>N_A^{k+3}$, time $t<N_A$ that satisfies $J_1(2t)/t=0$, and $N_{AB}>4^{16}$.
\end{lemma}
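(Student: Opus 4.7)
The plan is to retrace the thermodynamic-limit GUE argument from the previous subsection while keeping the finite-size corrections that were dropped there, and then convert the resulting bound on the expected frame potential into a high-probability bound on the projected-ensemble trace distance via Cauchy--Schwarz on the symmetric subspace followed by Markov's inequality. The three hypotheses $N_B > N_A^{k+3}$, $t < N_A$, and $N_{AB} > 4^{16}$ are chosen so that each of three classes of finite-size correction becomes $\mathcal{O}(1/N_A)$.

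Reusing the replica trick with $R = 2$ and the decomposition $G = U D U^\dagger$ into Haar eigenbasis $U$ and GUE spectrum $D$, the expected frame potential is
\begin{equation*}
\mathbb{E}_G F^{(k)} = \frac{1}{N_{AB}^{2R}}\sum_{\sigma,\pi \in S_{2R}} \mathrm{Wg}(\sigma\pi^{-1}, N_{AB}) \, \mathbb{E}_D \tr(\rho_{D,D^\dagger}\sigma) \, \tr_{AB}\!\left(C(\pi, |\phi\rangle\langle\phi|^{\otimes R})\mathcal{Q}_B\right).
\end{equation*}
The thermodynamic-limit proof retained only the pairs with $\sigma = \pi = \sigma'\tau\pi_1\pi_2$ satisfying both the trace-moment-vanishing and the $A/B$-separation conditions, which summed exactly to $F^{(k)}_{\text{Haar}}$. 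At finite $N_{AB}$, each discarded pair contributes one of three classes of correction: (i) off-diagonal Weingarten terms $\sigma \neq \pi$, each suppressed by at least $1/N_{AB}$ once $N_{AB} > 4^{16}$ (the threshold beyond which the standard Weingarten subleading expansion is tight); (ii) $\sigma$-permutations that produced a vanishing $J_1(2t)/t$ singleton factor in the thermodynamic limit, now picking up $\mathcal{O}(t^2/N_{AB}^2)$ semicircle-law corrections from the GUE genus expansion, which remain under control for $t < N_A$; and (iii) $\pi$-permutations whose index pairing in $\tr_{AB}(\,\cdot\,\mathcal{Q}_B)$ crosses the $A/B$ cut, each of which costs a relative factor of $N_A/N_B$ compared to the clean $A/B$-separated pairings. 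Since $R = 2$, only $\mathcal{O}(|S_4|^2)=\mathcal{O}(1)$ pairs must be enumerated, and tracking the dominant $N_A^2/N_B$ mechanism yields
\begin{equation*}
\mathbb{E}_G F^{(k)} - F^{(k)}_{\text{Haar}} \leq \mathcal{O}(N_A^2/N_B) \cdot F^{(k)}_{\text{Haar}},
\end{equation*}
which under $N_B > N_A^{k+3}$ and $F^{(k)}_{\text{Haar}} = \Theta(k!/N_A^k)$ is at most $\mathcal{O}(1/N_A^{2k+1})$.

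To convert to trace distance, note that $\rho^{(k)} - \rho^{(k)}_{\text{Haar}}$ lives in the symmetric subspace $\mathrm{Sym}^k(\mathcal{H}_A)$ of dimension $\binom{N_A+k-1}{k} = \mathcal{O}(N_A^k)$, so Cauchy--Schwarz together with the identity $\|\rho^{(k)} - \rho^{(k)}_{\text{Haar}}\|_2^2 = F^{(k)} - F^{(k)}_{\text{Haar}}$ (from orthogonality of the difference with the rescaled symmetric-subspace projector $\rho^{(k)}_{\text{Haar}}$) gives $\|\rho^{(k)} - \rho^{(k)}_{\text{Haar}}\|_1^2 \leq \mathcal{O}(N_A^k)(F^{(k)} - F^{(k)}_{\text{Haar}})$. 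Applying Markov's inequality to the nonnegative random variable $F^{(k)}(G) - F^{(k)}_{\text{Haar}}$ with failure probability $1/N_A$ upgrades the expected bound to a high-probability bound $F^{(k)}(G) - F^{(k)}_{\text{Haar}} \leq N_A \cdot \mathbb{E}_G[F^{(k)} - F^{(k)}_{\text{Haar}}] = \mathcal{O}(1/N_A^{2k})$; propagating through the Cauchy--Schwarz step yields $\|\rho^{(k)} - \rho^{(k)}_{\text{Haar}}\|_1 \leq \mathcal{O}(1/N_A^{k/2})$, which is at most $\mathcal{O}(1/N_A)$ for the relevant $k \geq 2$, as claimed.

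The main obstacle is the careful bookkeeping in the second paragraph: enumerating every $(\sigma,\pi) \in S_4 \times S_4$ discarded by the thermodynamic-limit proof and verifying that each is suppressed by at least one of the three mechanisms above, so that the aggregated bound is indeed $\mathcal{O}(N_A^2/N_B)$ times the Haar frame potential rather than some larger polynomial in $N_A$. The $t < N_A$ threshold is particularly delicate: for times polynomially larger than $N_A$ the heavy-tailed spectral fluctuations of mechanism (ii) would exceed the target $1/N_A$ accuracy. As elsewhere in the paper, we inherit the $R = 2$ replica-trick identity (Assumption~2.1) without attempting a rigorous justification of the analytic continuation away from the strict thermodynamic limit.
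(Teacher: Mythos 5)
Your overall architecture matches the paper's: redo the $R=2$ replica/Weingarten expansion keeping the terms dropped in the thermodynamic limit, sort the discarded permutation pairs into (subleading Weingarten)/(nonvanishing spectral singletons)/($z_1=z_2$ and mixed $A$--$B$ contractions), convert $\mathbb{E}_G F^{(k)}-F^{(k)}_{\mathrm{Haar}}$ to a trace-norm bound through the symmetric-subspace Cauchy--Schwarz, and finish with Markov. (Applying Markov to $F^{(k)}-F^{(k)}_{\mathrm{Haar}}$ rather than to the trace norm is a harmless variation.)

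There is, however, a concrete quantitative gap in your mechanism (ii), and it sits exactly at the step you defer as ``bookkeeping.'' At finite $N$ the singleton trace factors do not enter through $\bigl|\mathbb{E}\tfrac{1}{N}\tr D\bigr|^2 = \mathcal{O}(t^2/N_{AB}^2)$, as your sketch assumes; they enter through $\mathbb{E}_G\bigl|\tr D\bigr|^2$, i.e.\ the spectral form factor, because the expectation over $G$ is taken of the whole product of traces. The Gaussian fluctuations of $\tr e^{iGt}$ have variance $\Theta(N_{AB}t^2)$, so $\mathbb{E}_G|\tr D|^2 = \Theta(N_{AB}t^2)$ even at the Bessel zeros where the mean is only $\mathcal{O}(t)$ --- a factor of $N_{AB}$ larger than your estimate. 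The paper obtains this by explicitly integrating the Gaussian concentration tail of \lem{single_exp_gue_concentration}, and this fluctuation term (combined with the $\mathcal{O}(N_B)$ from the accompanying mixed contractions and the $N_{AB}^{-2R}$ Weingarten prefactor) is what produces the \emph{additive} correction $\mathcal{O}(t^2/(N_A^2N_B))$ and is the reason the hypothesis $t<N_A$ appears at all. Your claimed \emph{multiplicative} bound $\mathcal{O}(N_A^2/N_B)\cdot F^{(k)}_{\mathrm{Haar}} = \mathcal{O}(k!\,N_A^{2-k}/N_B)$ is therefore not what the enumeration yields: for $k>2$ it is smaller than the true correction, which does not decay with $k$. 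Fortunately the hypothesis $N_B>N_A^{k+3}$ is strong enough that the corrected additive bound still gives $\Delta^{(k)}_{\mathrm{rms}}$ small enough for the stated conclusion, so the lemma survives --- but as written your estimate of the dominant correction is wrong, and a referee executing your plan literally would either get an unjustified bound or discover the missing $N_{AB}$ when computing $\mathbb{E}_G$ of the singleton products.
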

Usually finite $N$ corrections are very difficult to compute or bound. The Weingarten function, in particular, generates a mess of terms, and one usually has to use special techniques, like the large $N$ polynomial method developed in \cite{chen2024efficient}, to address them. However, in this case, we continue to see that in the $R=2$ limit, most difficulties of high $k$ moment Weingarten calculus vanish. To generate the finite $N$ corrections of our frame potential, we consider the effects of the returning permutations, the subleading terms of the Weingarten function, and the subleading piecs of $z_1=z_2.$ Ultimately we are able to bound a specific case of $R=2$ permutations directly and show it bounds all the other scenarios. Once again we emphasis that this depends strongly on the $R=2$ limit being appropriate for the frame potential replica trick.

\subsubsection{Projected designs without exact times}
\label{subsubsection:timerobustness}
Lastly, after considering finite dimensional systems, we are also able to talk about allowances in time error. Specifically, 
\begin{lemma}
Let $G$ be an $N_{AB}$-dimensional random matrix drawn from the GUE that acts on $\mathcal{H}_{A}\otimes\mathcal{H}_B$. Then, the projected ensemble $\mathcal{E}$ obtained by evolving with $G$ as a Hamiltonian and then measuring out subsystem $B$ is a $k$-design with probability $1 - \mathcal{O}(1/N_A)$ for additive error $\epsilon<\mathcal{O}(1/N_A)$, $k$ such that $N_B>N_A^{k+3}$, time $t<N_A$ that satisfies $J_1(2t)/t<\mathcal{O}(1/N_{AB})$, and $N_{AB}>4^{16}$.
\end{lemma}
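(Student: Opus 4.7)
The plan is to bootstrap on the finite-$N$ analysis of the previous lemma (\Cref{subsubsection:thermodynamiclimit}) and simply track how a small, nonzero first trace moment propagates through it. Recall that after applying the replica trick at $R=2$ and averaging over the Haar eigenbases via Weingarten, the frame potential decomposes as a sum over permutations $\sigma \in S_{2R}$, weighted by (i) a spectral factor $\mathbb{E}_D \tr(\rho_{D,D^\dagger}\sigma)$, (ii) a combinatorial contraction $\tr_{AB}(C(\sigma,\ketbra{\phi}^{\otimes R})_{AB}\mathcal{Q}_B)$, and (iii) a Weingarten weight $1/N_{AB}^{2R}$ with its subleading corrections. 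In the exact-root case the spectral factor vanishes on any permutation whose cycle structure leaves an unpaired $D$ or $D^\dagger$; in the present relaxed regime these permutations no longer disappear, but each unpaired $D$ contributes a factor of at most $|J_1(2t)/t| \leq \mathcal{O}(1/N_{AB})$ times the trivial bound $\|D\|_\infty=1$ on the remaining paired factors.

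Concretely, I would first partition $S_{2R}$ into \emph{fully paired} permutations (already analyzed in the previous lemma, giving exactly $F^{(k)}_{\mathrm{Haar}}$ up to $\mathcal{O}(1/N_A)$ Weingarten and finite-$N_B$ corrections) and \emph{mismatched} permutations (those that were killed by $\tr D = 0$). For the mismatched class, I would bound each term by extracting the $|\tr D|^m \leq \mathcal{O}(N_{AB}^{-m})$ factors from its cycle structure, then using $|\tr_{AB}(C(\sigma,\ketbra{\phi}^{\otimes R})\mathcal{Q}_B)| \leq \|\mathcal{Q}_B\|_\infty \cdot \|C(\sigma,\cdot)\|_1$ together with the $|\mathcal{Q}_B|_\infty = 1$ bound and the crude $\|\cdot\|_1 \leq N_{AB}^{R}$ estimate on the Haar contractions. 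Because $R=2$, the total number of mismatched permutations is a fixed constant ($|S_4|=24$), so their aggregate contribution to $\mathbb{E}_{G}F^{(k)} - F^{(k)}_{\mathrm{Haar}}$ is of order $\mathcal{O}(N_{AB}^{R}/N_{AB}^{2R}) \cdot \mathcal{O}(1/N_{AB}) = \mathcal{O}(N_{AB}^{-R-1})$, which is comfortably absorbed into the $\mathcal{O}(1/N_A)$ additive budget once the hypotheses $N_B > N_A^{k+3}$ and $N_{AB}>4^{16}$ are invoked.

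Next I would feed the resulting frame-potential bound into the standard frame-potential-to-trace-distance conversion to obtain that $\mathbb{E}_{G}\|\mathbb{E}_z(\ketbra{\Phi_z})^{\otimes k}-\mathbb{E}_{\mathrm{Haar}}(\ketbra{\phi})^{\otimes k}\|_1^2 \leq \mathcal{O}(1/N_A^2)$, exactly as in the exact-root lemma, and finally upgrade the expectation statement to a high-probability statement by Markov's inequality, yielding the $1 - \mathcal{O}(1/N_A)$ success probability claimed. The time constraint $t < N_A$ enters only to ensure that $\|e^{-iGt}\|$ and related operator norms appearing in the crude bounds above remain polynomially controlled, exactly as in the previous lemma.

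The main obstacle, as usual in the deep-thermalization program, is not the combinatorics but the careful bookkeeping of the relaxed spectral constraint: one has to verify that \emph{every} mismatched permutation in $S_4$ picks up at least one unpaired single-trace factor (and hence at least one $|\tr D| \leq \mathcal{O}(1/N_{AB})$ factor), rather than merely pairing $D$ with $D^\dagger$ in a way that avoids $\tr D$ entirely. Once this case analysis over $S_4$ is completed --- which is feasible because $R=2$ keeps the symmetric group small --- the rest is a mechanical repetition of the analysis in \Cref{subsubsection:thermodynamiclimit}, with the $1/N_{AB}$ slack from the relaxed Bessel-root condition replacing the exact vanishing of $\tr D$.
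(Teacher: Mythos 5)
Your high-level strategy --- deriving this as a corollary of the finite-$N$ lemma by tracking how the relaxed condition $J_1(2t)/t < \mathcal{O}(1/N_{AB})$ propagates through the permutation sum --- is the same as the paper's, which observes that the new contribution to the first trace moment is of the same order as corrections already handled in \autoref{lem:finite_N_corrections_GUE}, so that proof carries over unchanged. However, your quantitative accounting of the ``mismatched'' permutations has a genuine gap. The hypothesis $|J_1(2t)/t| \leq \mathcal{O}(1/N_{AB})$ controls the \emph{normalized, expected} trace $\frac{1}{N}\mathbb{E}\tr D$ (via \autoref{lem:exp_gue_tr_avg}), so the unnormalized mean satisfies only $|\mathbb{E}\tr D| \leq \mathcal{O}(1) + K_0 t$; your claim that $|\tr D|^m \leq \mathcal{O}(N_{AB}^{-m})$ is off by a factor of $N_{AB}^m$. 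More importantly, at finite $N$ you cannot bound these terms by the mean at all: what enters the frame potential is $\mathbb{E}_G|\tr D|^2$, and the Gaussian concentration bound (\autoref{lem:single_exp_gue_concentration}) only gives fluctuations of order $t\sqrt{N_{AB}}$ for the unnormalized trace, so $\mathbb{E}_G|\tr D|^2 = \mathcal{O}(N_{AB}t^2)$ no matter how small the mean is. This variance term dominates, and it is exactly why the hypotheses $t < N_A$ and $N_B > N_A^{k+3}$ appear: the resulting correction to the frame potential is $\mathcal{O}(t^2/(N_A^2 N_B))$, not the $\mathcal{O}(N_{AB}^{-R-1})$ you claim, and those hypotheses are needed to absorb it after dividing by $F^{(k)}_{Haar} \sim \binom{N_A+k-1}{k}^{-1}$. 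Under your estimate the hypotheses would be superfluous, which should have been a warning sign.

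Two smaller points. First, your explanation that $t<N_A$ enters ``only to ensure that $\|e^{-iGt}\|$ and related operator norms remain polynomially controlled'' cannot be right, since $e^{-iGt}$ is unitary and its operator norm is identically $1$; the time constraint comes entirely from the $\mathcal{O}(N_{AB}t^2)$ variance just described and the $K_0 t/N$ finite-$N$ correction to the mean. Second, your stated ``main obstacle'' (checking that every mismatched permutation carries an unpaired single-trace factor) is not where the difficulty lies: the permutations killed in the thermodynamic limit but carrying no $\tr D$ factor are handled in the paper by the subleading Weingarten weight when $\sigma\neq\tau$ and by the reduced $\mathcal{O}(N_B)$ contribution from $\mathcal{Q}_B$ when $z$'s mix with $\ketbra{\phi}$, not by the spectral condition. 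Once you replace your mean-only bound with the computation of $\mathbb{E}_G|\tr D|^2$ via the concentration inequality, the remainder of your plan (Weingarten corrections, the $z_1=z_2$ terms, and the final Markov step) does match the paper's argument.
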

On a technical level, this lemma is a mere corollary of the previous finite-dimensional case since the order of corrections coming from the first trace term here are the same that come from finite-dimensional concentration of the first trace term; as such, we were not able to talk about it in the thermodynamic limit because the concentration scales infinitely well in that limit. Nevertheless, not taking time so precisely is experimentally relevant for the finite $N$ case, so we include it here.

The intuition for this proof should also apply at long times in the GUE evolution; the first trace moment gets small for continuous times after a certain point, see \autoref{fig:bessel_plot}. However, our particular proof method is capped for a maximum time. We believe this to be a technical limitation, and indeed closer analysis of the spectral form factor might speak further to long times. We leave this also to future work.

\subsection{Projected designs on subsystems of global designs}
In this section, we show that even when we start with a global state drawn from a generic $2k$ design, with no assumptions about how that state is generated, we can get a $k$ design using projection. We are no longer boosting the design quality through the projected measurement process, which is consistent since global designs do not guarantee chaotic dynamics. Nevertheless, this analysis is still notable for its improvement upon the previous analysis for generic designs, found in \cite{Cotler2023EmergentQuantumStateDesigns}, which went from a $2k$-design on the global system to a $2k/n_B$ design on the complement of the measured subsystem. 

More formally, we prove the following theorem:
\begin{theorem}\label{thm:main_2k-k_thm}
 Let $N_A = 2^{n_A}$ and $N_B = 2^{n_B}$, such that the system sizes of B and A satisfy $n_B > (k+1)n_A$. Let $\Phi\in\mathcal{H}_{AB}$ be a state drawn from an $\varepsilon'$-approximate $2k$-design, for $\varepsilon' < \mathcal{O}\left(\frac{1}{N_B^{2k}N_A}\right)$. Then with probability $1-\mathcal{O}(1/N_A^{(1/2)-\alpha})$, where $\alpha \in (0,\frac{1}{2})$ is a constant, the ensemble $\mathcal{E}$ is an $\epsilon$-approximate quantum state $k$-design for $\epsilon<k/N_A^\alpha$. In particular, with probability inverse-exponentially (in $n_A$) close to 1, the ensemble $\mathcal{E}$ is an inverse-exponentially approximate state k-design, assuming $k = \mathcal{O}(\text{poly}(n))$.
\end{theorem}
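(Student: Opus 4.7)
The plan is to reduce the one-norm to the frame potential, handle the non-polynomial factor $q_z^{1-k}$ via a polynomial proxy that can be evaluated against the $2k$-design hypothesis, and convert the expected excess back to a one-norm bound via Markov. Concretely, I apply $\|X\|_1\leq\sqrt{\mathrm{rank}(X)}\|X\|_2$ to $X=\rho^{(k)}_{\mathcal{E}}-\rho^{(k)}_H$. Since $\mathrm{rank}(X)\leq d_{\mathrm{sym}}:=\binom{N_A+k-1}{k}$ and $\|X\|_2^2=F^{(k)}(\mathcal{E})-F^{(k)}_{\mathrm{Haar}}$, the theorem reduces to showing $\mathbb{E}_\Phi[F^{(k)}(\mathcal{E})-F^{(k)}_{\mathrm{Haar}}]=\mathcal{O}\!\left(k^2/(d_{\mathrm{sym}} N_A)\right)$, after which Markov on the non-negative excess $F^{(k)}(\mathcal{E})-F^{(k)}_{\mathrm{Haar}}\geq 0$ with threshold $t=k^2/(d_{\mathrm{sym}}N_A^{2\alpha})$ yields failure probability $\mathcal{O}(1/N_A^{1/2-\alpha})$ and error $\sqrt{d_{\mathrm{sym}}t}=k/N_A^{\alpha}$, as stated.

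To bound the expected excess, I introduce the polynomial proxy
\[
\tilde F^{(k)}:=N_B^{2(k-1)}\sum_{z_1,z_2}\bigl|\langle\tilde\Phi_{z_1}|\tilde\Phi_{z_2}\rangle\bigr|^{2k}=N_B^{2(k-1)}\,\langle\Phi|^{\otimes 2k}\bigl(\mathbb{I}_A^{\otimes 2k}\otimes\mathcal{Q}_B\bigr)|\Phi\rangle^{\otimes 2k},
\]
with $\mathcal{Q}_B=\sum_{z_1,z_2}|z_1^{\otimes k}z_2^{\otimes k}\rangle\langle z_2^{\otimes k}z_1^{\otimes k}|$, which is a degree-$2k$ polynomial in $|\Phi\rangle\langle\Phi|$. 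The $\varepsilon'$-approximate $2k$-design hypothesis gives $|\mathbb{E}_\Phi\tilde F^{(k)}-\mathbb{E}_H\tilde F^{(k)}|\leq N_B^{2k-2}\|\mathcal{Q}_B\|_\infty\varepsilon'=\mathcal{O}(1/(N_B^2 N_A))$ under $\varepsilon'\leq\mathcal{O}(1/(N_B^{2k}N_A))$. Inserting the Weingarten formula $\mathbb{E}_H|\Phi\rangle\langle\Phi|^{\otimes 2k}=\Pi^{(2k)}_{\mathrm{sym}}/\binom{N_{AB}+2k-1}{2k}$ and performing a case analysis on $\pi\in S_{2k}$ shows that $\mathrm{tr}[\pi_B\mathcal{Q}_B]=N_B^2$ when $\pi$ is a ``half-swap'' (a bijection between the first and last $k$ tensor factors, parametrized by $S_k\times S_k$ with $c(\pi)=c(\sigma_2\sigma_1)$) and $N_B$ otherwise. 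Summing yields $\mathbb{E}_H\tilde F^{(k)}=F^{(k)}_{\mathrm{Haar}}+\mathcal{O}(1/N_B)$, so that under the hypothesis $n_B>(k+1)n_A$ the $1/N_B$ correction is subdominant to $F^{(k)}_{\mathrm{Haar}}\approx k!/N_A^k$.

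The remaining step is the pathwise comparison between $F^{(k)}(\mathcal{E})$ and $\tilde F^{(k)}$: on the event $E_\delta:=\{\max_z|N_B q_z-1|\leq\delta\}$, Taylor expansion of $(N_B q_z)^{1-k}$ around $1$ gives $|F^{(k)}(\mathcal{E})-\tilde F^{(k)}|\leq\mathcal{O}(k\delta)\tilde F^{(k)}$; off $E_\delta$, the crude bound $F^{(k)}(\mathcal{E})\leq 1$ suffices. The complement $\Pr[\bar E_\delta]$ is controlled by applying $(2k)$-th moment Markov to each $q_z$ (valid because $q_z^{2k}$ is a degree-$2k$ polynomial in $|\Phi\rangle\langle\Phi|$, so the $2k$-design hypothesis pins down its expectation to the Haar Beta-distribution value $\mathbb{E}(N_B q_z-1)^{2k}\approx(2k-1)!!/N_A^k$ up to $\varepsilon'$-error) and then taking a union bound over $z\in\{0,1\}^{n_B}$. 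Balancing $k\delta\cdot F^{(k)}_{\mathrm{Haar}}$ against the union-bound cost, and collecting the $1/N_B$ Weingarten correction and the $\varepsilon'$-design error from Step~2, yields the claimed $\mathbb{E}_\Phi[F^{(k)}(\mathcal{E})-F^{(k)}_{\mathrm{Haar}}]=\mathcal{O}(k^2/(d_{\mathrm{sym}}N_A))$.

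The main obstacle is exactly this concentration step: a $2k$-design provides only polynomial tail decay for $q_z$, and the union bound over the $N_B=2^{n_B}$ outcomes is at the edge of what a $(2k)$-th moment bound can handle. The hypothesis $n_B>(k+1)n_A$ is precisely the regime in which the union-bound cost is matched by the suppression in the $(2k)$-th moment, and it is also the regime in which the $1/N_B$ Weingarten correction to $\mathbb{E}_H\tilde F^{(k)}$ is subdominant; the doubly-stringent assumption $\varepsilon'\leq\mathcal{O}(1/(N_B^{2k}N_A))$ is similarly tailored so that both the degree-$2k$ proxy error and the design error in the moment-matching step remain negligible. Propagating these small errors through the final Markov conversion without losing factors of $d_{\mathrm{sym}}\sim N_A^k/k!$ is the other point that requires care.
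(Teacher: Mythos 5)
Your overall skeleton (reduce the one-norm to the frame potential on the symmetric subspace, control the expected excess $\E_\Phi[F^{(k)}-F^{(k)}_{\mathrm{Haar}}]$, then apply Markov) is sound and matches the second half of the paper's argument. The fatal problem is the step where you pass from the polynomial proxy $\tilde F^{(k)}$ back to the true frame potential $F^{(k)}(\mathcal{E})$ by conditioning on the event $E_\delta=\{\max_z|N_Bq_z-1|\le\delta\}$. Under the Haar measure, $N_Bq_z$ has fluctuations of order $N_A^{-1/2}$ and $\E(N_Bq_z-1)^{2k}\approx (2k-1)!!/N_A^{k}$, so a $2k$-th moment Markov bound gives $\Pr[|N_Bq_z-1|>\delta]\lesssim C_k/(N_A^{k}\delta^{2k})$ per outcome. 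The union bound over all $N_B=2^{n_B}$ outcomes then costs a factor $N_B$, and the theorem's hypothesis is $N_B>N_A^{k+1}$, so $N_B\,C_k/N_A^{k}>C_k\,N_A>1$: the union bound exceeds $1$ for every $\delta\le 1$. Your claim that $n_B>(k+1)n_A$ is ``precisely the regime in which the union-bound cost is matched by the suppression'' has the inequality pointing the wrong way — making $B$ larger makes the union bound strictly worse, and a $2k$-design simply does not supply enough moments to control $\max_z q_z$ uniformly over exponentially many $z$ when $N_B\gg N_A^{k}$. (This is essentially why the earlier analysis of Cotler et al. lost a factor of $n_B$ in the design order.) Worse, your off-event term requires $\Pr[\bar E_\delta]\lesssim k^2/(d_{\mathrm{sym}}N_A)\sim k!k^2/N_A^{k+1}$, which is even further out of reach. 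Also note that even on $E_\delta$ the Taylor expansion of $(N_Bq_z)^{1-k}$ incurs a relative error $\mathcal{O}(k\delta)$ with $\delta\gtrsim N_A^{-1/2}$ at best, so the error floor $k\delta\,F^{(k)}_{\mathrm{Haar}}\sim k\,k!/(N_A^{k+1/2})$ is already at the edge of the budget before the union bound is even attempted.

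The paper avoids this entirely by never requiring pathwise control of the $q_z$. It splits $\|A-\E_{\mathrm{Haar}}A\|_1\le\|A-B\|_1+\|B-\E_{\mathrm{Haar}}A\|_1$ with the intermediate operator $B(\ket{\Phi})=\sum_z\mu_{k-1}^{-1}(\ket{\tilde\Phi_z}\bra{\tilde\Phi_z})^{\otimes k}$, where $\mu_{k-1}=\E_{\mathrm{Haar}}(q_z^{k-1})$. The first term becomes $\sum_z\mu_{k-1}^{-1}\E|q_z^{k}-q_z\mu_{k-1}|$, which is bounded \emph{in expectation, summed over $z$} via Cauchy--Schwarz and a direct Jensen-gap computation ($\E_{\mathrm{Haar}}|q_z^{k}-q_z\mu_{k-1}|^2\lesssim k^2/(N_B^{2k}N_A)$); the sum over $z$ enters linearly and is absorbed by the $N_B^{-k}$ suppression per term, with no maximum over $z$ ever taken. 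Only the second term is handled by the frame-potential/$L^1$--$L^2$ argument you propose for the whole quantity. If you want to rescue your route, you would need to replace the sup-norm conditioning by an in-expectation comparison of $F^{(k)}(\mathcal{E})$ with a polynomial surrogate whose denominators are deterministic constants (as in the paper's $B$), rather than by truncating the random variables $q_z$ uniformly.
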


For a proof sketch, recall that the $k$-th moment of the projected ensemble can be written as
\begin{equation}\label{eq:Adef}
        A(\ket{\Phi}) = \sum_{z\in\{0,1\}^{n_B}} q_z\ket{\Phi_z}\!\bra{\Phi_z}^{\otimes k} = \sum_{z\in\{0,1\}^{n_B}} \frac{\bigl(\ket{\tilde{\Phi}_z}\!\bra{\tilde{\Phi}_z}\bigr)^{\otimes k}}{q_z^{k-1}},
    \end{equation}
\noindent where $q_z = \braket{\tilde{\Phi}_z}$ is the marginal probability of observing measurement outcome $z$ when the bath B is measured, corresponding to normalized state $\ket{\Phi_z}$ on subsystem $A$, where the projected ensemble lives. At a high level, the proof proceeds as follows. First, in \autoref{lem:main_2k-k_lem1}, we bound the expected trace distance between the $k$-th moment of the projected ensemble and the $k$-th moment of the Haar measure on subsystem $A$, averaged over the global initial state drawn from an approximate $2k$-(state-)design ensemble:
\begin{lemma}\label{lem:main_2k-k_lem1}
    For the assumptions given in \autoref{thm:main_2k-k_thm}, the following holds:
    \begin{equation}
         \mathbb{E}_{\Phi \sim (\epsilon',2k)\text{ design}}\Biggl\lVert \sum_{z\in\{0,1\}^{n_B}} q_z\bigl(\ket{\Phi_z}\!\bra{\Phi_z}\bigr)^{\otimes k} - \mathbb{E}_{\Psi \sim \text{Haar}}\bigl(\ket{\Psi}\!\bra{\Psi}^{\otimes k} \bigr)\Biggr\rVert_1 \leq \epsilon 
    \end{equation}
    for $\epsilon<\mathcal{O}(k/N_A^{1/2})$.
\end{lemma}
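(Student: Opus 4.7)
The plan is to split the trace distance via the triangle inequality through an intermediate ``mean-field'' operator
\begin{equation*}
Y'(\Phi) \defeq N_B^{k-1}\sum_{z\in\{0,1\}^{n_B}} \bigl(\ket{\tilde{\Phi}_z}\!\bra{\tilde{\Phi}_z}\bigr)^{\otimes k},
\end{equation*}
obtained from $A(\ket{\Phi})$ by replacing the random denominator $q_z^{k-1}$ with its typical value $N_B^{-(k-1)}$. Writing $Y_{\mathrm{Haar}} \defeq \mathbb{E}_\Psi \ket{\Psi}\!\bra{\Psi}^{\otimes k}$, one decomposes
\begin{equation*}
\mathbb{E}_\Phi \norm{A(\Phi)-Y_{\mathrm{Haar}}}_1 \;\leq\; \mathbb{E}_\Phi\norm{A(\Phi)-Y'(\Phi)}_1 \;+\; \norm{\mathbb{E}_\Phi Y'(\Phi)-Y_{\mathrm{Haar}}}_1 \;+\; \mathbb{E}_\Phi\norm{Y'(\Phi)-\mathbb{E}_\Phi Y'(\Phi)}_1,
\end{equation*}
and handles each piece by reducing it to a polynomial moment of $\Phi$ of degree at most $2k$, which is computable from the $2k$-design hypothesis up to the additive design error $\varepsilon'$.

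For the first piece, applying the $\norm{\cdot}_1$ triangle inequality together with $\norm{(\ket{\tilde\Phi_z}\!\bra{\tilde\Phi_z})^{\otimes k}}_1 = q_z^k$ gives $\norm{A(\Phi)-Y'(\Phi)}_1 \leq \sum_z q_z\,|1-(N_B q_z)^{k-1}|$. Cauchy-Schwarz (using $\sum_z q_z = 1$) followed by Jensen reduces the expectation to $\sqrt{\mathbb{E}_\Phi \sum_z q_z\,(1-(N_B q_z)^{k-1})^2}$. Expanding the square yields three symmetric moments $\sum_z \mathbb{E}_\Phi q_z^j$ for $j \in \{1,\,k,\,2k-1\}$, each computable via the Haar formula $\mathbb{E}_\Phi q_z^j = \binom{N_A+j-1}{j}/\binom{N_{AB}+j-1}{j}$. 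Substituting $N_B^j\,\mathbb{E}_\Phi q_z^j = \prod_{i=0}^{j-1}(1+i/N_A)/(1+i/N_{AB}) = 1 + \binom{j}{2}/N_A + \mathcal{O}(k^4/N_A^2)$, the algebraic identity $1 - 2 + 1 = 0$ cancels the leading constants, leaving $(k-1)^2/N_A + \mathcal{O}(k^4/N_A^2)$ and thus an $\mathcal{O}(k/\sqrt{N_A})$ bound on this piece.

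For the bias $\norm{\mathbb{E}_\Phi Y'(\Phi)-Y_{\mathrm{Haar}}}_1$, the Haar formula $\mathbb{E}_\Phi\ket\Phi\!\bra\Phi^{\otimes k} = \Pi_{\mathrm{sym}}^{(AB)}/\binom{N_{AB}+k-1}{k}$ together with $\sum_z\bra{z}_B^{\otimes k}\Pi_{\mathrm{sym}}^{(AB)}\ket{z}_B^{\otimes k} = N_B\,\Pi_{\mathrm{sym}}^{(A)}$ shows that $\mathbb{E}_\Phi Y'(\Phi)$ differs from $Y_{\mathrm{Haar}}$ by only a multiplicative factor $1+\mathcal{O}(k^2/N_A)$, contributing $\mathcal{O}(k^2/N_A)$ in trace norm since $\norm{Y_{\mathrm{Haar}}}_1 = 1$. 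For the fluctuation piece I would use the dimensional bound $\norm{X}_1 \leq N_A^{k/2}\norm{X}_2$ and Jensen to reduce it to $\sqrt{N_A^k\,\Var_\Phi[Y'(\Phi)]}$. The variance $\mathbb{E}_\Phi\Tr[Y'(\Phi)^2] - \Tr[(\mathbb{E}_\Phi Y'(\Phi))^2]$ is itself a $2k$-design moment: a Weingarten calculation shows that the off-diagonal $(z_1\neq z_2)$ contribution to $\mathbb{E}\Tr[Y'^2]$ matches $\Tr[(\mathbb{E}Y')^2] \approx k!/N_A^k$ at leading order, leaving the diagonal $(z_1 = z_2)$ piece $N_B^{2(k-1)}\sum_z\mathbb{E}_\Phi q_z^{2k} \sim 1/N_B$ as the dominant variance contribution. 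This produces $\sqrt{N_A^k/N_B}$, which is $\leq 1/\sqrt{N_A}$ precisely under the hypothesis $n_B > (k+1)n_A$ of the theorem.

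The main obstacle is tracking the additive $\varepsilon'$ design error across all of these moment computations. Each application of the Haar moment formulas above carries an $\varepsilon'$ error, and after multiplication by prefactors of order $N_B^{2(k-1)}$ and summation over $\mathcal{O}(N_B^2)$ index pairs the total error accumulates to $\mathcal{O}(\varepsilon' N_B^{2k})$, which must remain below the target $\mathcal{O}(k/\sqrt{N_A})$; this is exactly what forces the hypothesis $\varepsilon' < \mathcal{O}(1/(N_B^{2k}N_A))$. A secondary subtlety is that the denominator $q_z^{k-1}$ combined with the rank-one numerator of order $k$ together produce $q_z^{2k-1}$ terms under the Cauchy-Schwarz square, so an exact $2k$-design (rather than a $k$-design) is essential to handle the resulting highest-order moments with controlled error.
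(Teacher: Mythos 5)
Your proposal follows essentially the same strategy as the paper's proof: introduce an intermediate polynomial operator obtained from $A(\ket{\Phi})$ by replacing the random normalizer $q_z^{k-1}$ with a deterministic one, then triangulate. The paper uses $B(\ket{\Phi})=\sum_z \mu_{k-1}^{-1}\bigl(\ket{\tilde{\Phi}_z}\!\bra{\tilde{\Phi}_z}\bigr)^{\otimes k}$ with $\mu_{k-1}=\mathbb{E}_{\text{Haar}}(q_z^{k-1})$ and a two-term split, bounding the second term by the frame-potential ratio $\sqrt{\mathbb{E}F^{(k)}/F^{(k)}_{Haar}-1}$; you use the typical value $N_B^{-(k-1)}$ (which differs from $\mu_{k-1}$ by a factor $1+\mathcal{O}(k^2/N_A)$) and split the remainder further into a bias term and a fluctuation term. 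These are reorganizations of the same computation: the same moments $\mathbb{E}q_z^{j}$ for $j\leq 2k$ and $\mathbb{E}\left|\langle\tilde{\Phi}_{z_1}|\tilde{\Phi}_{z_2}\rangle\right|^{2k}$ appear, the same $(k-1)^2/N_A$ cancellation drives the first term, and the same $z_1=z_2$ versus $z_1\neq z_2$ split drives the last. Your Haar-level estimates all check out, and your cruder dimension factor $N_A^{k/2}$ (versus the paper's symmetric-subspace factor) is still absorbed by the hypothesis $N_B>N_A^{k+1}$.

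The one place the argument as written does not close is the design-error bookkeeping in the fluctuation term. You charge $\varepsilon'$ to each of the $\mathcal{O}(N_B^2)$ pairs $(z_1,z_2)$, arrive at a total error $\mathcal{O}(\varepsilon' N_B^{2k})$, and compare it directly to the target $k/\sqrt{N_A}$. But that error sits inside $\Var_\Phi[Y'(\Phi)]$, which you subsequently multiply by $N_A^{k}$ and take a square root of; under your own hypothesis $\varepsilon'<\mathcal{O}(1/(N_B^{2k}N_A))$ this yields $\sqrt{N_A^{k}\cdot\varepsilon' N_B^{2k}}=\mathcal{O}(N_A^{(k-1)/2})$, which is not small. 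The fix is the one the paper uses: the entire sum $\sum_{z_1,z_2}\left|\langle\tilde{\Phi}_{z_1}|\tilde{\Phi}_{z_2}\rangle\right|^{2k}$ is a single linear functional of $\ket{\Phi}\!\bra{\Phi}^{\otimes 2k}$ implemented by an operator of unit operator norm, so by H\"older the summed design error is $\varepsilon'$, not $N_B^2\varepsilon'$; the error entering the variance is then $N_B^{2k-2}\varepsilon'\leq 1/(N_B^2N_A)$, which survives the $N_A^{k}$ amplification. A secondary, minor point: several of your expansions (e.g.\ $1+\binom{j}{2}/N_A+\mathcal{O}(k^4/N_A^2)$, and the bias $\mathcal{O}(k^2/N_A)\leq\mathcal{O}(k/\sqrt{N_A})$) implicitly require $k\lesssim\sqrt{N_A}$; the paper's statement carries the same implicit restriction, so this is not a real objection, but it should be stated.
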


We then use Markov's inequality to turn this bound on the \textit{average} trace distance into a high-probability bound on the trace distance for any \textit{individual} initial state drawn from the global $2k$-design ensemble:
\begin{lemma}\label{lem:main_2k-k_lem2}
    Suppose the result of \autoref{lem:main_2k-k_lem1} holds. Then with probability $1 - \mathcal{O}\left(1/N_A^{(1/2)-\alpha}\right)$, for $\alpha \in (0,1/2)$, the trace norm deviation of any individual sample $\Phi \sim (\epsilon',2k)\text{ design}$ is bounded by $1/N_A^{\alpha}$:
    \begin{equation}
        \Pr_{\Phi \sim (\epsilon',2k)\text{ design}}\Biggl[\Biggl\lVert \sum_{z\in\{0,1\}^{n_B}} q_z\bigl(\ket{\Phi_z}\!\bra{\Phi_z}\bigr)^{\otimes k} - \mathbb{E}_{\Psi \sim \text{Haar}}\bigl(\ket{\Psi}\!\bra{\Psi}^{\otimes k} \bigr)\Biggr\rVert_1 \geq \frac{k}{N_A^{\alpha}}\Biggr] \leq \mathcal{O}\left(\frac{1}{N_A^{(1/2)-\alpha}}\right)
    \end{equation}
\end{lemma}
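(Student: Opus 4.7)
The plan is to derive this statement as a direct application of Markov's inequality to \autoref{lem:main_2k-k_lem1}. The trace norm is a non-negative function of the random state $\Phi$, so defining
\begin{equation*}
    X(\Phi) := \Biggl\lVert \sum_{z\in\{0,1\}^{n_B}} q_z\bigl(\ket{\Phi_z}\!\bra{\Phi_z}\bigr)^{\otimes k} - \mathbb{E}_{\Psi \sim \text{Haar}}\bigl(\ket{\Psi}\!\bra{\Psi}^{\otimes k} \bigr)\Biggr\rVert_1
\end{equation*}
gives a non-negative random variable whose expectation is bounded by $\mathcal{O}(k/N_A^{1/2})$ when $\Phi$ is drawn from an $(\epsilon', 2k)$-design, by the conclusion of \autoref{lem:main_2k-k_lem1}.

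Applying Markov's inequality at threshold $t = k/N_A^{\alpha}$ then yields
\begin{equation*}
    \Pr_{\Phi}\bigl[X(\Phi) \geq k/N_A^\alpha\bigr] \;\leq\; \frac{\mathbb{E}_{\Phi}[X(\Phi)]}{k/N_A^\alpha} \;\leq\; \frac{\mathcal{O}(k/N_A^{1/2})}{k/N_A^\alpha} \;=\; \mathcal{O}\bigl(1/N_A^{(1/2)-\alpha}\bigr),
\end{equation*}
which is exactly the stated bound. The restriction $\alpha \in (0, 1/2)$ encodes the natural trade-off: larger $\alpha$ shrinks the trace-norm threshold $k/N_A^\alpha$ but enlarges the failure probability $1/N_A^{(1/2)-\alpha}$, and vice versa. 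Since $N_A = 2^{n_A}$, both quantities are inverse-polynomial in $N_A$, equivalently inverse-exponential in $n_A$, which is how the ``inverse-exponentially close to 1'' phrasing in \autoref{thm:main_2k-k_thm} is recovered once $\alpha$ is chosen suitably away from $1/2$ and $k = \mathcal{O}(\mathrm{poly}(n))$.

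There is essentially no obstacle beyond verifying non-negativity of $X(\Phi)$, which is immediate. The lemma is first-moment-only: no higher-moment control of $X$ or Lipschitz-type concentration for functionals of approximate $2k$-designs is invoked, so Markov is both the natural and essentially the sharpest tool available given \autoref{lem:main_2k-k_lem1} alone. A strictly better tail bound would require either moment estimates on $X$ of order higher than one or a concentration-of-measure statement adapted to designs, neither of which is pursued here because the Markov-derived scaling already meets the theorem's target.
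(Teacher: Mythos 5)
Your proof is correct and is exactly the paper's argument: the paper likewise proves this lemma by a one-line application of Markov's inequality to the expectation bound of \autoref{lem:main_2k-k_lem1}, and your explicit computation with threshold $k/N_A^{\alpha}$ fills in the same arithmetic. Nothing is missing.
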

The error parameters can be chosen such that, with probability inverse-exponentially close to 1 (in the size of $A$), the trace distance is inverse-exponentially small, resulting in the theorem as stated. 

Since \autoref{lem:main_2k-k_lem1}
follows from \autoref{lem:main_2k-k_lem2} by a straightforward application of Markov's inequality, we now briefly summarize the mathematical structure of the proof of Lemma 5. Note that the $k$-th moment of the projected ensemble, $A(\ket{\Phi})$ in \autoref{eq:Adef} above, is a rational function of the initial state $\ket{\Phi}$. This is made manifest in the second equality in \autoref{eq:Adef}, where the normalized state $\ket{\Phi_z}$ is equivalently expressed in terms of the unnormalized post-measurement state, $\ket{\tilde{\Phi}_z}$, using the relation $\ket{\Phi_z} = \ket{\tilde{\Phi}_z}/\sqrt{\braket{\tilde{\Phi}_z}}$. Therefore, our proof strategy involves introducing a polynomial function that is simultaneously close in trace distance to the above rational function and to the $k$-th Haar moment. The desired result then follows by a triangle inequality. In this regard, our approach takes inspiration from that of \cite{Cotler2023EmergentQuantumStateDesigns}. However, our choice of intermediate polynomial function significantly simplies our proof relative to that of \cite{Cotler2023EmergentQuantumStateDesigns}, while also enabling us to improve the design quality of the projected ensemble from $\frac{k}{(n_B/2)}$ to $k$. We use the following intermediate operator:
\begin{align}
    B(\ket{\Phi}) =  \sum_{z\in\{0,1\}^{n_B}} \frac{\ket{\tilde{\Phi}_z}\!\bra{\tilde{\Phi}_z}^{\otimes k}}{\mathbb{E}_{\Phi\sim\text{Haar}}(q_z^{k-1})}.
\end{align} 
This operator has no physical significance; it is chosen for its mathematical convenience in facilitating the proof. Now, by the triangle inequality, the desired trace distance is the sum of $\mathbb{E}[\lVert A(\ket{\Phi})-B(\ket{\Phi})\rVert_1]$ and $\mathbb{E}[\lVert B(\ket{\Phi})- \mathbb{E}_{\Psi\leftarrow \text{Haar}}A(\ket{\Psi})\rVert_1]$. Here, we will provide intuition for why each of these terms is bounded; the reader is referred to Appendix F for the detailed proof.

\noindent \textit{1. Bounding the first term} --- this reduces to bounding $\mathbb{E}_{\Phi \sim (\epsilon',2k)\text{ design}} \abs{q_z^{k} - q_z\mathbb{E}_{\Phi\sim\text{Haar}}(q_z^{k-1})}$, which is similar to the mean absolute deviation of $q_z^{k}$. The boundedness of this quantity ultimately follows from (i) the closeness of $\mathbb{E}_{\Phi \sim (\epsilon',2k)\text{-design}}(q_z^{k})$ to $\mathbb{E}_{\Phi\sim\text{Haar}}(q_z^{k})$ for any $z$, which follows from the design property of the initial state, and (ii) the strong concentration properties of the random variable $q_z$ over the Haar measure, which exponentially suppresses the Jensen gap $\mathbb{E}_{\Phi\sim\text{Haar}}(q_z^{k}) - \mathbb{E}_{\Phi\sim\text{Haar}}(q_z)^k$. Notably, directly bounding the Jensen gap (a simple algebraic calculation) suffices for our proof technique -- unlike \cite{Cotler2023EmergentQuantumStateDesigns}, we do not need to explicitly invoke a higher-order variant of Levy's lemma or any other strong concentration inequalities for the Haar measure.

\noindent \textit{2. Bounding the second term} --- this is bounded in terms of the ratio of frame potentials of the design and Haar ensembles: $\sqrt{\frac{\E F^{(k)}}{F^{(k)}_{Haar}}-1}$, which is small since $\E F^{(k)}$ is close to $F^{(k)}_{Haar}$. This is more efficient than the usual L1-L2 norm conversion since the dimension blowup is that of the symmetric subspace (from the $F^{(k)}_{Haar}$ term) rather than the full dimension of the $k$-copy Hilbert space.


\section{Discussion}
Our work sheds new light on the intimate connections between chaos, deep thermalization, and projected designs. In this section, we summarize the implications of our work and explore a few directions for future research. \\
\\
\textbf{Studying the behavior of local Hamiltonians:} Our work shows that many natural and fairly universal models of quantum chaos, as manifested by GUEs and their more physical variants, exhibit strong signatures of deep thermalization at infinite temperatures. 

The next frontier is in proving the same with chaotic local Hamiltonians. There is some heuristic evidence that certain local Hamiltonians, like the quantum mixed-field Ising model, exhibit deep thermalization \cite{Cotler2023EmergentQuantumStateDesigns} at late times. A rigorous analysis remains open. \\
\\
\textbf{Studying projected ensembles at finite temperatures:} 
All our results, in this paper, are systems with no sense of temperature; given that their subsystem partial density matrices converge to Haar, they essentially correspond to physical systems at infinite temperatures. A much more physical scenario is to consider the evolution and deep thermalization of systems at finite temperatures. At finite temperature, the literature expects that the reference ensemble against which we compare the quality of our projected design, is a Scrooge ensemble as opposed to a Haar ensemble  \cite{Mark_2024,ippoliti2022solvable}. It will be interesting to see if universal models of chaos spoof the higher moments of Scrooge ensembles. Even provably spoofing the second moment remains open. \\
\\
\textbf{Importance of early times:} Our work is novel in its ability to capture dynamics at early times; in the finite $N$ case, in particular, our analysis shows our techniques can sometimes only capture early times. Past work have conjectured and shown for a specific system that at late times, the projected ensemble of chaotic systems form designs \cite{Cotler2023EmergentQuantumStateDesigns, ho2022exact}; we provide reasoning for that as well, via the decay of the first trace moment. However, \cite{cotler2017chaos} suggests that GUE should scramble quickly---at constant times---and recent work in computational pseudorandomness appearing in deep thermalization suggests that deep thermalization may occur very fast \footnote{In \cite{chakraborty2025fast}, the authors show that under plausible cryptographic conjectures, deep thermalization requires a circuit depth that is polylogarithmic in the number of qubits.} for certain families of random circuits \cite{chakraborty2025fast}. 

Thus, our work can be interpreted as providing analytical evidence that deep thermalization can indeed set in at constant times for GUE ensembles. It remains open whether other natural ensembles have a similar early time behavior, since GUE is spectacularly non-local. \\
\\
\textbf{Relation to chaos:} All of our results reinforce the idea that design quality can be a sharp method to better understand and characterize chaotic systems via the deep thermalization framework. This provides a new method of relating designs to chaos, which supplants the old and mistaken narrative that global design quality could be a trademark of chaos.

In particular, reasoning intuitively about the projected ensemble leads one to think that measurement should boost the randomness of the system. Here, we find that that intuition is effective for chaotic systems, as we are able to boost from an $\mathcal{O}(1)$ design quality to an exactly Haar projected ensemble; however, for a more generic design condition, this intuition does not hold, and the projected ensemble design only seems to worsen. 

Our analysis suggests that the chaotic characteristic connected to such projected ensemble design behavior might be randomness in the eigenbasis. GUE defines characteristic RMT spectral statistics in quantum chaos and its Haar-random eigenbases imply its evolution satisfies ETH, so a priori the exact Haar projected ensemble in the GUE case could have many sources. However, our suggested modifications for practical ensembles require the eigenbasis distribution to be a low-level design in order to continue having projected ensemble designs, which is related to randomness, while the spectral condition is much more loose. Recent work on signatures of chaos has examined scenarios where spectral RMT chaos and ETH differ \cite{magan2024two}, and this analysis suggests projected ensemble designs as another interesting consequence of that gap. \\
\\
\noindent \textbf{Reducing the cost of preparing the states:} The aim of our paper has been to understand the role of projected ensemble designs in thermalization. 

However, there is a much larger industry in quantum information, spurred by recent results on low-depth designs \cite{schuster2025random}, that attempts to challenge the resource bounds of forming designs \cite{cui2025unitary}. Our results fit into that literature since the measurement step of the projected ensemble is similar to the use of ancillas that has been popular in reducing depth. However, many of our results related to GUEs require taking the thermodynamic limit, i.e. we take the bath to have infinite size and hence we analogously have infinite ancillas. In \autoref{subsubsection:thermodynamiclimit}, we relax that requirement, at the expense of getting a worse design quality. It remains open whether we can improve the error bounds in this regime, or get better designs. \\
\\
\textbf{Applications:} Projected designs have been used in benchmarking and lowering resource costs for the classical shadows protocol \cite{Cotler2023EmergentQuantumStateDesigns, mok2025optimal}. It remains open whether there are other applications.

One interesting possibility could be in the field of cryptography. Note that even if we prepare a $k$-design using projection, we can only efficiently prepare one copy of the state, as ``copying'' the state amounts to observing the same pattern of measurement results in the bath, which will only happen with a negligibly small probability if we measure repeatedly. From this perspective, the situation is similar to $1$-copy secure pseudorandom states \cite{ananth2025morecopycomplexityquantum}. Additionally, there is some evidence that projected designs are robust against certain varieties of non-collapsing measurements and give a blueprint for cryptography beyond \textsf{BQP} \cite{morimae2025quantumcryptographyhardnessnoncollapsing, chakraborty2025fast}. Studying the subtleties of these connections is an important direction for future research. 


\section*{Acknowledgements}
We would like to thank Soonwon Choi, Laura Cui, Jordan Docter, Jeongwan Haah, Jonas Haferkamp, Patrick Hayden, Wai-Keong Mok, Bill Fefferman, Shantanav Chakraborty, Thomas Schuster, Wayne Weng, and Yuzhen Zhang for helpful conversations and discussions. M.X. acknowledges support from a NSF Graduate Research Fellowship. A.M. acknowledges support from the Herb and Jane Dwight Stanford Graduate Fellowship. Y.Q. is supported by a collaboration between the US DOE and other Agencies. This material is based upon work supported by the U.S. Department of Energy, Office of Science, National Quantum Information Science Research Centers, Quantum Systems Accelerator. 
This work was done in part while some of the authors were visiting the Simons Institute for the Theory of Computing and the Challenge Institute for Quantum Computation at UC Berkeley.


\bibliographystyle{amsalpha}
\bibliography{bibliography}

\appendix

\section{The frame potential}\label{sec:frame}
We collect here some observations that we will return to repeatedly. For any probability distribution $\mu$ over $n$-qubit states, let
\begin{align}
    \rho_{\mu}^{(k)} &:= \mathbb{E}_{\Phi\leftarrow \mu} \left[\ketbra{\Phi}{\Phi}^{k}\right] \label{eq:moment_k}\\
    \rho_{\text{Haar}}^{(k)} &:= \frac{\Pi_{\mathrm{Sym}^k(\mathbb{C}^N)} }{\dim \mathrm{Sym}^k(\mathbb{C}^N)} \label{eq:moment_k_H}
\end{align}
In the setting of deep thermalization reviewed in \Cref{sec:deepth}, we will often take $\rho_{\mu}^{(k)}$ to be the following ensemble:
\begin{equation}
\rho^{(k)} := \mathbb{E}_{\Phi\leftarrow \text{Haar}} \left[\sum_{z\in \{0,1\}^{n_B}} q_z \ketbra{\Phi_z}{\Phi_z}^{\otimes k}\right] 
\end{equation}

Since we will often be interested in bounding the trace norm distance between expected moment operators, we state the following useful inequalities:
\begin{lemma}[Bounding the trace distance from any $k$-th moment operator to the $k$-th Haar moment]
For any $k$ and $\mu$, let $\rho^{(k)} := \rho_{\mu}^{(k)}$ and $\rho^{(k)}_{Haar}$ be the moment operators defined in \Cref{eq:moment_k} and \Cref{eq:moment_k_H}. 
Define the quantity
\begin{align}
    \Delta^{(k)} := \frac{\norm{\rho^{(k)} - \rho^{(k)}_{Haar}}_2 }{\norm{\rho^{(k)}_{Haar}}_2}.
\end{align}
Then 
\begin{align}\label{eq:useful}
\boxed{\;
        \E \norm{\rho^{(k)}- \rho_{\text{Haar}}^{(k)}}_1
        \;\leq\;
        \E \left[\Delta^{(k)}\right]
        \;\leq\;
        \sqrt{\E\!\left[(\Delta^{(k)})^2\right]}
      \;}
\end{align}
This bound also applies when $\rho^{(k)}$ is replaced by any operator in the symmetric subspace, $\mathrm{Sym}^k(\mathbb{C}^N)$.
\end{lemma}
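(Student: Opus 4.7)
The proof naturally breaks into two inequalities, and I would handle them in order.

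The plan is to establish the first inequality $\E \norm{\rho^{(k)}- \rho_{\text{Haar}}^{(k)}}_1 \leq \E[\Delta^{(k)}]$ by a pointwise inequality (for each outcome of the randomness) followed by taking expectations. The pointwise inequality $\norm{X}_1 \leq \Delta^{(k)}$ for $X := \rho^{(k)} - \rho^{(k)}_{Haar}$ comes from the standard Cauchy--Schwarz-type conversion $\norm{X}_1 \leq \sqrt{\operatorname{rank}(X)}\,\norm{X}_2$, combined with a rank bound and an explicit evaluation of $\norm{\rho^{(k)}_{Haar}}_2$. Specifically, I would note that both $\rho^{(k)}$ and $\rho^{(k)}_{Haar}$ are supported on the symmetric subspace $\mathrm{Sym}^k(\mathbb{C}^N)$: the former because $\rho^{(k)}=\E[\ketbra{\Phi}{\Phi}^{\otimes k}]$ is a convex combination of symmetric-subspace projectors, the latter by definition. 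Hence their difference is also supported there, so $\operatorname{rank}(X)\leq d_{\mathrm{sym}}:=\dim\mathrm{Sym}^k(\mathbb{C}^N)$.

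Next, I would compute $\norm{\rho^{(k)}_{Haar}}_2$ directly from the definition $\rho^{(k)}_{Haar}=\Pi_{\mathrm{Sym}^k(\mathbb{C}^N)}/d_{\mathrm{sym}}$, yielding
\begin{equation}
    \norm{\rho^{(k)}_{Haar}}_2^2 = \Tr\!\left(\frac{\Pi_{\mathrm{Sym}^k(\mathbb{C}^N)}}{d_{\mathrm{sym}}^2}\right) = \frac{1}{d_{\mathrm{sym}}},
\end{equation}
so $\sqrt{d_{\mathrm{sym}}} = 1/\norm{\rho^{(k)}_{Haar}}_2$. Chaining the inequalities, $\norm{X}_1 \leq \sqrt{d_{\mathrm{sym}}}\,\norm{X}_2 = \norm{X}_2/\norm{\rho^{(k)}_{Haar}}_2 = \Delta^{(k)}$, which is exactly the pointwise bound. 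Taking expectations then gives the first claimed inequality. The extension to an arbitrary operator in the symmetric subspace replacing $\rho^{(k)}$ requires no modification, since only the symmetric-subspace support (and not the probabilistic-mixture form) was used in the rank bound.

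The second inequality $\E[\Delta^{(k)}] \leq \sqrt{\E[(\Delta^{(k)})^2]}$ is simply Jensen's inequality applied to the convex function $x \mapsto x^2$ (equivalently, concavity of $\sqrt{\cdot}$): $(\E[\Delta^{(k)}])^2 \leq \E[(\Delta^{(k)})^2]$, then take square roots, which is valid because $\Delta^{(k)}\geq 0$.

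There is essentially no obstacle in this proof; the only subtle point is the rank bound, where one must remember that the relevant ambient space is $\mathrm{Sym}^k(\mathbb{C}^N)$ (of dimension $\binom{N+k-1}{k}$) rather than the full tensor-power space (of dimension $N^k$). This is precisely the advantage highlighted in the discussion of \autoref{thm:main_2k-k_thm}: replacing the naive $N^k$-dimensional rank by the symmetric-subspace dimension yields a quadratically tighter frame-potential-based bound, which is what makes the subsequent argument go through.
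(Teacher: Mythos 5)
Your proposal is correct and follows essentially the same route as the paper's proof: the first inequality via $\norm{X}_1 \leq \sqrt{\operatorname{rank}(X)}\,\norm{X}_2$ with the rank bounded by $\dim\mathrm{Sym}^k(\mathbb{C}^N) = 1/\norm{\rho^{(k)}_{Haar}}_2^2$, and the second via Jensen. Your added observation that the symmetric-subspace rank bound is the key improvement over the naive $\sqrt{N^k}$ blowup is exactly the point the paper also emphasizes.
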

\begin{proof}
    The second inequality of \Cref{eq:useful} is simply Jensen's inequality.
    The first inequality of \Cref{eq:useful} follows from an application of Cauchy-Schwarz, i.e. we use the inequality $\|X\|_1 \leq \sqrt{\operatorname{rank}(X)}\|X\|_2$. First, we note that both $\rho^{(k)}$ and $\rho_{\text{Haar}}$ live in the symmetric subspace, therefore $X$ will also live on this subspace. Therefore, the multiplicative blowup when converting from $1\rightarrow 2$ norm is
    \begin{equation}
        \text{rank}(\Pi_{\mathrm{Sym}^k(\mathbb{C}^N)}) =  \dim \mathrm{Sym}^k(\mathbb{C}^N) = \binom{N+k-1}{k} = \frac{1}{\norm{\rho_{\text{Haar}}^{(k)}}_2^2}.
    \end{equation}
    We point out that a naive application of Cauchy-Schwarz on the full space would have resulted in a significantly worse multiplicative blowup of $\sqrt{N^k}$; therefore, this Lemma allows for a much tighter bound on the trace distance than vanilla Cauchy-Schwarz.  
\end{proof}

In fact, this $\Delta^{(k)}$ quantity may be re-stated as a function of the purity of the ensemble -- a quantity known as the frame potential. Via this re-writing, bounding trace norm distance from Haar moment operators reduces to computing purities. 

\begin{definition}[Frame Potential]
For any ensemble of states represented by $\rho$, the corresponding frame potential is defined as
\begin{align}
    F^{(k)} := \norm{\rho^{(k)}}^2_2 = \tr\left((\rho^{(k)})^2\right).
\end{align}
\end{definition}
Letting $s:= \dim \mathrm{Sym}^k(\mathbb{C}^N)$, 
\begin{align}
\left\|\rho^{(k)}-\rho_{\text {Haar }}^{(k)}\right\|_2^2=\operatorname{Tr}\left[\left(\rho^{(k)}\right)^2\right]-2 \operatorname{Tr}\left[\rho^{(k)} \rho_{\text {Haar }}^{(k)}\right]+\operatorname{Tr}\left[\left(\rho_{\text {Haar }}^{(k)}\right)^2\right] .
\end{align}
One may then calculate that $F^{(k)}_{Haar} = \operatorname{Tr}\left[\left(\rho_{\text {Haar }}^{(k)}\right)^2\right] = 1/s$ and $\operatorname{Tr}\left[\rho^{(k)} \rho_{\text {Haar }}^{(k)}\right] = 1/s$, so that ultimately
\begin{align}    
    \Delta^{(k)} = \frac{\norm{\rho^{(k)} - \rho^{(k)}_{Haar}}_2 }{\norm{\rho^{(k)}_{Haar}}_2} = \frac{\sqrt{F^{(k)}- F^{(k)}_{Haar}}}{\sqrt{F^{(k)}_{Haar}}} = \left(\frac{F^{(k)}}{F^{(k)}_{Haar}} - 1 \right)^{1/2}.
\end{align}

We observe that any state $k$-design $\mu$ has $\Delta^{(k)}=0$. It turns out that this is also a sufficient condition for $\mu$ to be an exact state $k$-design: 
\begin{align}
    \Delta^{(k)}=0 \Rightarrow \rho_{\mu}^{(k)} = \rho_{Haar}^{(k)}
\end{align}
and so for any $t\leq k,$
\begin{align}
    \Tr_{k-t}(\rho_{\mu}^{(k)})  = \Tr_{k-t}(\rho_{Haar}^{(k)}) \Rightarrow \rho_{\mu}^{(t)} = \rho_{Haar}^{(t)}
\end{align}
which is exactly the condition that a $k$-design must satisfy.
Proving that an ensemble of states is an exact state $k$-design via checking that $\Delta^{(k)}=0$ is known as `the method of frame potentials'. We will use this method extensively.
\section{Facts about GUE}\label{app:gue_facts}
The following proofs require a number of lemmas about the properties of exponentiated GUEs. All of them along with proofs can be found in \cite{chen2024efficient} but we restate them below for pedagogy and ease of reference.

The GUE ensemble comprises of $N\times N$ Hermitian matrices with i.i.d. Gaussian entries. The spectral distribution of GUE follows the well-known semi-circle law, and its expected trace moments in the infinite $N$ limit are given by the Catalan numbers. Analysis of the expected trace moments of the exponentiated form gives a spherical bessel as its infinite limit, and then finite $N$ corrections result from controlling the spectral radius. As such, the following bound holds:
\begin{lemma}\label{lem:exp_gue_tr_avg}
    Let $G$ be an $N$-dimensional random matrix drawn from the GUE. Then there exists some constant $K_0>0$, such that for all $t>1$,
    \begin{align*}
        \left|\mathbb{E}\frac{1}{N}\tr[e^{i G t}]\right| \leq  \left|\frac{J_1(2t)}{t}\right|+ \frac{K_0t}{N}.
    \end{align*}
\end{lemma}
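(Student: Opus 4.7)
The plan is to identify $\mathbb{E}\frac{1}{N}\tr[e^{iGt}]$ as the Fourier transform of the averaged empirical spectral distribution of $G$, recognize its $N\to\infty$ limit as the Fourier transform of Wigner's semicircle density (which gives exactly $J_1(2t)/t$), and then bound the finite-$N$ error by a Lipschitz--Wasserstein estimate. Concretely, with $\rho_N(\lambda) := \mathbb{E}\bigl[\frac{1}{N}\sum_{j=1}^{N}\delta(\lambda - \lambda_j(G))\bigr]$ the averaged spectral density, the spectral theorem gives
\begin{align}
    \mathbb{E}\frac{1}{N}\tr[e^{iGt}] = \int_\mathbb{R} e^{i\lambda t}\,\rho_N(\lambda)\,d\lambda.
\end{align}
The classical identity $\int_{-2}^{2} e^{i\lambda t}\,\rho_{sc}(\lambda)\,d\lambda = J_1(2t)/t$ for the semicircle density $\rho_{sc}(\lambda) = \frac{1}{2\pi}\sqrt{4-\lambda^2}\mathbf{1}_{[-2,2]}(\lambda)$, provable via the substitution $\lambda = 2\cos\theta$ and the standard Bessel integral representation, isolates the main term; the lemma then reduces to bounding the error $\int e^{i\lambda t}(\rho_N - \rho_{sc})(\lambda)\,d\lambda$.

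For this, I would invoke the classical $1$-Wasserstein convergence rate $W_1(\rho_N, \rho_{sc}) \leq C/N$ for the averaged GUE spectral measure, which follows from the explicit Hermite/Christoffel--Darboux representation of $\rho_N$ combined with Plancherel--Rotach asymptotics in the bulk and standard exponential tail bounds at the spectral edge (to handle the slight mass that $\rho_N$ places outside $[-2,2]$). Since the real and imaginary parts of $\lambda \mapsto e^{i\lambda t}$ are each $|t|$-Lipschitz, Kantorovich--Rubinstein duality yields
\begin{align}
    \Bigl|\int e^{i\lambda t}(\rho_N - \rho_{sc})(\lambda)\,d\lambda\Bigr| \leq |t|\cdot W_1(\rho_N, \rho_{sc}) \leq \frac{K_0 t}{N},
\end{align}
for some absolute constant $K_0$, and a final triangle inequality $\bigl|\mathbb{E}\frac{1}{N}\tr[e^{iGt}]\bigr| \leq \bigl|J_1(2t)/t\bigr| + \bigl|\int e^{i\lambda t}(\rho_N - \rho_{sc})d\lambda\bigr|$ finishes the proof.

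The main technical obstacle is establishing the Wasserstein rate cleanly; if invoking it as a black box is undesirable, a self-contained alternative is a trace-moment argument. One expands $e^{iGt}$ as a power series and uses the Harer--Zagier formula, whose leading term in $N$ gives $\mathbb{E}\frac{1}{N}\tr[G^{2k}] = C_k$ (the $2k$-th moment of the semicircle) with explicit $O(1/N^2)$ corrections. Summing the corrections against $(-t^2)^k/(2k)!$ for $k$ up to a cutoff of order $N^{1/2}$ is straightforward, and the tail $k \gtrsim N^{1/2}$ can be truncated by invoking concentration of the extreme eigenvalues of $G$ around $\pm 2$. The linear-in-$t$ scaling of the error reflects the Lipschitz constant of $\lambda \mapsto e^{i\lambda t}$; the hypothesis $t>1$ is cosmetic, merely ensuring that lower-order constants can be absorbed into $K_0$.
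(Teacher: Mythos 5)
Your proposal is correct, but it is worth noting that the paper does not actually prove this lemma: Appendix B states the result and defers the proof to \cite{chen2024efficient}, offering only a two-line sketch ("expected trace moments of the exponentiated form give a spherical Bessel in the infinite-$N$ limit... finite $N$ corrections result from controlling the spectral radius"). That sketch is essentially your \emph{alternative} route: expand $e^{iGt}$ in trace moments, use that $\mathbb{E}\frac{1}{N}\tr[G^{2k}]\to C_k$ (Catalan numbers, whose generating series against $(it)^{2k}/(2k)!$ resums to $J_1(2t)/t$), and control the error via the genus expansion and edge concentration. Your primary route --- writing $\mathbb{E}\frac{1}{N}\tr[e^{iGt}]$ as the Fourier transform of the averaged spectral density, isolating the semicircle contribution via $\lambda = 2\cos\theta$, and bounding the remainder by $\mathrm{Lip}(e^{i\lambda t})\cdot W_1(\rho_N,\rho_{sc}) \leq 2|t|\cdot C/N$ --- is a genuinely different and arguably cleaner packaging: it makes the linear-in-$t$ form of the error term transparent and uniform in $t$, whereas the naive genus expansion produces corrections growing like $t^3/N^2$ that only beat $t/N$ for $t\lesssim\sqrt{N}$ (and your proposed moment cutoff at $k\sim N^{1/2}$ likewise restricts the range of $t$). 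The price is that the $W_1 = \mathcal{O}(1/N)$ rate for the \emph{averaged} GUE spectral measure, while true, is not a triviality --- it rests on Götze--Tikhomirov-type Kolmogorov-distance bounds in the bulk plus exponential tail bounds at the edge, exactly as you indicate --- so it should be cited explicitly rather than labeled "classical" without reference. One small bookkeeping point either way: the stated bound presumes the GUE normalization with entry variance $1/N$ so that the spectrum limits to $[-2,2]$; you assume this implicitly, and it is consistent with the $J_1(2t)/t$ appearing in the lemma, but it is worth making explicit since the paper's Appendix B describes the ensemble only as having "i.i.d.\ Gaussian entries."
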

In the infinite limit, we find naturally that at the roots of the spherical bessel $J_1$, the expected trace moments vanish.
\begin{lemma}\label{lem:good_t_exists}
    Let $G$ be an $N$-dimensional random matrix drawn from the GUE. Then there exists an infinite number of $t\in\mathbb{R}$ such that as $N\rightarrow\infty$,
    \begin{align}
        \mathbb{E}\frac{1}{N}\tr[e^{iGt}] \rightarrow 0.
    \end{align}
    There does not exist a constant greater than all the values of $t.$
\end{lemma}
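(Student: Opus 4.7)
The plan is to combine Lemma~\ref{lem:exp_gue_tr_avg} with standard facts about the zeros of the Bessel function $J_1$. By Lemma~\ref{lem:exp_gue_tr_avg}, for any fixed $t>1$,
\begin{align}
    \left|\mathbb{E}\tfrac{1}{N}\tr[e^{iGt}]\right| \;\leq\; \left|\tfrac{J_1(2t)}{t}\right| + \tfrac{K_0 t}{N}.
\end{align}
So as $N\to\infty$ (with $t$ held fixed), the right-hand side tends to $|J_1(2t)/t|$. It therefore suffices to exhibit an unbounded set of real $t$ at which $J_1(2t)/t = 0$, equivalently $J_1(2t)=0$ since $t$ will be nonzero.

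First, I would invoke the classical fact from the theory of Bessel functions that $J_1$ has infinitely many positive real zeros, conventionally denoted $j_{1,1}<j_{1,2}<\dots$, and that these zeros satisfy $j_{1,m}\to\infty$ as $m\to\infty$ (in fact $j_{1,m}=\pi(m+\tfrac{1}{4})+O(1/m)$ by the standard large-argument asymptotic $J_1(x)\sim\sqrt{2/(\pi x)}\cos(x-3\pi/4)$). Setting $t_m := j_{1,m}/2$ gives an infinite, unbounded sequence of times at which the leading term in the bound vanishes. For each such $t_m$, the inequality reduces to
\begin{align}
    \left|\mathbb{E}\tfrac{1}{N}\tr[e^{iGt_m}]\right| \;\leq\; \tfrac{K_0 t_m}{N},
\end{align}
which goes to zero as $N\to\infty$ with $t_m$ fixed, yielding the claimed limit.

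Finally, to conclude the second statement, I observe that since $\{t_m\}_{m\geq 1}$ is an unbounded sequence of valid choices, no real constant can upper bound every such $t$. The only subtleties are ensuring $t_m>1$ so that Lemma~\ref{lem:exp_gue_tr_avg} applies (true for all but possibly the first zero, which can be discarded by starting the sequence at a later index), and citing the zeros of $J_1$ correctly; no part of this step should be a real obstacle, as the result is essentially a direct corollary of the previous lemma once one recalls that $J_1$ oscillates with infinitely many unbounded zeros.
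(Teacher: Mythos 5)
Your proposal is correct and matches the argument the paper intends: the paper defers the proof to the cited reference but explicitly frames this lemma as a consequence of \autoref{lem:exp_gue_tr_avg} evaluated at the roots of $J_1$, which is exactly your route of taking $t_m = j_{1,m}/2$ and letting $N\to\infty$ at fixed $t_m$. Your handling of the $t>1$ requirement is a sensible precaution (and in fact unnecessary, since $j_{1,1}/2 \approx 1.92 > 1$), and the unboundedness of the zero sequence immediately gives the final claim.
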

There are other properties of the spherical bessel that will be useful to us. In particular, some bounds:
\begin{lemma}\label{lem:j1_bounds}
    For $x\geq 1$, $|J_1(x)|<\sqrt{\frac{1}{x}}$.
\end{lemma}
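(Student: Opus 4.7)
The plan is to reduce the claim to a clean second-order ODE via the substitution $u(x) := \sqrt{x}\, J_1(x)$, and then control $u(x)^2$ globally by a Gronwall-type energy estimate. Substituting $u$ into Bessel's equation $x^2 J_1'' + x J_1' + (x^2-1) J_1 = 0$ yields
\[
    u''(x) + \left(1 - \tfrac{3}{4 x^2}\right) u(x) = 0,
\]
so the target inequality $|J_1(x)| < 1/\sqrt{x}$ is equivalent to $u(x)^2 < 1$ for all $x \geq 1$. The advantage of this substitution is that the $1/x$ drift term is gone, leaving an almost-harmonic oscillator whose amplitude is essentially conserved up to a small $O(1/x^2)$ correction.

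Next I would introduce the energy $E(x) := u(x)^2 + u'(x)^2$. Using the ODE, a direct computation gives $E'(x) = \frac{3\, u(x)\, u'(x)}{2 x^2}$, and an AM--GM step then yields $|E'(x)| \leq \frac{3}{4 x^2} E(x)$. Integrating this two-sided differential inequality from $x$ out to infinity gives
\[
    E(x) \;\leq\; E(\infty) \exp\!\left(\frac{3}{4 x}\right), \qquad E(\infty) := \lim_{y \to \infty} E(y).
\]
The limiting value $E(\infty) = 2/\pi$ can be read off from the leading-order Hankel asymptotic $\sqrt{x}\, J_1(x) \sim \sqrt{2/\pi}\,\cos(x - 3\pi/4)$, since both $u$ and $u'$ are then asymptotically sinusoidal with amplitude $\sqrt{2/\pi}$. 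Combining, $u(x)^2 \leq E(x) \leq (2/\pi)\exp(3/(4 x))$, and this is strictly less than $1$ for all $x \geq x^{*} := 3 / (4 \log(\pi/2)) \approx 1.66$, which already establishes the lemma on $[x^{*}, \infty)$.

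What remains is the compact range $1 \leq x \leq x^{*}$. On this interval $J_1$ is positive and monotonically increasing (its first local maximum lies near $x \approx 1.84 > x^{*}$), so it suffices to verify $\sqrt{x^{*}}\, J_1(x^{*}) < 1$, which follows directly from the convergent power series $J_1(x) = \sum_{m\geq 0} \frac{(-1)^m}{m!(m+1)!} (x/2)^{2m+1}$ by truncation with an explicit alternating-tail estimate (numerically one already has $\sqrt{1.66}\cdot J_1(1.66) \approx 0.73$). The main obstacle is that the clean strategy above relies on the identification $E(\infty) = 2/\pi$, which is slightly beyond the pure ODE setup; a fully self-contained alternative would be to anchor the energy estimate at a single large value of $x$ (where the power series gives a sharp explicit upper bound on $E$) instead of at $\infty$, trading a bit of conceptual cleanliness for some extra constant-tracking.
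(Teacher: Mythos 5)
Your proof is essentially correct, but note that the paper does not actually prove this lemma: it is stated in Appendix~B as one of several facts imported verbatim from [chen2024efficient], so there is no in-paper argument to compare against. Your Liouville-transform-plus-energy argument is a clean, largely self-contained route. The substitution $u=\sqrt{x}\,J_1(x)$ does give $u''+(1-\tfrac{3}{4x^2})u=0$, the energy identity $E'=\tfrac{3uu'}{2x^2}$ and the AM--GM step $|E'|\le\tfrac{3}{4x^2}E$ are correct, and since $(\log E)'$ is absolutely integrable at infinity the limit $E(\infty)$ exists and the backward integration $E(x)\le E(\infty)e^{3/(4x)}$ is valid ($E>0$ everywhere because $u$ and $u'$ cannot vanish simultaneously for a nontrivial solution). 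The compact-range argument also works: $\sqrt{x}$ and $J_1(x)$ are both positive and increasing on $[1,x^*]$ since $x^*<j'_{1,1}\approx 1.841$, so the maximum of their product sits at $x^*$ and is numerically about $0.74$.

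Three small caveats, none fatal. First, your argument is not purely elementary: it imports the Hankel amplitude $E(\infty)=2/\pi$ and the location of the first maximum of $J_1$; both are standard, and you already flag the former and sketch a workaround (anchoring the energy bound at a finite point where the power series is controllable), but as written the proof has the same "cite a classical Bessel fact" character as the paper's own treatment, just with different facts cited. Second, at $x=x^*$ exactly your energy bound yields $u^2\le 1$ rather than $u^2<1$; this is harmless because your monotonicity argument actually covers $[1,1.8]$ (where $\sqrt{1.8}\,J_1(1.8)\approx 0.78$) and the energy bound is strictly below $1$ for all $x\ge 1.8$, so you should state the two ranges with a genuine overlap rather than meeting exactly at $x^*$. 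Third, for the derivative asymptotics underlying $E(\infty)=2/\pi$ you need $u'\to-\sqrt{2/\pi}\sin(x-3\pi/4)$, which follows from $J_1'=J_0-J_1/x$ and the $J_0$ asymptotic --- worth one line so the limit of $E$, not just of $u^2$, is justified. With those touch-ups the argument is a perfectly serviceable standalone proof of the lemma, and arguably more informative than the paper's bare citation, since it exhibits the sharp asymptotic constant $2/\pi<1$ that makes the stated bound true with room to spare.
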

Lastly, we will need to know concentration inequalities for GUE. Due to the Gaussian nature of the entries, our exponentiated Gaussians concentrate quite well.
\begin{lemma}\label{lem:single_exp_gue_concentration}
    Let $G$ be an $N$-dimensional random matrix drawn from the GUE. Then,
\begin{align}
    \Pr[\left|\frac{1}{N}\tr[e^{i G  t}] - \E\frac{1}{N}\tr[e^{i G  t}]\right| \geq s ] &\leq \exp(-\frac{Ns^2}{2t^2}).  
\end{align}
\end{lemma}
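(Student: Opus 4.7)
The plan is to appeal to the standard Gaussian concentration of measure (Borell--Tsirelson--Ibragimov--Sudakov) for GUE, exploiting that (i) GUE is a Gaussian measure on the real vector space of Hermitian matrices, and (ii) the functional $Z(G) := \tfrac{1}{N}\tr[e^{iGt}]$ is Lipschitz in the Hilbert--Schmidt norm with an explicitly computable constant.

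First, I would establish the Lipschitz estimate. The key spectral fact --- for Hermitian $A, B$ and any $L$-Lipschitz function $f: \mathbb{R} \to \mathbb{C}$,
\begin{equation}
    |\tr f(A) - \tr f(B)| \;\leq\; L\sqrt{N}\,\|A-B\|_{\mathrm{HS}},
\end{equation}
which follows by diagonalizing, applying the Hoffman--Wielandt inequality to pair the eigenvalues of $A$ and $B$, and using Cauchy--Schwarz on their differences --- when applied to $f(x) = e^{ixt}$ (whose Lipschitz constant is $t$, since $|f'(x)| = t$) immediately gives that $Z$ is $(t/\sqrt{N})$-Lipschitz in HS norm.

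Next, I would invoke Borell's inequality for the GUE measure. Under the normalization consistent with \lem{exp_gue_tr_avg} (semicircle on $[-2,2]$, i.e.\ $\mathbb{E}|G_{ij}|^2 = 1/N$), GUE is the Gaussian measure on Hermitian matrices with per-parameter variance $\Theta(1/N)$; equivalently, its Gaussian width along any HS-unit direction is $1/\sqrt{N}$. Gaussian concentration for any $L$-Lipschitz (in HS norm) functional then yields subgaussian deviation whose rate, once $L = t/\sqrt{N}$ is substituted, is at least as strong as $\exp(-Ns^2/(2t^2))$. To handle the complex-valued modulus, I would use two options: (a) observe via the reverse triangle inequality that $G \mapsto |Z(G) - \mathbb{E}Z(G)|$ is itself $(t/\sqrt{N})$-Lipschitz in HS norm, then apply one-sided Borell directly to this nonnegative real functional; or (b) decompose $e^{ixt} = \cos(xt) + i\sin(xt)$, apply concentration to the real-valued $t$-Lipschitz functionals $X(G) := \tfrac{1}{N}\tr[\cos(Gt)]$ and $Y(G) := \tfrac{1}{N}\tr[\sin(Gt)]$ separately, and combine via the union bound $\{|Z-\mathbb{E}Z| \geq s\} \subseteq \{|X-\mathbb{E}X| \geq s/\sqrt{2}\} \cup \{|Y-\mathbb{E}Y| \geq s/\sqrt{2}\}$.

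The main point of technical care is purely bookkeeping --- tracking the HS-to-Euclidean normalization (where the $1/N$ per-parameter variance enters Borell's exponent) and the leading prefactors --- to land exactly on the stated subgaussian tail. No step poses a conceptual obstacle; once the Lipschitz estimate and the GUE variance are identified, the bound is a direct application of Gaussian concentration to a Lipschitz functional of a Gaussian ensemble.
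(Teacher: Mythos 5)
Your proposal is correct and follows essentially the same route as the paper's: the paper imports this lemma from \cite{chen2024efficient} and justifies it via the general Gaussian concentration inequality for $L$-Lipschitz functionals (its \lem{general_gaussian_concentration}) combined with the Lipschitz constant $t/\sqrt{N}$ for $G \mapsto \frac{1}{N}\tr[e^{iGt}]$, which is exactly the Hoffman--Wielandt-based estimate you derive. Your additional care with the complex modulus and the Hilbert--Schmidt versus per-entry normalization is sound and, if anything, lands on a tail at least as strong as the one claimed.
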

More generally, Gaussian vectors have the following concentration property \cite{boucheron2003concentration}:
\begin{lemma}[Gaussian concentration inequality]\label{lem:general_gaussian_concentration}
Let $\bm g=\left(g_i\right)$ where $g_i \sim \mathcal{N}(0,1)$.
$$
\operatorname{Pr}[|f(\bm g)-\mathbb{E} f(\bm g)| \geq t] \leq e^{-t^2 / 2 L^2}, \text { $f$ is L-lipshitz }
$$
\end{lemma}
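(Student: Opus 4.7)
The plan is to prove the bound via the classical Herbst argument, which upgrades the Gaussian log-Sobolev inequality (LSI) to a sub-Gaussian tail bound. The key input is that the standard Gaussian measure $\mathcal{N}(0,I_n)$ on $\mathbb{R}^n$ satisfies an LSI with constant $2$: for every smooth $\phi>0$,
\begin{equation*}
    \mathrm{Ent}[\phi^2] \;:=\; \mathbb{E}[\phi^2 \log \phi^2] - \mathbb{E}[\phi^2]\log \mathbb{E}[\phi^2] \;\leq\; 2\,\mathbb{E}\bigl[\|\nabla \phi\|^2\bigr].
\end{equation*}
I would assume without loss of generality that $f$ is smooth, since a general $L$-Lipschitz function can be approximated by convolving with a small Gaussian mollifier, preserving the Lipschitz constant and passing the bound to the limit by dominated convergence.

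The first step is to apply the LSI to $\phi = e^{\lambda f / 2}$ for $\lambda > 0$. Then $\|\nabla \phi\|^2 = \tfrac{\lambda^2}{4}\,\phi^2\,\|\nabla f\|^2 \leq \tfrac{\lambda^2 L^2}{4}\,\phi^2$ by the Lipschitz hypothesis. Writing $Z(\lambda) := \mathbb{E}[e^{\lambda f}]$, the LSI rearranges to
\begin{equation*}
    \lambda Z'(\lambda) - Z(\lambda)\log Z(\lambda) \;\leq\; \tfrac{1}{2}\lambda^2 L^2\, Z(\lambda),
\end{equation*}
which is precisely the statement that $\frac{d}{d\lambda}\!\bigl(\tfrac{1}{\lambda}\log Z(\lambda)\bigr) \leq L^2/2$. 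Integrating on $[0,\lambda]$, using $\lim_{\lambda \to 0^+}\lambda^{-1}\log Z(\lambda) = \mathbb{E}[f]$ by L'H\^{o}pital, gives the sub-Gaussian moment bound
\begin{equation*}
    \log \mathbb{E}\bigl[e^{\lambda(f-\mathbb{E}[f])}\bigr] \;\leq\; \tfrac{1}{2}\lambda^2 L^2.
\end{equation*}

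The final step is a standard Chernoff bound: for any $t>0$,
\begin{equation*}
    \Pr\bigl[f(\bm g) - \mathbb{E} f(\bm g) \geq t\bigr] \;\leq\; e^{-\lambda t}\,\mathbb{E}\bigl[e^{\lambda(f-\mathbb{E}[f])}\bigr] \;\leq\; \exp\!\bigl(\tfrac{1}{2}\lambda^2 L^2 - \lambda t\bigr),
\end{equation*}
optimized at $\lambda = t/L^2$ to yield $e^{-t^2/(2L^2)}$. Applying the same argument to $-f$, which is also $L$-Lipschitz, and taking a union bound delivers the two-sided bound (up to a factor of $2$ on the right, which I suspect is implicit or dropped for brevity in the stated form).

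The individual steps are textbook, so the only real \emph{obstacle} is handling the Lipschitz-but-not-smooth regularity of $f$, since the LSI as stated requires a pointwise gradient; a mollification argument, together with the fact that Gaussian tails make the relevant expectations finite, resolves this. An alternative route avoiding the LSI is a direct Gaussian interpolation argument along the Ornstein--Uhlenbeck semigroup, and an even sharper path is via Gaussian isoperimetry (Sudakov--Tsirelson / Borell), which yields the optimal constant. Either route reproduces the stated bound, but I would present Herbst's derivation for its transparency and because it is exactly the proof underlying the reference \cite{boucheron2003concentration} cited for the lemma.
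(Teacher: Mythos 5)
The paper does not prove this lemma at all; it is imported as a known fact with a citation to \cite{boucheron2003concentration}, so there is no in-paper proof to compare against. Your Herbst/log-Sobolev derivation is correct and is precisely the standard argument underlying that reference, and you are right to flag the only genuine discrepancy: the one-sided Tsirelson--Ibragimov--Sudakov bound gives $e^{-t^2/(2L^2)}$, so the two-sided statement with the absolute value, as written in the paper, should carry a factor of $2$ on the right-hand side (or else be stated one-sidedly).
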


The following theorem from \cite{chen2024efficient} concerns using small trace moments to prove design quality.
\begin{theorem}[From small moments to Haar]\label{thm:moments_implies_Haar}
    Let $\bm{W}$ be a random unitary that is invariant under unitary conjugation:
    $$\bm{W} \stackrel{dist}{\sim} \bm{U} \bm{D} \bm{U}^{\dagger} \quad \text{for deterministic unitary $\bm{D}$ and Haar $\bm{U}$}.$$ Consider its first $T$ normalized trace moments $\alpha_p :=  \frac{1}{N} \tr(\bm{W}^p)$ for $1 \leq p \leq T$, and let $\vec{\alpha_T} := (\alpha_1, \dots,  \alpha_T)$.
    There is an absolute constant $C>0$ such that if $T$ is small enough, $\frac{T^5}{N} \leq \frac{C \delta}{\sqrt{\log{(4/\delta)}}}$, then 
    \begin{align*}
         \norm{\vec{\alpha_T}}_1 <  \frac{\delta}{32\cdot T^{7/2} }\quad \text{implies}\quad \abs{\mathbb{E} \mathcal{A}(\bm{W}) - \mathbb{E} \mathcal{A}(\bm{U}) } < \delta.
    \end{align*}
    That is, small trace moments implies that $\bm{W}$ is a $\delta$-approximate $T$-query unitary design.
\end{theorem}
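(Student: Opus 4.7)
The plan is to convert the small-trace-moment hypothesis into a trace-norm bound between the $T$-fold moment operator of $\bm W$ and that of Haar $\bm U$, then invoke the standard reduction from moment-operator closeness to $T$-query indistinguishability. The key lever is the conjugation invariance of $\bm W$: writing $\bm W = \bm U \bm D \bm U^{\dagger}$ with $\bm U$ Haar and $\bm D$ fixed, the $T$-fold moment operator
\begin{align*}
M_T(\bm W) \;=\; \mathbb{E}_{\bm U}\!\left[(\bm U\otimes \bm U^{*})^{\otimes T}\,(\bm D\otimes \bm D^{*})^{\otimes T}\,(\bm U^{\dagger}\otimes \bm U^{\top})^{\otimes T}\right]
\end{align*}
is a Haar twirl of a deterministic operator depending only on $\bm D$; note also that $\alpha_p = \tr(\bm W^p)/N = \tr(\bm D^p)/N$ by trace cyclicity, so the trace moments really are data of $\bm D$.

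By Schur--Weyl/Weingarten calculus for the mixed $U(N)$-action on $\mathbb{C}^N \otimes (\mathbb{C}^N)^{*}$, this twirl lies in the commutant spanned by ``walled'' permutation operators $\{R_{\sigma,\tau}\}_{\sigma,\tau\in S_T}$, with coefficients that are Weingarten-weighted monomials in the $\alpha_p$. Concretely each coefficient takes the form $\mathrm{Wg}(\pi,N)\prod_p \alpha_p^{c_p(\sigma,\tau)}$, where the exponents $c_p$ record how many size-$p$ cycles in the relevant cycle structure pair $\bm D$ with $\bm D$ rather than with $\bm D^{*}$ (uncrossed vs.\ crossed pairings). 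I would split the monomials into two groups: the ``crossed'' ones, where every cycle pairs $\bm D$ with $\bm D^{*}$ so that $\prod_p \alpha_p^{c_p}=1$, reassemble exactly into $M_T(\bm U_{\text{Haar}})$ and therefore cancel in the subtraction; the ``bad'' group contains at least one uncrossed cycle and hence picks up at least one factor $|\alpha_p|\leq \|\vec{\alpha_T}\|_1$.

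Using the standard Weingarten estimate $|\mathrm{Wg}(\pi,N)|\leq N^{-T-|\pi|}(1+O(T^2/N))$, the hypothesis $T^5/N \leq C\delta/\sqrt{\log(4/\delta)}$ guarantees that sub-leading Weingarten corrections are harmless, while a Stirling-type count of bad permutation pairs $(\sigma,\tau)$ collapses the combinatorial sum into a polynomial prefactor. Putting these ingredients together would give a bound of the form
\begin{align*}
\Norm{M_T(\bm W) - M_T(\bm U_{\text{Haar}})}_1 \;\lesssim\; T^{7/2}\,\|\vec{\alpha_T}\|_1,
\end{align*}
which is strictly below $\delta$ under the stated hypothesis. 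The usual $T$-query hybrid argument then finishes: since $\mathcal{A}$ queries its input unitary only $T$ times, the distinguishing advantage is bounded by the diamond norm distance between the $T$-fold moment super-operators, which in turn is controlled by $\Norm{M_T(\bm W)-M_T(\bm U_{\text{Haar}})}_1$.

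The main obstacle I expect is the Weingarten bookkeeping in the middle step, on two fronts: (i) verifying that the ``crossed'' monomials truly reassemble into $M_T(\bm U_{\text{Haar}})$ cleanly, so that the subtraction removes the $O(1)$ leading piece without residue; and (ii) collating the $O(T^2/N)$ sub-leading Weingarten expansion with the combinatorial count of bad cycle types to land on exactly the advertised $T^{7/2}$ prefactor (rather than a worse polynomial, which would weaken the required $T^5/N$ threshold). A systematic way to make this accounting uniform is to index contributions by the multi-index $\vec c=(c_1,\ldots,c_T)$, bound each $\prod_p |\alpha_p|^{c_p}$ by $\|\vec{\alpha_T}\|_1$ using H\"older, and handle the permutation count and Weingarten weights separately.
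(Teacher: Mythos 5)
You should first note that the paper contains no proof of this statement to compare against: \autoref{thm:moments_implies_Haar} is imported verbatim from \cite{chen2024efficient}, and the surrounding text explicitly defers all proofs in that appendix to that reference. Your proposal therefore has to stand on its own, and measured that way it is a reasonable outline of the standard strategy (conjugation invariance reduces everything to a mixed Haar twirl of $(\bm{D}\otimes\bm{D}^{*})^{\otimes T}$; Weingarten calculus turns the coefficients into monomials in the $\alpha_p$; fully ``crossed'' pairings, where every cycle alternates $\bm{D}$ with $\bm{D}^{\dagger}$ and contributes only factors of $\tr(\id)=N$, should carry the Haar answer; everything else picks up at least one $\alpha_p$), but the two steps you defer are exactly where the theorem's content lives, and as sketched they do not go through.

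First, the combinatorics: the expansion runs over order-$(T!)^2$ permutation pairs (more once you track which cycles pair $\bm{D}$ with $\bm{D}$ versus $\bm{D}^{*}$), and each ``bad'' term is suppressed only by a single factor of $\norm{\vec{\alpha_T}}_1$ together with Weingarten weights of size roughly $N^{-T-\abs{\pi}}$. A ``Stirling-type count'' does not collapse this to a $T^{7/2}$ prefactor; the naive bound grows factorially in $T$, which would force $\norm{\vec{\alpha_T}}_1\lesssim \delta/T!$ and ruin the stated parameters. The cited proof needs the large-$N$ polynomial method to resum the Weingarten series, and that resummation is also where the hypothesis $T^5/N\leq C\delta/\sqrt{\log(4/\delta)}$ is actually consumed --- your sketch never uses that condition in an essential way, which is a sign the hard step is missing. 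Second, the reduction to algorithms: the theorem (and its use in \lem{singleG_moments}) concerns general, \emph{adaptive} $T$-query distinguishers $\mathcal{A}$. Trace-norm closeness of the moment operator $\mathbb{E}[\bm{W}^{\otimes T}\otimes\bar{\bm{W}}^{\otimes T}]$ does not bound adaptive advantage by ``the usual hybrid argument'': hybrids over sequential queries swap $\bm{W}$ for Haar one query at a time, and each hybrid step is not controlled by a first-moment quantity; even for parallel queries the diamond norm of the twirling channels can exceed the trace norm of the Choi-type moment operator by dimension factors. The correct route is to write the acceptance probability of an adaptive $T$-query circuit as a fixed, bounded linear functional of the balanced degree-$(T,T)$ moment tensor and control that tensor's deviation from Haar in a norm adapted to such functionals. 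Without these two ingredients the proposal is an outline of the right strategy rather than a proof.
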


Finally, we will prove the following theorem about the design quality of a single exponentiated Gaussian. 

\begin{lemma}\label{lem:singleG_moments}
    Let $\bm{G}\in \mathrm{GL}(N)$ be a GUE random matrix. Then the unitary ensemble
    \begin{align}
        \bm{W} := e^{iG t}
    \end{align}
    is an adaptive $\epsilon$-approximate $k$-design where the error is a small constant $1>\epsilon=\mathcal{O}(1)$ for $N = \Omega(k^{47/3}) \approx \Omega(k^{15.67})$, $t=\Omega(k^{7/3})$.
\end{lemma}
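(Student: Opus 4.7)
The plan is to apply \thm{moments_implies_Haar} to $\bm{W}:=e^{i\bm{G}t}$, which reduces establishing a $k$-query design to controlling the first $k$ normalized trace moments $\alpha_p := \tr(\bm{W}^p)/N = \tr(e^{i\bm{G}pt})/N$. Since GUE is unitarily invariant, $\bm{W} = \bm{U}\bm{D}\bm{U}^\dagger$ with $\bm{U}$ Haar and $\bm{D} = e^{i\bm{D}_G t}$ determined by the (random) spectrum of $\bm{G}$, so the theorem applies conditionally on $\bm{D}$. The strategy is to exhibit a spectrum event of probability $1-o(1)$ on which $\sum_{p=1}^{k}|\alpha_p| < \delta/(32 k^{7/2})$ for a small absolute constant $\delta$, then unconditionalize to get an overall $\epsilon=\delta+o(1)=\mathcal{O}(1)$ adaptive design.

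For the expectations, \lem{exp_gue_tr_avg} applied to $e^{i\bm{G}(pt)}$ combined with \lem{j1_bounds} gives $|\mathbb{E}\alpha_p| \leq 1/(\sqrt{2}(pt)^{3/2}) + K_0\, pt/N$, and summing,
\begin{equation*}
    \sum_{p=1}^{k}|\mathbb{E}\alpha_p| \;\leq\; \frac{\zeta(3/2)}{t^{3/2}} \;+\; \frac{K_0\, t\, k^2}{N}.
\end{equation*}
For each piece to fit inside the budget $\Theta(1/k^{7/2})$, I need $t = \Omega(k^{7/3})$ and $N = \Omega(t\, k^{11/2})$. For deviations from the mean, I apply \lem{single_exp_gue_concentration} at each $p\leq k$ with a union bound: $\Pr[\exists p:|\alpha_p-\mathbb{E}\alpha_p|\geq s]\leq k\exp(-Ns^2/(2k^2 t^2))$. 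Choosing $s \asymp 1/k^{9/2}$ so that the aggregate deviation $ks$ is again within the trace-moment budget, and asking for a constant-order failure probability, forces $N \gtrsim k^{11} t^2 \log k$. Plugging in $t = k^{7/3}$ collapses the finite-$N$ bias constraint and the concentration constraint into the single requirement $N = \Omega(k^{47/3})$.

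On the good spectrum event, \thm{moments_implies_Haar} yields $|\mathbb{E}_{\bm{U}}\mathcal{A}(\bm{U}\bm{D}\bm{U}^\dagger)-\mathbb{E}_{\bm{U}}\mathcal{A}(\bm{U})|<\delta$ for every bounded adaptive distinguisher $\mathcal{A}$; averaging over $\bm{D}$ and bounding the bad-event contribution trivially (since $\mathcal{A}$ is bounded) gives an unconditional $\mathcal{O}(1)$ design bound. The precondition $T^5/N \leq C\delta/\sqrt{\log(4/\delta)}$ of \thm{moments_implies_Haar} is automatic since $N = \Omega(k^{47/3}) \gg k^5$.

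The main obstacle is the interplay between the aggressive budget $\Theta(1/k^{7/2})$ demanded by \thm{moments_implies_Haar} and the fact that Gaussian concentration for $\tr(e^{i\bm{G}t})/N$ carries a variance proxy $t^2/N$ rather than $1/N$: the Lipschitz constant of $H\mapsto \tr(e^{iHt})/N$ grows with $t$, which inflates the required $N$ by an extra factor of $t^2 = k^{14/3}$ on top of the already-mandatory $k^{11}$ coming from $1/s^2$ and the $k^2$ arising from taking the $p=k$ worst case in the variance proxy. Balancing these three competing pressures --- Bessel-tail decay pushing $t$ up, finite-$N$ trace bias pushing $N$ up linearly in $t$, and Gaussian concentration pushing $N$ up quadratically in $t$ --- is precisely what produces the exponent $47/3$ in the threshold on $N$.
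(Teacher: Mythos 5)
Your proposal follows the paper's proof essentially verbatim: you bound $\sum_{p\le k}|\mathbb{E}\alpha_p|$ via the Bessel asymptotics of \lem{exp_gue_tr_avg} together with \lem{j1_bounds}, control fluctuations with the Gaussian concentration of \lem{single_exp_gue_concentration}, and feed the resulting trace-moment bound into \thm{moments_implies_Haar}, arriving at the same exponents $t=\Omega(k^{7/3})$ and $N=\Omega(k^{47/3})$ for the same reasons. The only (minor) deviation is that you union-bound the concentration over the $k$ moments individually, which costs an extra $\log k$ in the requirement on $N$, whereas the paper applies Gaussian concentration once to the single Lipschitz function $\|\vec{\alpha}_k\|_1$ (with Lipschitz constant $k^2t/\sqrt{N}$) and avoids that factor; your explicit conditioning on the spectrum $\bm{D}$ before invoking \thm{moments_implies_Haar} is a welcome clarification that the paper leaves implicit.
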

\begin{proof}
    Let $\vec{\alpha}_k$ be the vector consisting of the first $k$ normalized trace moments of $\bm{W}$: $\vec{\alpha} = (\alpha_1,\dots,\alpha_k)$, where for each $p \in \{1,\dots,k\}$, $\alpha_p = \frac{1}{N}\tr\bm{W}^p$. The goal is to bound $\norm{\vec{\alpha}_k}_1$ with high probability, such that \autoref{thm:moments_implies_Haar} applies, and bound the failure probability of $\norm{\vec{\alpha}_k}_1$ exceeding that bound. 
    
    We can bound $\norm{\vec{\alpha}_k}_1$ with high probability using the Gaussian concentration property of \autoref{lem:general_gaussian_concentration}. To bound the Lipschitz constant $L(\norm{\vec{\alpha}_k}_1)$ of $\norm{\vec{\alpha}_k}_1$, note that $L(\alpha_p) \leq \frac{pt}{\sqrt{N}} \leq \frac{kt}{\sqrt{N}}$ for $1 \leq p \leq k$, so $L(\norm{\vec{\alpha}_k}_1) \leq \frac{k^2t}{\sqrt{N}}$.

    Then by the Gaussian concentration property,
    \begin{equation}
        \Pr[|\norm{\vec{\alpha}_k}_1 - \mathbb{E}\norm{\vec{\alpha}_k}_1 | \geq s ] \leq \exp(-\frac{Ns^2}{2k^4t^2}),
    \end{equation}
    which, by choosing the value of $s$ appropriately, can be re-written as 
    \begin{equation}\label{eq:sg_one_norm_concentration}
        \Pr[|\norm{\vec{\alpha}_k}_1 - \mathbb{E}\norm{\vec{\alpha}_k}_1 | \geq \frac{k^2t \sqrt{2\log(\frac{2}{\epsilon})}}{\sqrt{N}}] \leq \frac{\epsilon}{2}
    \end{equation}
    for $1 > \epsilon = \mathcal{O}(1)$.

    Now it follows that with probability (at least) $1-\frac{\epsilon}{2}$, which is the high-probability case, $\norm{\vec{\alpha}_k}_1$ is bounded as follows:
    \begin{equation}\label{eq:one_norm_bound}
        \norm{\vec{\alpha}_k}_1 \leq \mathbb{E}\norm{\vec{\alpha}_k}_1 + \frac{k^2t \sqrt{2\log(\frac{2}{\epsilon})}}{\sqrt{N}}.
    \end{equation}

    We can bound the right hand side by bounding $\mathbb{E}\norm{\vec{\alpha}_k}_1$. To do so, we will need the Gaussian concentration property for each $\alpha_p$:
    \begin{equation}
        \Pr[|\alpha_p - \mathbb{E}\alpha_p | \geq \gamma ] \leq \exp(-\frac{N\gamma^2}{2p^2t^2}),
    \end{equation}

    Using this, we can bound $\mathbb{E}|\alpha_p|$ for each $p$:
    \begin{align}
    \mathbb{E}|\alpha_p| &= \mathbb{E}[\abs{\alpha_p} | |\alpha_p - \mathbb{E}\alpha_p| > \gamma]\cdot P[|\alpha_p - \mathbb{E}\alpha_p| > \gamma] + \mathbb{E}[\abs{\alpha_p} | |\alpha_p - \mathbb{E}\alpha_p| \leq \gamma]\cdot P[|\alpha_p - \mathbb{E}\alpha_p| \leq \gamma]\\
        &\leq \exp(-\frac{N\gamma^2}{2p^2t^2}) + (|\mathbb{E}\alpha_p| + \gamma)(1-\exp(-\frac{N\gamma^2}{2p^2t^2}))\\
        &\leq \exp(-\frac{N\gamma^2}{2p^2t^2}) + \left(\abs{\frac{J_1(2pt)}{pt}} + \frac{K_0pt}{N} + \gamma\right)\left(1-\exp(-\frac{N\gamma^2}{2p^2t^2})\right)
    \end{align}
    where the second inequality follows from \autoref{lem:exp_gue_tr_avg}, $J_1$ is the spherical Bessel function and $K_0$ is some $\mathcal{O}(1)$ constant. 

    Now we can bound $\mathbb{E}\norm{\vec{\alpha}_k}_1$:
    \begin{align}
        \mathbb{E}\norm{\vec{\alpha}_k}_1 &\leq \sum_{p=1}^k \left(\abs{\frac{J_1(2pt)}{pt}} + \frac{K_0pt}{N} + \gamma\right) + \text{subleading}\\
        &\leq \left(\frac{c}{\sqrt{2t^3}} + \frac{K_0k^2t}{N} + k\gamma \right) + \text{subleading}
    \end{align}
    where $c$ is an $\mathcal{O}(1)$ constant. Inserting this into the upper bound in \autoref{eq:one_norm_bound}, 
    \begin{equation}
        \norm{\vec{\alpha}_k}_1 \leq \frac{c}{\sqrt{2t^3}} + \frac{K_0k^2t}{N} + k\gamma + \frac{k^2t \sqrt{2\log(\frac{2}{\epsilon})}}{\sqrt{N}} + \text{subleading}.
    \end{equation}

    To obtain the desired condition $\norm{\vec{\alpha}_k}_1 \leq \frac{\epsilon/2}{32k^{7/2}}$, it suffices to have $t \geq \Omega(k^{7/3})$, $N > \Omega(k^{47/3})$ and $\gamma \leq \mathcal{O}(k^{-9/2})$. By \autoref{thm:moments_implies_Haar}, this implies that the probability of distinguishing $\bm{W}$ from a Haar random unitary by any $k$-query decision algorithm is bounded by $\frac{\epsilon}{2}$, that is, $\abs{\mathbb{E} \mathcal{A}(\bm{W}) - \mathbb{E} \mathcal{A}(\bm{U})} \leq \frac{\epsilon}{2}$.

     Finally, by \autoref{eq:sg_one_norm_concentration}, the low-probability case in which $\norm{\vec{\alpha}_k}_1 > \frac{\epsilon/2}{32k^{7/2}}$ occurs with probability at most $\frac{\epsilon}{2}$. So the total probability of distinguishing $\bm{W}$ from Haar by any $k$-query decision algorithm is bounded by $\epsilon$. Setting $1 > \epsilon = \mathcal{O}(1)$, we obtain the stated result. 
\end{proof}

This is an adaptive, additive error unitary design analysis for $e^{iGt}$; it is not a tight bound on error but is the best analysis we have currently for adaptive additive error. This then translates into states formed by such evolutions straightforwardly.


\section{Single Gaussian Hamiltonian projected ensemble}\label{app:single_GUE}

In this section we prove that the projected ensemble of Gaussian Hamiltonian evolution (in other words, our exponentiated GUE) is a $k$-design at a series of discrete times, some of which are quite early, in the thermodynamic limit of the measured bath size. This type of result in the thermodynamic limit is very akin to the result from \cite{ho2022exact} on dual unitary circuits and \cite{ippoliti2022solvable} on random quantum circuits.

It holds that $F^{(k)}\geq F^{(k)}_{Haar}$. Matching the Haar frame potential exactly is a sufficient and necessary condition for an exact $k$-design. This is the method we will take for our projected ensemble below.

\begin{theorem}\label{thm:projected_designs_GUE}
    Let $G$ be an $N_{AB}$-dimensional random matrix drawn from the GUE that acts on $\mathcal{H}_{A}\otimes\mathcal{H}_B$. Then consider the projected ensemble formed from time evolution under this Gaussian Hamiltonian on some initial state $\ket\phi$ and measurement in the computational basis on $n_B$ qubits:
    \begin{align}
        \mathcal{E} = \{q_z, \ket\Phi_z\},
    \end{align}
    where 
    \begin{align}
        &\ket{\Phi_z} = \ket{\tilde{\Phi}_z}/\sqrt{q_z},\\
        &\ket{\tilde{\Phi}_z} = \left(\mathbb{I}_A\otimes \bra{z}_B \right)\;e^{-iGt}\ket\phi_{AB},\\
        &q_z = \braket{\tilde{\Phi}_z},\\
        &z\in\{0,1\}^{n_B}.
    \end{align}
    Then the ensemble $\mathcal{E}$ is an exact quantum state design in the thermodynamic limit,
    \begin{align}
        \lim_{N_B\rightarrow \infty} \rho^{(k)}_{\mathcal{E}}=\rho^{(k)}_{Haar} \;\;\;\; \forall \,k\in\mathbb{R}_+,
    \end{align}
    for any time $t$ that satisfies $J_1(2t)/t = 0$.
\end{theorem}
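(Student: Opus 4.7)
The plan is to use the method of frame potentials: since $F^{(k)} \geq F^{(k)}_{Haar}$ always, with equality if and only if the ensemble is an exact state $k$-design, it suffices to show that $\lim_{N_B \to \infty} \E_G F^{(k)}(\mathcal{E}) = F^{(k)}_{Haar}$. Starting from $\E_G F^{(k)} = \E_G \sum_{z_1,z_2} q_{z_1} q_{z_2} |\langle \Phi_{z_1}|\Phi_{z_2}\rangle|^{2k}$, I would first invoke the replica trick (Assumption 1) to absorb the marginal-probability denominators in $\ket{\Phi_z} = \ket{\tilde\Phi_z}/\sqrt{q_z}$, yielding the $R=2$ form
\begin{align}
\E_G F^{(k)} = \E_G \Tr\!\bigl((e^{-iGt}\ket{\phi}\!\bra{\phi} e^{iGt})_{AB}^{\otimes R}\, \mathcal{Q}_B\bigr),
\end{align}
where $\mathcal{Q}_B$ is the bookkeeping operator on $B$ that implements the frame-potential contractions across the two copies of $z_1, z_2$.

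Next, decompose $G = U D U^\dagger$ with Haar-distributed eigenbasis $U$ and diagonal spectrum $D = e^{-it\Lambda}$, then average over $U$ first via the Weingarten calculus, following \cite{chen2024efficient}. This produces a sum over permutations $\sigma \in S_{2R}$ of the schematic form
\begin{align}
\E_G F^{(k)} = \frac{1}{N_{AB}^{2R}} \sum_{\sigma \in S_{2R}} \E_D \Tr(\rho_{D,D^\dagger} \sigma)\; \Tr_{AB}\!\bigl(C(\sigma, \ket{\phi}\!\bra{\phi}^{\otimes R})_{AB}\, \mathcal{Q}_B\bigr),
\end{align}
with subleading Weingarten corrections controlled by inverse powers of $N_{AB}$. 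The factor $\E_D \Tr(\rho_{D,D^\dagger}\sigma)$ decomposes into products of normalized trace moments of $D$ and $D^\dagger$ indexed by the cycles of $\sigma$. Two filters now prune the sum: first, the hypothesis $J_1(2t)/t = 0$ together with \lem{exp_gue_tr_avg} sends the first normalized trace moment to zero as $N_{AB} \to \infty$, so any $\sigma$ whose cycle structure contains an unpaired $D$ or $D^\dagger$ factor is annihilated; only permutations that fully pair $D$ with $D^\dagger$ (or among themselves) contribute. Second, on the dominant $z_1 \neq z_2$ sector, the contraction pattern imposed by $\mathcal{Q}_B$ is such that permutations mixing the $\phi$-legs with the $z$-legs yield traces too small to survive the $N_B \to \infty$ limit. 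The surviving permutations factor as $\sigma'\tau\pi_1\pi_2$ with $\sigma'$ acting on the $\phi$ factors, $\tau$ the transposition swapping the inner $k$-copy blocks of $z_1$ and $z_2$, and $\pi_1, \pi_2 \in S_{R/2}$ on the outer copies; imposing the pairing condition fixes $\sigma'$, and summing the remaining degrees of freedom collapses to exactly $F^{(k)}_{Haar}$.

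The main obstacle is the combinatorial bookkeeping of which $\sigma \in S_{2R}$ survive both filters simultaneously, and in particular keeping track of which subleading contributions scale with $N_A$ versus $N_{AB}$ so that no leading-order piece is inadvertently discarded; the apparent miracle is that $R=2$ keeps the Weingarten expansion tractable even though $k$ is arbitrary. A secondary subtlety is that the replica trick is taken as a working assumption rather than a proven analytic continuation, so the conclusion is formally conditional on it; and since we work on average in $G$, one must separately note that in the thermodynamic limit each GUE sample concentrates perfectly around its mean (via \lem{single_exp_gue_concentration} applied to the relevant trace functionals), so the exact-design property holds typically and not merely in expectation.
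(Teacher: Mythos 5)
Your proposal follows essentially the same route as the paper's proof: the frame-potential criterion, the replica trick at $R=2$, the $UDU^\dagger$ decomposition with Weingarten averaging over the Haar eigenbasis, the two pruning steps (vanishing first trace moment at the Bessel zeros, and suppression of $\phi$--$z$ mixing and of the $z_1=z_2$ sector), the surviving $\sigma'\tau\pi_1\pi_2$ permutations collapsing to $F^{(k)}_{Haar}$, and the concentration argument promoting the statement from the average to every GUE sample. The only blemish is notational ($G=UDU^\dagger$ with $D=e^{-it\Lambda}$ should read $e^{-iGt}=UDU^\dagger$), which does not affect the argument.
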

\begin{proof}
    We would like to show that at these times $t$, the frame potential for our projected Gaussian ensemble is exactly that of Haar in the thermodynamic limit of the measurement subsystem for every instance of $G$. Instead of calculating the frame potential directly, we will calculate
    \begin{align}
        \lim_{N_B\rightarrow\infty}\Delta^{(k)}_{rms} = \lim_{N_B\rightarrow\infty}\sqrt{\mathbb{E}_{G}\left[\left(\Delta^{(k)}\right)^2\right]} 
    \end{align}
    Since $\Delta^{(k)}_{rms}\geq \mathbb{E}_{G}\Delta^{(k)},$ if $\Delta^{(k)}_{rms}=0$ then $\mathbb{E}_{G}\Delta^{(k)}=0$. Moreover, by Markov's inequality, if $\Delta^{(k)}_{rms}=0$ then the fluctuation also goes to zero, so every instance of $G$ will give an exact design. $\lim_{N_B\rightarrow\infty}\Delta^{(k)}_{rms}=0$ occurs if and only if
    $$\lim_{N_B\rightarrow\infty}\mathbb{E}_{G}F^{(k)} = F^{(k)}_{Haar}.$$
    We will now show this is true.
    The frame potential for the projected ensemble can be written as: 
    \begin{align}
        \mathbb{E}_{G}F^{(k)} = \mathbb{E}_{G}\sum_{z_1, z_2} q_{z_1} q_{z_2} \abs{\langle\Phi_{z_1}|\Phi_{z_2}\rangle}^{2k}.
    \end{align}
    We substitute the unnormalized state appropriately and rewrite this using the replica trick, taking $n=1-k.$ 
    \begin{align}
        \mathbb{E}_{G}F^{(k)} &= \mathbb{E}_{G}\sum_{z_1, z_2} \abs{\left\langle\tilde{\Phi}_{z_1}|\tilde{\Phi}_{z_2}\right\rangle}^{2k} \braket{\tilde{\Phi}_{z_1}}^{1-k}  \braket{\tilde{\Phi}_{z_2}}^{1-k}\\
        &= \mathbb{E}_{G}\sum_{z_1, z_2} \abs{\left\langle\tilde{\Phi}_{z_1}|\tilde{\Phi}_{z_2}\right\rangle}^{2k} \braket{\tilde{\Phi}_{z_1}}^n  \braket{\tilde{\Phi}_{z_2}}^n\\
        &= \mathbb{E}_{G}\tr\left(\left(e^{-iGt}|\phi\rangle\langle\phi|e^{iGt}\right)^{\otimes R}_{AB} \mathcal{Q}_B\right),
    \end{align}
    where $R=2(n+k)$ and
    \begin{align}
        \mathcal{Q}_B = \sum_{z_1,z_2} \left|z_1^{\otimes n}z_1^{\otimes k}z_2^{\otimes k}z_2^{\otimes n}\right\rangle \left\langle z_1^{\otimes n}z_2^{\otimes k}z_1^{\otimes k}z_2^{\otimes n}\right|.
    \end{align}
    Let us analyze the state $\left(e^{-iGt}|\phi\rangle\langle\phi|e^{iGt}\right)^{\otimes R}_{AB}$. We can write $e^{-iGt} = UDU^\dagger$ due to the invariance of the GUE under Haar conjugation, and we break up the expectation value to average over $U$ and $D.$
    We can then use the Weingarten calculus to compute. Recall that the action of the Haar twirl can be written as 
    \begin{align}
        \mathcal{N}(\rho):=\mathbb{E}_{U}U^{\otimes m}\rho(U^\dagger)^{\otimes m} &= \sum_{\sigma,\tau\in S_{m}} \tr(\rho\sigma^{-1})Wg(\tau\sigma^{-1},N)\tau\\
        &\rightarrow \frac{1}{N^m}\sum_{\sigma\in S_{m}} \tr(\rho\sigma^{-1})\sigma.
    \end{align}
    Here the permutations $\sigma$ and $\tau$ act on the $k$ copies; if $\rho\in\mathcal{H}^{\otimes m}_{AB}$, the $\sigma$ and $\tau$ act on $A$ and $B$ each separately. $Wg$ is the Weingarten function which is some rational function in the dimension $N$ of thhe Hilbert space on which $U$ acts. The last approximation comes from Lemma 1 proved in \cite{schuster2025random}; applied in the limit $N\rightarrow\infty$, the expression is exact for the Haar twirl.

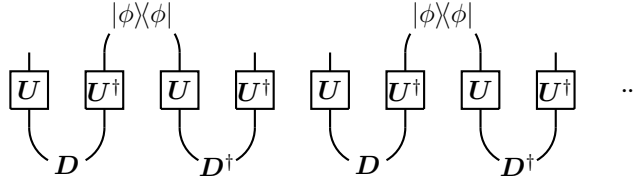
\begin{figure}
    \centering
    \begin{tikzpicture}[scale=0.5]
    
    \draw[thick] (-7.5,-0.5) -- (-7.5,0.5) -- (-6.5,0.5) -- (-6.5,-0.5) -- (-7.5,-0.5);
    \draw[thick] (-5.5,-0.5) -- (-5.5,0.5) -- (-4.5,0.5) -- (-4.5,-0.5) -- (-5.5,-0.5);
    \draw[thick] (-3.5,-0.5) -- (-3.5,0.5) -- (-2.5,0.5) -- (-2.5,-0.5) -- (-3.5,-0.5);
    \draw[thick] (-1.5,-0.5) -- (-1.5,0.5) -- (-0.5,0.5) -- (-0.5,-0.5) -- (-1.5,-0.5);
    \draw[thick] (1.5,-0.5) -- (1.5,0.5) -- (0.5,0.5) -- (0.5,-0.5) -- (1.5,-0.5);
    \draw[thick] (3.5,-0.5) -- (3.5,0.5) -- (2.5,0.5) -- (2.5,-0.5) -- (3.5,-0.5);
    \draw[thick] (5.5,-0.5) -- (5.5,0.5) -- (4.5,0.5) -- (4.5,-0.5) -- (5.5,-0.5);
    \draw[thick] (7.5,-0.5) -- (7.5,0.5) -- (6.5,0.5) -- (6.5,-0.5) -- (7.5,-0.5);
    \node at (-7,0) {$\bm{U}$};
    \node at (-5,0) {$\bm{U}^\dagger$};    
    \node at (-3,0) {$\bm{U}$};
    \node at (-1,0) {$\bm{U}^\dagger$};
    \node at (1,0) {$\bm{U}$};
    \node at (3,0) {$\bm{U}^\dagger$};
    \node at (5,0) {$\bm{U}$};
    \node at (7,0) {$\bm{U}^\dagger$};
    
    \draw[thick] (-3,-0.5) -- (-3,-1);
    \draw[thick] (-3,0.5) -- (-3,1);
    \draw[thick] (3,-0.5) -- (3,-1);
    \draw[thick] (3,0.5) -- (3,1);
    \draw[thick] (-1,-0.5) -- (-1,-1);
    \draw[thick] (-1,0.5) -- (-1,1);
    \draw[thick] (1,-0.5) -- (1,-1);
    \draw[thick] (1,0.5) -- (1,1);
    \draw[thick] (-5,-0.5) -- (-5,-1);
    \draw[thick] (-5,0.5) -- (-5,1);
    \draw[thick] (5,-0.5) -- (5,-1);
    \draw[thick] (5,0.5) -- (5,1);
    \draw[thick] (-7,-0.5) -- (-7,-1);
    \draw[thick] (-7,0.5) -- (-7,1);
    \draw[thick] (7,-0.5) -- (7,-1);
    \draw[thick] (7,0.5) -- (7,1);

    \draw[thick] (-7,-1) arc (180:240:1);
    \draw[thick] (-5,-1) arc (0:-60:1);
    \draw[thick] (-5,1) arc (180:150:1);
    \draw[thick] (-3,1) arc (0:30:1);
    \draw[thick] (-3,-1) arc (180:240:1);
    \draw[thick] (-1,-1) arc (0:-60:1);
    \draw[thick] (1,-1) arc (180:240:1);
    \draw[thick] (3,-1) arc (0:-60:1);
    \draw[thick] (3,1) arc (180:150:1);
    \draw[thick] (5,1) arc (0:30:1);
    \draw[thick] (5,-1) arc (180:240:1);
    \draw[thick] (7,-1) arc (0:-60:1);

    \node at (-6,-2) {$\bm{D}$};
    \node at (-4,2) {$\ketbra{\phi}$};
    \node at (-2,-2) {$\bm{D}^\dagger$};
    \node at (2,-2) {$\bm{D}$};
    \node at (4,2) {$\ketbra{\phi}$};
    \node at (6,-2) {$\bm{D}^\dagger$};

    \node at (9,0) {...};

    

    \end{tikzpicture}
    \caption{A tensor contraction or Weingarten diagram for $(UDU^\dagger\ketbra{\phi} UD^\dagger U^\dagger)^{\otimes R}$. Notably the bottom factors of $D$ and $D^\dagger$ are all contracted together via $\sigma\in S_{2R}$, and the top has empty legs where the $\mathcal{Q}_B$ and overall trace will come in.}\label{fig:wg_diagram}
    \end{figure}

    For our setup, we can let $m=2R$ and choose $\rho$ such that it alternates between $D$ and $D^\dagger$ between copies. We can use the infinite limit since we have stipulated $N_B\rightarrow\infty$. Then we can write
    \begin{align}
        \mathbb{E}_{U,D}(UDU^\dagger\ketbra{\phi} UD^\dagger U^\dagger)^{\otimes R} &= \mathbb{E}_{D}\mathcal{N}(\rho)(\ketbra{\phi}^{\otimes R})\\
        &=\mathbb{E}_{D}\frac{1}{N_{AB}^{2R}}\sum_{\sigma\in S_{2R}} \tr(\rho_{D,D^\dagger} \sigma) C(\sigma, \ketbra{\phi}^{\otimes R}),
    \end{align}
    where the action of $\mathcal{N}(\rho)$ and $C$ are consistent and appropriate sets of contractions of $\ketbra{\phi}^{\otimes R}$. See \autoref{fig:wg_diagram} for a Weingarten diagram. Recall that this whole object ultimately outputs a density matrix in $\mathcal{H}^{\otimes^R}_{AB}$, which is consistent with $\sigma\in S_{2R}$ contracting upon an $R$-copy density matrix. 
    Then we find
    \begin{align}
        \mathbb{E}_{G}F^{(k)} = \frac{1}{N_{AB}^{2R}}\sum_{\sigma\in S_{2R}} \mathbb{E}_{D}\tr(\rho_{D,D^\dagger} \sigma) \tr_{AB}\left(C(\sigma, \ketbra{\phi}^{\otimes R})_{AB}\mathcal{Q}_{B}\right).
    \end{align}
    Our goal now is to show that only a very select number of permutations $\sigma$ contribute in the $N_B\rightarrow\infty$ limit; these permutations will give us exactly the Haar frame potential. We begin by noting some facts about the first term in the above expression, $\mathbb{E}_{D}\tr(\rho_{D,D^\dagger} \sigma)$:
    \begin{itemize}
        \item For each permutation $\sigma$, this term is a product of traces of products of $D$ and $D^\dagger$, 
        $$\text{e.g.} \;\;\; \mathbb{E}_{D}\tr(D^\dagger)\tr(D^\dagger)\tr(D^2)\;\;, \mathbb{E}_{D}\tr(DD^\dagger)\tr(D)\tr(D^\dagger), \;\;\;\text{etc}.$$
        Every one of the $R$ copies of $D$'s and $R$ copies of $D^\dagger$ is accounted for in some trace term.
        \item From \lem{good_t_exists}, we know that at the points $t$ specified, as $N_B\rightarrow\infty$, $$\frac{1}{N}\mathbb{E}_{D}\tr(D)=\frac{1}{N}\mathbb{E}_{D}\tr(D^\dagger)=0.$$
        From \lem{single_exp_gue_concentration}, we find as $N_B\rightarrow\infty$ the concentration becomes exact. Together this means that for all instances of $G,$
        $$\frac{1}{N}\tr(D)=\frac{1}{N}\tr(D^\dagger)=0.$$
        As such, any term with $\tr(D)$ or $\tr(D^\dagger)$ will be zero.
        \item This implies that every nonzero trace term must have at least two matrices (2 copies of $D$, 2 copies of $D^\dagger$, or one of each). This means at most there are $R$ trace terms. Hence the contribution of this first term cannot be greater than $N_{AB}^R$.
        \item Thus, any permutation $\sigma$ that does not have an $N_{AB}^R$ contribution from $\tr_{AB}\left(C(\sigma, \ketbra{\phi}^{\otimes R})_{AB}\mathcal{Q}_{B}\right)$ will vanish as $N_B\rightarrow\infty$.
    \end{itemize}
    This last point is our condition for eliminating permutations $\sigma$. It seems demanding, but recall that at some point we will plug in $R=2.$ Thus our final expression will only sum over permutations $\sigma$ that give us $N_B^2$ or more from that second term. We now turn to analyzing the second term $\tr_{AB}\left(C(\sigma, \ketbra{\phi}^{\otimes R})_{AB}\mathcal{Q}_{B}\right)$:
    \begin{itemize}
        \item This term is a series of contractions depending on the permutation $\sigma$, and hence, they are products of trace terms of $\ketbra{\phi}_{AB}$ factors and $\mathbb{I}_A\otimes|z_i\rangle\langle z_j|_B$ factors, where $i=1,2$, $j=1,2$
        \begin{align*}
            \text{e.g.} \;\;\; &\tr\left(\ketbra{\phi}(\mathbb{I}\otimes|z_2\rangle\langle z_1|)\right)\tr\left(\ketbra{\phi}(\mathbb{I}\otimes|z_1\rangle\langle z_2|)\right),\\
            &\tr(\ketbra{\phi}^2)\tr(\mathbb{I}\otimes|z_2\rangle\langle z_2|)\tr(\mathbb{I}\otimes|z_1\rangle\langle z_1|)\;\;\;\text{etc}.
        \end{align*}
        Every one of the $R$ copies of $\ketbra{\phi}$ and $R$ copies of $\mathbb{I}\otimes|z_i\rangle\langle z_j|$ are accounted for in some trace. 
        \item There are permutations $\sigma$ that will cause all the $z$'s and all the $\ketbra{\phi}$ to be in separate traces. A trace term that consists only of $\ketbra{\phi}$ factors will evaluate to 1. A trace term that consists only of $\mathbb{I}\otimes|z_1\rangle\langle z_1|$ will evaluate to $N_A$ (and same for $z_2$). Any term with both $z_1$ and $z_2$ after simplification will vanish.
        \item There are also permutations $\sigma$ that will cause traces to mix $z$'s and $\ketbra{\phi}$. In particular, those will have terms involving $\ket{\phi_{z_1}}=\langle z_1|\phi\rangle \in\mathcal{H}_A$ or $\ket{\phi_{z_2}}=\langle z_2|\phi\rangle\in\mathcal{H}_A$ or their conjugate. Note these states are not normalized.
        
        If a permutation $\sigma$ creates a $\ket{\phi_{z_1}}$ factor, then it must also create its conjugate (and the same for $z_2$). This is because every nonzero $\bra{z_1}$ must multiply onto $\ket{z_1}$ or $\ket{\phi}$; it vanishes if it multiplies on $\ket{z_2}$. But there are the same number ($R/2$) of $\bra{z_1}$ as there are $\ket{z_1}$, so any $\bra{z_1}$ that multiplies with $\ket{\phi}$ also leaves a $\ket{z_1}$ to multiply a $\bra{\phi}.$ The factor and its conjugate, however, may be involved in different traces.

        Then there are four possible types of product terms after simplification: $\braket{\phi_{z_1}}$, $\braket{\phi_{z_2}}$, $\langle\phi_{z_2}|\phi_{z_1}\rangle$ and its conjugate. The first two evaluate to $p_{z_1}$ (and $p_{z_2}$) which are the probabilities of getting $z_1$ (and $z_2$) from $\ket\phi$. The second two always appear with each other.
        \item The $\mathcal{Q}_B$ holds a sum over $z_1,z_2$ that has $N_{B}^2$ terms total. 
    \end{itemize}
    Now we analyze the whole term including the sum in $\mathcal{Q}_B$. There are two cases: $z_1=z_2$ and $z_1\neq z_2$. 
    
    In the case where $z_1=z_2$, the sum has $N_{B}$ terms of at most order 1 in $N_B$, and the sum has only $N_{B}$ terms so taking into account the rest of the frame potential, it does not contribute enough to survive the $N_B\rightarrow\infty$ limit. 
    
    We can then focus on the case where $z_1\neq z_2$. This in turn has two types of permutations: one where the $z$'s and the $\ketbra\phi$ are in separate traces, and one where they are mixed.

    Consider the permutations that have traces that mix $z$'s and $\ketbra{\phi}$. All the unmixed traces only contribute factors of 1 and $N_A$, so we need to see if the mixed traces give any factors of $N_B$. We notice $\sum_{z_1}p_{z_1}^l \leq 1$ for $l\geq 1$, and similarly for $z_2$, so only the mixed traces of $\langle\phi_{z_2}|\phi_{z_1}\rangle$ (appearing with its conjugate), are left. But since it appears with its conjugate, the state overlaps $\langle\phi_{z_2}|\phi_{z_1}\rangle\langle\phi_{z_1}|\phi_{z_2}\rangle$ have a factor of $p_{z_1}p_{z_2}$ as well, and so
    $$\sum_{z_1,z_2}\langle\phi_{z_2}|\phi_{z_1}\rangle^l\langle\phi_{z_1}|\phi_{z_2}\rangle^l \leq\sum_{z_1,z_2} p_{z_1}^lp_{z_2}^l\leq 1.$$
    Hence the permutations that mix traces vanish as $N_B\rightarrow\infty.$

    This leaves the set of permutations where the $z$'s and the $\ketbra\phi$ are in separate traces. That means that essentially the permutation $\sigma$ can be broken down into the permutation that affects $z$'s and the permutation that affects $\ketbra\phi$. We will call the latter $\sigma'$, and for all $\sigma'$ we note the contribution from $\ketbra\phi$ will be 1. For the former, notice that for the trace with $\mathcal{Q}_B$ to not vanish since $z_1\neq z_2$, we need the central $k$ copies of $z_i$ to be swapped. This means that for the term to not vanish, the permutation that affects the $z$'s must take a form $\tau\pi_1\pi_2$ where $\tau$ is a transposition of the center $k$ copies, and $\pi_1$ (and $\pi_2$) permute the first $n+k$ (and second $n+k$) copies.
    \begin{align}
        \mathbb{E}_{G}F^{(k)} &= \frac{1}{N_{AB}^{2R}}\sum_{\sigma'\in S_{R}, \pi_1,\pi_2\in S_{R/2}} \mathbb{E}_{D}\tr(\rho_{D,D^\dagger} \sigma'\tau\pi_1\pi_2) \sum_{z_1,z_2}\tr_{AB}\left(\tau\pi_1\pi_2\mathcal{Q}_B\right)\\
        &= \frac{N_B^2-N_B}{N_{AB}^{2R}}\sum_{\sigma'\in S_{R}, \pi_1,\pi_2\in S_{R/2}} \mathbb{E}_{D}\tr(\rho_{D,D^\dagger} \sigma'\tau\pi_1\pi_2) \tr_{A}\left(\tau\pi_1\pi_2\right).
    \end{align}
    Now we turn to $\mathbb{E}_{D}\tr(\rho_{D,D^\dagger} \sigma'\tau\pi_1\pi_2)$. These permutations do a very particular thing to this term: look at \autoref{fig:wg_diagram}. To keep the $z$ factors together (which are the empty top legs in \autoref{fig:wg_diagram}), $\tau\pi_1\pi_2$ must contract an ingoing $D$ leg with an outgoing $D^\dagger$ leg. Each choice of $\tau\pi_1\pi_2$ gives a very particular pattern in which each ingoing $D$ leg contracts with each outgoing $D^\dagger$ legs. Moreover, to keep $\ketbra\phi$ connected to each other, $\sigma'$ must contract an ingoing $D^\dagger$ leg with an outgoing $D$ leg. Then for each choice of $\tau\pi_1\pi_2$, there is exactly one choice of $\sigma'$ which contracts the same $D$ and $D^\dagger$ together such that
    \begin{align}
        \mathbb{E}_{D}\tr(\rho_{D,D^\dagger} \sigma'\tau\pi_1\pi_2) = \mathbb{E}_{D}\tr(DD^\dagger)^R = \tr(\mathbb{I}_{AB})^R = N_{AB}^R.
    \end{align}
    Here we recall that $D$ comes from a unitary. Every other $\sigma'$ creates factors of $\tr((DD^\dagger)^p)$ for $p\geq 1$ which means $\mathbb{E}_{D}\tr(\rho_{D,D^\dagger} \sigma'\tau\pi_1\pi_2)\leq \mathcal{O}(N^{R})$ and so they will not survive as $N_B\rightarrow\infty$. At last we have the set of permutations that do not vanish:
    \begin{align}
        \mathbb{E}_{G}F^{(k)} &= \frac{N_B^2-N_B}{N_{AB}^{R}}\sum_{\pi_1,\pi_2\in S_{R/2}} \tr_{A}\left(\tau\pi_1\pi_2\right).
    \end{align}
    Now we show that this gives the Haar frame potential:
    \begin{align}
        \mathbb{E}_{G}F^{(k)} &= \frac{N_B^2-N_B}{N_{AB}^{R}}\sum_{\pi_1,\pi_2\in S_{R/2}} \tr_{A}\left(\tau\pi_1\pi_2\right)\\
        &=\frac{N_B^2-N_B}{N_{AB}^{R}}\left(\frac{(R/2+N_A-1)!}{(N_A-1)!}\right)^2\tr_{A}\left(\tau\left(\rho^{(n+k)}_{Haar}\otimes\rho^{(n+k)}_{Haar}\right)\right)\\
        &=\frac{N_B^2-N_B}{N_{AB}^{R}}\left(\frac{(R/2+N_A-1)!}{(N_A-1)!}\right)^2F^{(k)}_{Haar}\\
        &\rightarrow F^{(k)}_{Haar}.
    \end{align}
    In the second line, we've identified the permutations as a representation of $k$-th moment Haar average density matrix and substituted appropriately. In the third line, we've traced over $n$, which are not affected by $\tau$, and used $\tr(\tau \rho^{\otimes 2})=\tr\left(\rho^2\right)$. In the last line we've substituted for $R=2$ and taken the infinite $N_B$ limit.
\end{proof}
At the early set of discrete times $t$ that satisfy the above theorem, \lem{singleG_moments} tells us that the global system will only be an approximate $\mathcal{O}(1)$ design. This means that at those early times, to the best of our understanding, in the thermodynamic limit of the measured system, we are going from a very low degree approximate design (exact degree depends on exact $t$) to an exact $\infty$-design with the tools of the projected ensemble. We have boosted a lot of design randomness out of our chaotic system!


\section{Generalizations of the one Gaussian Hamiltonian projected ensemble}

Our proof techniques for the GUE Hamiltonian projected ensemble generalize to a variety of modifications. The first modification we will look at is a spectral modification: ensembles that are invariant under Haar unitary conjugations and have points where the first moment trace is zero. 
\begin{corollary}\label{cor:singleG_spectral_mod}
    Define a random unitary $V = UDU^{\dagger}$ where $V\in\mathcal{U}(N_{AB})$, random matrix $U\in\mathcal{U}(N_{AB})$ is Haar distributed, and random matrix $D\in\mathcal{U}(N_{AB})$ satisfies:
    \begin{enumerate}
        \item $\tr D = 0$,
        \item $\Pr[\left|\tr D - \E\tr D\right| \geq s ] \rightarrow0$ as $N_B\rightarrow\infty$. 
    \end{enumerate} Then consider the projected ensemble formed from this unitary evolution on some initial state $\ket\phi$ and measurement in the computational basis on $n_B$ qubits:
    \begin{align}
        \mathcal{E} = \{q_z, \ket\Phi_z\},
    \end{align}
    where 
    \begin{align}
        &\ket{\Phi_z} = \ket{\tilde{\Phi}_z}/\sqrt{q_z},\\
        &\ket{\tilde{\Phi}_z} = \left(\mathbb{I}_A\otimes \bra{z}_B \right)\;V\ket\phi_{AB},\\
        &q_z = \braket{\tilde{\Phi}_z},\\
        &z\in\{0,1\}^{n_B}.
    \end{align}
    Then the ensemble $\mathcal{E}$ is an exact quantum state design in the thermodynamic limit,
    \begin{align}
        \lim_{N_B\rightarrow \infty} \rho^{(k)}_{\mathcal{E}}=\rho^{(k)}_{Haar} \;\;\;\; \forall \,k\in\mathbb{R}_+.
    \end{align}
\end{corollary}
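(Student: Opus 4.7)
The proof is a near-verbatim adaptation of that of \thm{projected_designs_GUE}: the plan is to verify that the GUE-specific ingredients used in \autoref{app:single_GUE} can be cleanly replaced by the two hypotheses of the corollary. Concretely, the proof of \thm{projected_designs_GUE} invokes the GUE only through three properties of $e^{-iGt}$: (a) its distribution is invariant under Haar conjugation, which furnishes the decomposition $e^{-iGt} = UDU^\dagger$ with $U$ Haar and $D$ unitary; (b) the vanishing of the first normalized trace moment $\tr D / N_{AB}$ at the special times $t$, via \lem{good_t_exists}; and (c) concentration of this trace at its mean, via \lem{single_exp_gue_concentration}. These are exactly what the corollary's hypotheses encode: (a) is the defining form $V = UDU^\dagger$ with Haar $U$, (b) is assumption~1, and (c) is assumption~2.

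Given this, I would re-run the proof of \thm{projected_designs_GUE} step by step. The opening moves---writing the frame potential $\mathbb{E}_V F^{(k)}$, applying the replica trick with $n = 1-k$, substituting $V = UDU^\dagger$, and Haar-averaging over $U$ via the Weingarten calculus---transfer without change, yielding in the $N_B \to \infty$ limit the expansion
\begin{equation*}
\mathbb{E}_V F^{(k)} = \frac{1}{N_{AB}^{2R}} \sum_{\sigma \in S_{2R}} \mathbb{E}_D \tr(\rho_{D,D^\dagger}\sigma)\,\tr_{AB}\!\bigl(C(\sigma, \ketbra{\phi}^{\otimes R})_{AB}\,\mathcal{Q}_B\bigr).
\end{equation*}
The subsequent combinatorial pruning is unchanged: assumption~1 combined with assumption~2 eliminates every permutation $\sigma$ whose $D$-trace factorization contains an unpaired $\tr D$ or $\tr D^\dagger = \overline{\tr D}$, both in expectation and sample-wise in the thermodynamic limit. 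The decomposition of the surviving permutations into $\sigma'\tau\pi_1\pi_2$ (with $\sigma'$ acting on the $\ket{\phi}$ slots and $\tau\pi_1\pi_2$ acting on the $z$ slots), together with the subleading nature of the $z_1 = z_2$ sector, are purely combinatorial and go through verbatim. Finally, the evaluation of the dominant contribution $\mathbb{E}_D \tr((DD^\dagger)^R) = N_{AB}^R$ uses only unitarity of $D$, which is assumed.

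The main---really only---obstacle is a careful audit of the GUE proof to confirm that no higher-order spectral moment of $D$, such as $\tr(D^p)/N_{AB}$ for $p \geq 2$, is implicitly invoked at any stage. This is consistent with the remark following \Cref{lemma_spectrum modifications} that the argument is spectrally agnostic beyond the first trace moment, and it is precisely what allows the corollary to accommodate ensembles far more general than the GUE. Once this audit is complete, no fresh calculation is required: substituting the corollary's hypotheses for the GUE-specific lemmas in the proof of \thm{projected_designs_GUE} immediately yields $\lim_{N_B \to \infty} \rho_{\mathcal{E}}^{(k)} = \rho_{\text{Haar}}^{(k)}$ for every $k \in \mathbb{R}_+$.
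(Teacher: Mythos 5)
Your proposal is correct and follows exactly the route the paper takes: the paper's own proof of this corollary is a one-line remark that the argument of \autoref{thm:projected_designs_GUE} generalizes by substituting the new $D$ into $\mathbb{E}_{D}\tr(\rho_{D,D^\dagger}\sigma)$, and your audit correctly identifies that the only GUE-specific inputs are Haar invariance of the eigenbasis, vanishing and concentration of the first trace moment, and unitarity of $D$ (which bounds all higher trace moments by $N_{AB}$). Your version is simply a more explicit write-up of the same substitution.
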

\begin{proof}
    The proof for \autoref{thm:projected_designs_GUE} above can be generalized by substituting for $D$ in $\mathbb{E}_{D}\tr(\rho_{D,D^\dagger} \sigma)$ appropriately.
\end{proof}
A special case of this above scenario is when the spectrum is actually a fixed matrix $D$ and the random ensemble is generated solely through the Haar randomness of the eigenbases. We also, however, can make an eigenbasis modification:
\begin{corollary}\label{cor:singleG_basis_mod}
    Define a random unitary $V = UDU^{\dagger}$ where $V\in\mathcal{U}(N_{AB})$, random matrix $U\in\mathcal{U}(N_{AB})$ is an approximate 4-design for an exponentially small additive error $\epsilon/N_{A}^{k+4}$, and random matrix $D\in\mathcal{U}(N_{AB})$ satisfies:
    \begin{enumerate}
        \item $\tr D = 0$,
        \item $\Pr[\left|\tr D - \E\tr D\right| \geq s ] \rightarrow0$ as $N_B\rightarrow\infty$. 
    \end{enumerate} Then consider the projected ensemble formed from this unitary evolution on some initial state $\ket\phi$ and measurement in the computational basis on $n_B$ qubits:
    \begin{align}
        \mathcal{E} = \{q_z, \ket\Phi_z\},
    \end{align}
    where 
    \begin{align}
        &\ket{\Phi_z} = \ket{\tilde{\Phi}_z}/\sqrt{q_z},\\
        &\ket{\tilde{\Phi}_z} = \left(\mathbb{I}_A\otimes \bra{z}_B \right)\;V\ket\phi_{AB},\\
        &q_z = \braket{\tilde{\Phi}_z},\\
        &z\in\{0,1\}^{n_B}.
    \end{align}
    Then the ensemble $\mathcal{E}$ is an approximate quantum state $k$-design with $1-\mathcal{O}(1/N_A)$ probability in the thermodynamic limit $N_B\rightarrow\infty$, for additive error $\epsilon'=\mathcal{O}(N_A)\sqrt{\frac{\epsilon}{N_A^{k+4}} {{k+N_A-1}\choose{k}}}\leq\mathcal{O}(1/N_A)$. 
\end{corollary}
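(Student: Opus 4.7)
The plan is to adapt the frame-potential argument of \autoref{thm:projected_designs_GUE} and \autoref{cor:singleG_spectral_mod} by replacing the exact Haar integral over the eigenbasis $U$ with a $4$-design integral, absorbing the resulting deviation through H\"older's inequality. Starting from the replica-trick identity with $R = 2$, the frame potential takes the form $F^{(k)} = \tr\bigl((UDU^\dagger \ketbra{\phi} UD^\dagger U^\dagger)^{\otimes 2}\mathcal{Q}_B\bigr)$, and the operator being traced involves exactly four copies of $U$ and four copies of $U^\dagger$. This is precisely what an approximate $4$-design is built to control: any linear functional of $\mathbb{E}_U (U^{\otimes 4} \rho (U^\dagger)^{\otimes 4})$ can be approximated by the corresponding Haar expression up to the design error.

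I would then apply H\"older's inequality to the trace:
\begin{align*}
\bigl|\mathbb{E}_{U,D} F^{(k)} - \mathbb{E}_{W,D} F^{(k)}\bigr|
\leq \norm{\mathcal{Q}_B}_\infty \cdot \bigl\|\mathbb{E}_{U,D}(UDU^\dagger \ketbra{\phi} UD^\dagger U^\dagger)^{\otimes 2} - \mathbb{E}_{W,D}(WDW^\dagger \ketbra{\phi} WD^\dagger W^\dagger)^{\otimes 2}\bigr\|_1,
\end{align*}
where $W$ is Haar. A short check shows $\norm{\mathcal{Q}_B}_\infty = 1$: each summand is a rank-one partial isometry sending $\ket{z_1^{\otimes n} z_2^{\otimes k} z_1^{\otimes k} z_2^{\otimes n}}$ to $\ket{z_1^{\otimes n} z_1^{\otimes k} z_2^{\otimes k} z_2^{\otimes n}}$, and distinct $(z_1, z_2)$ give mutually orthogonal source and target vectors, so the whole sum is an isometry on its support. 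The $4$-design assumption with additive error $\delta := \epsilon/N_A^{k+4}$ then bounds the right-hand $1$-norm by $\delta$, and invoking \autoref{cor:singleG_spectral_mod} (which gives $\lim_{N_B \to \infty} \mathbb{E}_{W,D} F^{(k)} = F^{(k)}_{Haar}$) yields $\bigl|\lim_{N_B \to \infty} \mathbb{E}_{U,D} F^{(k)} - F^{(k)}_{Haar}\bigr| \leq \epsilon/N_A^{k+4}$.

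From here the argument is routine. Using \eq{useful},
\begin{align*}
\mathbb{E}_{U,D}\norm{\rho^{(k)}_{\mathcal{E}} - \rho^{(k)}_{Haar}}_1
\leq \sqrt{\frac{\mathbb{E}_{U,D} F^{(k)} - F^{(k)}_{Haar}}{F^{(k)}_{Haar}}}
\leq \sqrt{\frac{\epsilon}{N_A^{k+4}}\binom{k + N_A - 1}{k}},
\end{align*}
and Markov's inequality applied to the nonnegative random variable on the left converts this expectation bound into a probability-$(1 - \mathcal{O}(1/N_A))$ statement at the price of a multiplicative $N_A$ factor, exactly matching the claimed $\epsilon'$. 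The only delicate step is the H\"older reduction: one must check that the additive-error definition of the $4$-design translates directly into control on the $1$-norm of the twirl difference for operators of the form $D^{\otimes 2}\ketbra{\phi}^{\otimes 2}(D^\dagger)^{\otimes 2}$, which is standard but worth verifying (and in any case is available up to a constant via the diamond-norm formulation). Beyond this, the argument is bookkeeping, and critically leans on the fact that the replica trick's $R = 2$ output is what makes a mere $4$-design (rather than a $k$- or $2k$-design) sufficient.
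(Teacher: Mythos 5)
Your proposal is correct and follows essentially the same route as the paper's proof: the replica-trick form with $R=2$, the H\"older reduction against $\norm{\mathcal{Q}_B}_\infty=1$, the $4$-design bound on the twirl difference, conversion to $\Delta^{(k)}_{rms}$ via the frame-potential ratio, and Markov's inequality. Your explicit check that $\mathcal{Q}_B$ is a partial isometry, and your flagged caveat about matching the additive-error design definition to the $1$-norm of the twirl difference (the paper states it in diamond norm), are both consistent with what the paper does.
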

\begin{proof}
    We work again with the frame potential:
    \begin{align}
        \lim_{N_B\rightarrow\infty}\mathbb{E}_{U,D}F^{(k)} &= \lim_{N_B\rightarrow\infty}\mathbb{E}_{U,D}\sum_{z_1, z_2} q_{z_1} q_{z_2} \abs{\langle\Phi_{z_1}|\Phi_{z_2}\rangle}^{2k}\\
        &=\lim_{N_B\rightarrow\infty}\mathbb{E}_{U,D}\tr\left(\left(UDU^\dagger|\phi\rangle\langle\phi|UD^\dagger U^\dagger\right)^{\otimes R}_{AB} \mathcal{Q}_B\right).
    \end{align}
    Define the ensemble $W$ to be Haar random unitaries. By H\"older's inequality on Schatten p-norms, we find
    \begin{align}
        &\abs{\lim_{N_B\rightarrow\infty}\mathbb{E}_{U,D}F^{(k)}-F^{(k)}_{Haar}}  \\       &=\lim_{N_B\rightarrow\infty}\abs{\tr\left(\left[\mathbb{E}_{U,D}\left(UDU^\dagger|\phi\rangle\langle\phi|UD^\dagger U^\dagger\right)^{\otimes R}-\mathbb{E}_{W,D}\left(WDW^\dagger|\phi\rangle\langle\phi|WD^\dagger W^\dagger\right)^{\otimes R}\right] \mathcal{Q}_B\right)}\\
        &\leq \lim_{N_B\rightarrow\infty}\abs{\mathcal{Q}_B}_{\infty}\abs{\mathbb{E}_{U,D}\left(UDU^\dagger|\phi\rangle\langle\phi|UD^\dagger U^\dagger\right)^{\otimes R}-\mathbb{E}_{W,D}\left(WDW^\dagger|\phi\rangle\langle\phi|WD^\dagger W^\dagger\right)^{\otimes R}}_1\\
        &\leq \lim_{N_B\rightarrow\infty}\abs{\mathbb{E}_{U,D}\left(UDU^\dagger|\phi\rangle\langle\phi|UD^\dagger U^\dagger\right)^{\otimes R}-\mathbb{E}_{W,D}\left(WDW^\dagger|\phi\rangle\langle\phi|WD^\dagger W^\dagger\right)^{\otimes R}}_1
    \end{align}
    Recall from our proof that $R=2$. From the lemma statement $U$ is (at least) an approximate 4-design with exponentially small additive error $\epsilon/N^k_{A}$. This implies that
    \begin{align}
        \abs{\E_{U} U^{(4)}(\cdot)(U^\dagger  )^{(4)}-  \E_{W}W^{(4)}(\cdot)(W^\dagger  )^{(4)}}_{\diamond}\leq \frac{\epsilon}{N_A^{k+4}}.
    \end{align}
    Our difference of frame potentials are an explicit example of these channels then. Since $F^{(k)}\geq F^{(k)}_{Haar}$, we find
    \begin{align}
        \lim_{N_B\rightarrow\infty}\mathbb{E}_{U,D}F^{(k)} \leq F^{(k)}_{Haar}+\frac{\epsilon}{N_A^{k+4}}
    \end{align}
    This means that
    \begin{align}
        \lim_{N_B\rightarrow\infty} \Delta^{(k)}_{rms}&= \lim_{N_B\rightarrow\infty}\sqrt{\mathbb{E}_{U,D}\frac{F^{(k)}}{F^{(k)}_{Haar}}-1}\\
        &\leq \sqrt{\frac{\epsilon}{N_A^{k+4}F^{(k)}_{Haar}}} = \sqrt{\frac{\epsilon}{N_A^{k+4}} {{k+N_A-1}\choose{k}}}.
    \end{align}
    Notice that this is $\mathcal{O}(1/N_A^2)$.
    We then recall that 
    \begin{align}
        \E \norm{\rho^{(k)} - \rho^{(k)}_{Haar}}_1 \leq \E \Delta^{(k)}\leq \Delta^{(k)}_{rms}.
    \end{align}
    Then we see that the expectation of the one norm is controlled by the $\Delta^{(k)}_{rms}$, and we apply Markov's inequality to get the concentration probability.
\end{proof}
Let us seek to understand this result. First let us address the assumptions: though the error required for the basis scales with dimension, it scales with the dimension of the system left after measuring, $N_A$. This system could be much smaller than the global size. Even if it isn't, though, the design requirement still only applies to the basis, not the entire system. As a matter of fact, the entire system could badly fail to be a design depending on the characteristics of the spectrum---so again, given the right error on just the eigenbasis and specific conditions on the spectral nature, we will have boosted a low degree design to a higher degree $k$-design with the tools of the projected ensemble.

We are not currently aware of specific chaotic systems that satisfy these requirements, but this is mostly because the eigenbases of random Hamiltonian ensembles have not been previously studied. We leave this for future work. An approximate 4-design basis requirement, though, is a much more relaxed requirement than the Haar eigenbases requirements before.


\section{Finite $N$ for the one Gaussian Hamiltonian projected ensemble}

Now we discuss the robustness of GUE Hamiltonian projected ensemble as we back away from the thermodynamic limit and return to finite system size. Away from the thermodynamic limit, we expect a number of finite $N_{B}$ corrections that will come from a number of the permutations we eliminated previously in the proof of \autoref{thm:projected_designs_GUE}. We will now account for those.

\begin{lemma}\label{lem:finite_N_corrections_GUE}
    Let $G$ be an $N_{AB}$-dimensional random matrix drawn from the GUE that acts on $\mathcal{H}_{A}\otimes\mathcal{H}_B$. Then consider the projected ensemble formed from time evolution under this Gaussian Hamiltonian on some initial state $\ket\phi$ and measurement in the computational basis on $n_B$ qubits:
    \begin{align}
        \mathcal{E} = \{q_z, \ket\Phi_z\},
    \end{align}
    where 
    \begin{align}
        &\ket{\Phi_z} = \ket{\tilde{\Phi}_z}/\sqrt{q_z},\\
        &\ket{\tilde{\Phi}_z} = \left(\mathbb{I}_A\otimes \bra{z}_B \right)\;e^{-iGt}\ket\phi_{AB},\\
        &q_z = \braket{\tilde{\Phi}_z},\\
        &z\in\{0,1\}^{n_B}.
    \end{align}
    Then the projected ensemble $\mathcal{E}$ satisfies
    \begin{align}
        \E_G F^{(k)} \leq  F^{(k)}_{Haar}+ \mathcal{O}(t^2/N_A^2N_B).
    \end{align}
    and with a probability of at least $1-\mathcal{O}(1/N_A)$, it is a $\epsilon$-approximate state $k$-design for any $\epsilon < \mathcal{O}(1/N_A)$, for any $k$ such that $N_B>N_A^{k+3}$, for time $t < N_A$ that satisfies $J_1(2t)/t = 0$, and for $N_{AB}>4^{16}$. 
\end{lemma}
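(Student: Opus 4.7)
The plan is to re-run the Weingarten analysis from the proof of \autoref{thm:projected_designs_GUE}, but now keeping track of every subleading contribution that was discarded in the thermodynamic limit. Since the replica trick collapses the problem to $R = 2(n+k) = 2$, the relevant permutation sum is over $S_4$ with only $24$ elements, so a systematic enumeration of corrections is feasible. The broad structure is: first bound $\mathbb{E}_G F^{(k)} - F^{(k)}_{Haar}$ by $\mathcal{O}(t^2/(N_A^2 N_B))$; then convert this frame-potential bound into a bound on $\mathbb{E}\,\Delta^{(k)}$ via the inequalities in \sec{frame}; finally apply Markov's inequality to lift the average statement to a $1 - \mathcal{O}(1/N_A)$ high-probability guarantee for individual GUE samples.

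To bound the expected frame potential, I would first use the exact Weingarten function instead of its leading-order $1/N_{AB}^{2R}$ approximation. Subleading Weingarten coefficients scale as $1/N_{AB}^{2R+2c}$ for $c \geq 1$, contributing at most $\mathcal{O}(1/N_{AB}^2)$ multiplicative corrections to each permutation's weight, which is safely controlled by the hypothesis $N_{AB} > 4^{16}$. Next I would identify the three classes of residual contributions that vanished in the thermodynamic limit. The first is the $z_1 = z_2$ diagonal in $\mathcal{Q}_B$, automatically suppressed by a factor $1/N_B$ since it contributes $N_B$ terms rather than $N_B^2$. The second is the class of ``mixed'' permutations producing overlap factors $\langle \phi_{z_1}|\phi_{z_2}\rangle$, whose contribution I would bound using $\sum_z q_z^l \leq 1$ together with Cauchy--Schwarz applied to $\sum_{z_1,z_2} |\langle \phi_{z_1}|\phi_{z_2}\rangle|^{2m} q_{z_1}^a q_{z_2}^b$. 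The third, and most delicate, is the family of permutations producing factors of $\tr D$ or $\tr D^\dagger$, which are no longer identically zero at the Bessel zeros: by \autoref{lem:exp_gue_tr_avg}, at times $t$ with $J_1(2t)/t = 0$ we have $|\mathbb{E}\tr D|/N_{AB} \leq K_0 t/N_{AB}$, and by \autoref{lem:single_exp_gue_concentration} we have Gaussian concentration with rate $\exp(-N_{AB} s^2/2t^2)$. Choosing the concentration scale $s \sim t\sqrt{\log N_A}/\sqrt{N_{AB}}$ yields $|\tr D/N_{AB}|^2 \leq \mathcal{O}(t^2\log N_A/N_{AB})$ with probability $1 - \mathcal{O}(1/N_A)$ per sample, furnishing an $\mathcal{O}(t^2/N_{AB})$ correction per such permutation.

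Combining these three families, each carrying an explicit $1/N_A$ or $1/N_B$ suppression relative to $F^{(k)}_{Haar}$, gives the claimed bound $\mathbb{E}_G F^{(k)} \leq F^{(k)}_{Haar} + \mathcal{O}(t^2/(N_A^2 N_B))$. Passing through $\Delta^{(k)}_{rms} = \sqrt{\mathbb{E} F^{(k)}/F^{(k)}_{Haar} - 1}$ and using $F^{(k)}_{Haar} \sim k!/N_A^k$, this becomes $\Delta^{(k)}_{rms} \leq \mathcal{O}\!\left(t\sqrt{N_A^k/(N_A^2 N_B)}\right)$. Under the hypothesis $N_B > N_A^{k+3}$ and $t < N_A$, this is $\mathcal{O}(1/N_A^{3/2})$, so that the bound $\mathbb{E}\|\rho^{(k)} - \rho^{(k)}_{Haar}\|_1 \leq \Delta^{(k)}_{rms}$ from \sec{frame} combined with Markov's inequality yields a trace-distance bound of $\mathcal{O}(1/N_A)$ with probability $1 - \mathcal{O}(1/N_A)$, as claimed.

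The main obstacle I anticipate is the careful bookkeeping of the mixed-permutation class: the overlap factors $|\langle \phi_{z_1}|\phi_{z_2}\rangle|^{2m}$ couple nontrivially with distinct Weingarten coefficients and with the spectral sums $\mathbb{E}_D \tr(\rho_{D,D^\dagger}\sigma)$, and extracting a tight $1/N_A$ scaling (rather than a looser bound that is overwhelmed by the $1/F^{(k)}_{Haar} \sim N_A^k/k!$ blowup when converting $F^{(k)}$ to trace distance) requires delicate accounting across all $24$ permutations of $S_4$. The other ingredients, namely the exact Weingarten subleading terms and the concentration of $\tr D$, are comparatively routine given the tools assembled in \autoref{app:gue_facts}.
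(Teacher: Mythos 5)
Your overall architecture matches the paper's: expand the frame potential via the replica trick with $R=2$, split the $S_{2R}$ permutation sum into the leading (Haar-producing) class plus corrections from the Weingarten subleading terms, the $z_1=z_2$ diagonal of $\mathcal{Q}_B$, and the permutations whose spectral factor contains unpaired $\tr D$ or $\tr D^\dagger$; then convert $\E_G F^{(k)} - F^{(k)}_{Haar}$ to $\Delta^{(k)}_{rms}$ and apply Markov. However, there is a genuine gap in your treatment of the third family, which is precisely the step you flag as "most delicate" but do not resolve. Your accounting for a permutation producing $|\tr D|^2$ stops at "$\mathcal{O}(t^2/N_{AB})$ per such permutation." If that is meant as the full contribution to $\E_G F^{(k)}$, it equals $t^2/(N_A N_B)$, a factor of $N_A$ larger than the claimed $t^2/(N_A^2 N_B)$; under $N_B > N_A^{k+3}$ and $t<N_A$ this only yields $\Delta^{(k)}_{rms} \lesssim 1/N_A$, which is not enough for the Markov step at accuracy $\epsilon \sim 1/N_A$. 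The missing ingredient, which the paper supplies, is a combinatorial fact: any permutation that fails to contract each $D$ consecutively with a $D^\dagger$ (and hence produces an unpaired $\tr D$ factor) necessarily mixes the $z$ and $\ketbra{\phi}$ legs inside the $C$-function traces, which collapses the $\mathcal{Q}_B$ sum from $N_B^2$ to $\mathcal{O}(N_B)$. The paper's final count for this class is $\mathcal{O}(N_{AB}^{-4}\cdot N_B \cdot N_{AB}^2 t^2) = \mathcal{O}(t^2/(N_A^2 N_B))$, and it is the extra $1/N_B$ from the collapsed $\mathcal{Q}_B$ sum, not any property of $\tr D$ itself, that saves the bound. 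You treat the "mixed" permutations and the "$\tr D$-producing" permutations as two disjoint families, which obscures exactly this coupling.

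A secondary, fixable issue: you bound $|\tr D|^2$ by a high-probability event ("with probability $1-\mathcal{O}(1/N_A)$ per sample"), but the quantity being controlled is the expectation $\E_G F^{(k)}$, so a per-sample event does not directly bound it without also integrating the Gaussian tail over the bad event. The paper instead computes $\E_G|\tr D|^2 \leq K_0^2 t^2 + 2N t^2 + K_0 t^2\sqrt{2\pi N} = \mathcal{O}(N_{AB}t^2)$ by integrating the tail of \autoref{lem:single_exp_gue_concentration} directly, which keeps the entire argument at the level of expectations until the single Markov step at the end. I recommend adopting that route and then supplying the $\mathcal{Q}_B$-collapse argument above to close the factor-of-$N_A$ gap.
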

\begin{proof}
    Recall the definition that
    \begin{align}
        \Delta^{(k)}_{rms} =\sqrt{\mathbb{E}_{G}\left[\left(\Delta^{(k)}\right)^2\right]} = \sqrt{\frac{\E F^{(k)}}{F^{(k)}_{Haar}}-1}.
    \end{align}
    We proceed by first finding the finite $N$ corrections to $\frac{\E F^{(k)}}{F^{(k)}_{Haar}}$. Recall from our proof of \autoref{thm:projected_designs_GUE} that once we write out the frame potential and use the replica trick, we find that
    \begin{align}\label{eq:finite_n_eq_breakdown}
        \E_{G}F^{(k)} &= \mathbb{E}_{G}\tr\left(\left(e^{-iGt}|\phi\rangle\langle\phi|e^{iGt}\right)^{\otimes R}_{AB} \mathcal{Q}_B\right)\\
        &= \E_D \sum_{\sigma,\tau\in S_{2R}} \tr(\rho_{D,D^{\dagger}}\sigma^{-1})Wg(\tau\sigma^{-1},N_{AB})\tr\left(C(\tau, \ketbra{\phi}^{\otimes R})_{AB}\mathcal{Q}_{B})\right).
    \end{align}
    for $Wg$ as the Weingarten functions, $C$ as some function dependent on the Weingarten contractions, $R=2(n+k)=2$, and 
    \begin{align}
        \mathcal{Q}_B = \sum_{z_1,z_2} \left|z_1^{\otimes n}z_1^{\otimes k}z_2^{\otimes k}z_2^{\otimes n}\right\rangle \left\langle z_1^{\otimes n}z_2^{\otimes k}z_1^{\otimes k}z_2^{\otimes n}\right|.
    \end{align}
    This is a sum over permutation pairs $(\sigma, \tau) \in S_{2R}.$ Let us split the set of permutations into two: the pairs of the form $(\sigma,\sigma)$, $\sigma=\sigma'\tau\pi_1\pi_2$ that contributed nonzero terms in the infinite $N$ case (in other words, all the ones that appeared in the proof of \autoref{thm:projected_designs_GUE}) and those that did not.

    \textit{Case 1:} For the permutation pairs that contributed nonzero infinite $N$ terms, the $\tr(\rho_{D,D^{\dagger}})$ term is the same (and gives the expected Haar frame potential):
    \begin{align}
        \mathbb{E}_{D}\tr(\rho_{D,D^\dagger} \sigma'\tau\pi_1\pi_2) = N_{AB}^R.
    \end{align}
    This is exact in the finite $N$ limit since the structure of the permutations means the trace does not actually depend on $D$ or $D^\dagger$ coming from GUE. However, now there will be corrections to the Weingarten function. From Lemma E.1 in \cite{chen2024efficient}, we have that, for a permutation $\pi\in S_q$ and $N\geq \Omega(q^{4q})$, the asymptotics of the Weingarten function can be controlled by:
    \begin{align}
        Wg(\pi) = N^{-2q+cycles(\pi)}\prod_i (-1)^{|C_i|-1}\text{Cat}_{|C_i|-1}+\tilde{\mathcal{O}}(N^{-2q+cycles(\pi)-3/2})
    \end{align}
    where $C_i$ are the lengths of the $i$th cycle in $\pi$ and $\text{Cat}_n$ are the Catalan numbers. The permutation pairs that contributed nonzero infinite $N$ terms have $\pi=\sigma\tau^{-1}=(1)^{2R}$. Then taking $q=2R,$ we see that these permutations have $2R$ cycles, so the corrections will be $\tilde{\mathcal{O}}(N^{-2R-3/2})$ for $N=N_{AB}$:
    \begin{align}
        Wg(\tau\pi_1\pi_2) = N^{-2R}+\tilde{\mathcal{O}}(N^{-2R-3/2})
    \end{align}
    Lastly, we must now consider $\mathcal{Q}_B$ term. Previously we only considered the situation where $z_1\neq z_2$. Now recall that for the case of $z_1=z_2$, the sum will have at most $N_{B}$ terms of the form
    \begin{align}
        \tr(C(\tau, \ketbra{\phi}^{\otimes R})_{AB}\ketbra{z^{\otimes R}}) < 1.
    \end{align}
    This means that in total, for each $\tau\pi_1\pi_2$
    \begin{align}
        \tr\left(C(\tau, \ketbra{\phi}^{\otimes R})_{AB}\mathcal{Q}_{B})\right) = \sum_{z_1,z_2}\tr_{AB}(\tau\pi_1\pi_2) = (N_B^2-N_B)\tr_{A}(\tau\pi_1\pi_2) +\mathcal{O}(N_B).
    \end{align}
    In summary,
    \begin{align}
        &\E_D \sum_{\pi_1,\pi_2\in S_{R/2}} \tr(\rho_{D,D^{\dagger}}\sigma^{-1})Wg(\tau\sigma^{-1},N_{AB})\tr\left(C(\tau, \ketbra{\phi}^{\otimes R})_{AB}\mathcal{Q}_{B})\right) \\
        &\leq \sum_{\pi_1,\pi_2\in S_{R/2}} N_{AB}^R (N_{AB}^{-2R}+\tilde{\mathcal{O}}(N_{AB}^{-2R-3/2}))((N_B^2-N_B)\tr_{A}(\tau\pi_1\pi_2) +\mathcal{O}(N_B))
    \end{align}
    Taking the limit as $R=2$, we get
    \begin{align}
        F_{Haar}^{(k)} +\mathcal{O}(1/(N_A^{2}N_B)).
    \end{align}

    \textit{Case 2:} Within the types of permutations $(\sigma,\tau)$ that disappeared in the $N_B\rightarrow\infty$ limit, there are a number of subclasses to consider. First, there are permutation pairs that allow $\tr(\rho_{D,D^{\dagger}}\sigma^{-1})$ to form $R$ separate trace terms in paired traces (e.g. $\tr D^2$, $\tr (D^\dagger)^2$ or $\tr DD^\dagger$). Then the contribution from this term would be $N_{AB}^R$. 
    However, since these are not the permutations that contributed to the Haar answer, either the Weingarten contribution will be smaller because $\sigma\neq\tau$:
    \begin{align}
        Wg(\pi) \leq N^{-2q+q-1}\prod_i (-1)^{|C_i|-1}\text{Cat}_{|C_i|-1}+\tilde{\mathcal{O}}(N^{-2q+q-1-3/2})= \mathcal{O}(1/N_{AB}^{2R+1})
    \end{align}
    or the permutations will mix the $z$ and $\ketbra{\phi}$ terms in $C$, so 
    \begin{align}
        \tr\left(C(\tau, \ketbra{\phi}^{\otimes R})_{AB}\mathcal{Q}_{B})\right) = \mathcal{O}(N_B).
    \end{align}
    Then the total contribution per permutation would be $\mathcal{O}(1/N_B^{R-1}).$

    Now we look at all the other permutations $(\sigma,\tau)$ that disappear in the $N_B\rightarrow\infty$ limit. These will no longer have the maximum number of trace terms, or in other words, they will not all be paired traces.
    We will deal with them less precisely by estimating what the largest contribution any permutation can make is, then multiply it by $2R!=4!$ to get a loose upper bound on the possible corrections.

    Again we look at the contributions from each of the three possible terms in \autoref{eq:finite_n_eq_breakdown}. First, there will again be the Weingarten function, which can at most be 
    \begin{align}
        Wg(\pi) = N^{-2R}+\tilde{\mathcal{O}}(N^{-2R-3/2})
    \end{align}
    for the case where $\sigma=\tau$ and there are $q=2R$ cycles. Second, there will again be contributions from the $\mathcal{Q}_B$ sum, which are at most $\mathcal{O}(N_B^2).$ 
    
    The final difficulty will be the $\tr(\rho_{D,D^{\dagger}}\sigma^{-1})$ term. 
    Recall that in the $N_B\rightarrow\infty$ limit, we do not include terms that have single traces like $\tr D$ and $\tr D^\dagger$ in the product; the spectral properties of $e^{iGt}$ dictate that in the infinite limit, single trace terms are zero. However, in the finite $N$ limit, we have from \lem{exp_gue_tr_avg} that
    \begin{align}
        \left|\mathbb{E}\frac{1}{N}\tr[e^{i G t}]\right| \leq  \frac{K_0t}{N}.
    \end{align}
    We are now looking at permutation pairs that cannot make all paired trace terms. This means the largest possible contribution will come from one of the following types of trace terms:
    \begin{enumerate}
        \item 4th-power trace term and $(R-4)/2$ paired terms (e.g. $\tr(D^4)\tr((D^\dagger)^2)\tr((D^\dagger)^2)$)
        \item two 3rd-power trace terms and  $(R-6)/2$ paired terms (e.g. $\tr(D^3)\tr((D^\dagger)^3)\tr(DD^\dagger )$)
        \item one single trace term, one 3rd power trace term and $(R-4)/2$ paired terms \\(e.g. $\tr(D)\tr(D(D^\dagger)^2)\tr(DD^\dagger )$)
        \item two single trace terms and paired terms (e.g. $\tr(D)^2\tr((D^\dagger)^2)\tr(DD^\dagger )$).
    \end{enumerate}
    Which set of trace products is the largest? In order to figure this out, we need to both take the expectation value, which will now require some finagling with finite $N$ concentration since the traces do not just evaluate to identity, and also look at $R=2$. In particular, taking $R=2$ shows us that (1) simply gives you a power of $N$, and (2) cannot exist. Any version of (3) will give $|\tr(D)|^2$. Using \lem{single_exp_gue_concentration} to find:
    \begin{align}
    \Pr[\left|\tr[e^{i G t}] - \E\tr[e^{i G  t}]\right| \geq s ] &\leq \exp(-\frac{s^2}{2Nt^2}),
    \end{align}
    we can then evaluate:
    \begin{align}
        \E_G |\tr D|^2  &= \int_0^\infty \Pr(|\tr D|^2\geq s)ds\\
        &= \int_0^\infty \Pr(|\tr D|\geq y)2y dy\\
        &=\left(\int_0^{|\E\tr  D|} 2ydy + \int_{|\E \tr D|}^\infty \Pr(|\tr D| \geq y)2ydy\right)\\
        &= \left(|\E\tr D|^2 + \int_{|\E \tr D|}^\infty \exp(- (y-|\E\tr D|))^2/(2Nt^2))2ydy\right)\\
        &= \left(|\E\tr D|^2 + 2Nt^2 + |\E\tr D|\sqrt{\pi 2Nt^2}\right)\\
        &\leq K_0^2t^2 + 2N t^2 +K_0t^2\sqrt{2\pi N}.
    \end{align}
    The possibilities of (4) are as such:
    \begin{enumerate}[label=\alph*.]
        \item $(\tr D)^2 (\tr (D^\dagger)^2)$
        \item $(\tr D^\dagger)^2  (\tr (D^2))$
        \item $|\tr D|^2 N$.
    \end{enumerate}
    The term from (4.c) is larger than that of (3); we find that (4.a) and (4.b) can also be bounded by (4.c)
    \begin{align}\nonumber
        &|\E_G (\tr D)^2 (\tr (D^\dagger)^2)| \leq \E_G |(\tr D)^2 (\tr (D^\dagger)^2)| \leq \E_G |\tr D|^2 |\tr (D^\dagger)^2|\leq |\tr D|^2 N,\\\nonumber
        &|\E_G (\tr D^\dagger)^2  (\tr (D^2))| \leq \E_G |(\tr D^\dagger)^2 (\tr (D)^2)| \leq |\tr D|^2 N.
    \end{align}
    Finally,
    \begin{align}
        \E_G |\tr D|^2 N \leq N^2K_0^2t^2 + 2N^2 t^2 +K_0t^2N\sqrt{2\pi N}
    \end{align}
    Naively, then, it seems like (4) gives the largest contribution. However, we recall from the proof of \autoref{thm:projected_designs_GUE} that permutations that do not contract $D$ with $D^\dagger$ consecutively (and vice versa) mix the traces of $z$'s and $\ketbra{\phi}$ in the $C$ function, which means the contribution is only $\mathcal{O}(N_B)$ from the $\mathcal{Q_B}$ sum. This means that actually the maximum contribution in all the cases we looked at today can only at best be:
    \begin{align}
        \mathcal{O}(N_{AB}^{-4}\times N_B \times N_{AB}^2 K_1 t^2) = \mathcal{O}(t^2/(N_A^2N_B)). 
    \end{align}
    Overall, this leaves us with
    \begin{align}
        \E_G F^{(k)} \leq  F^{(k)}_{Haar}+ \mathcal{O}(t^2/N_A^2N_B).
    \end{align}
    This means
    \begin{align}
        \Delta^{(k)}_{rms} \leq \sqrt{\mathcal{O}\left(\frac{t^2}{N_A^2N_B}\right){N_A+k-1\choose k}} \leq \mathcal{O}\left(\frac{t N_A^{k/2-1}}{k!\sqrt{N_B}}\right).
    \end{align}
    We then recall 
    \begin{align}
        \E \norm{\rho^{(k)} - \rho^{(k)}_{Haar}}_1 \leq \E \Delta^{(k)}\leq \Delta^{(k)}_{rms}.
    \end{align}
    We apply Markov's inequality to get our design result.
\end{proof}
There are a number of interesting remarks we can make about this result. The first is that we can remove another assumption using the same proof. Previously we have required that the first trace moment perfectly evaluates to zero, but now that there is concentration coming into play on the first moment, we can also allow for some error in the time precision. Note that we weren't able to discuss this in the thermodynamic limit since everything would have scaled with $N.$
\begin{corollary}\label{cor:time_robustness}
    Let $G$ be an $N_{AB}$-dimensional random matrix drawn from the GUE that acts on $\mathcal{H}_{A}\otimes\mathcal{H}_B$. Then consider the projected ensemble formed from time evolution under this Gaussian Hamiltonian on some initial state $\ket\phi$ and measurement in the computational basis on $n_B$ qubits:
    \begin{align}
        \mathcal{E} = \{q_z, \ket\Phi_z\},
    \end{align}
    where 
    \begin{align}
        &\ket{\Phi_z} = \ket{\tilde{\Phi}_z}/\sqrt{q_z},\\
        &\ket{\tilde{\Phi}_z} = \left(\mathbb{I}_A\otimes \bra{z}_B \right)\;e^{-iGt}\ket\phi_{AB},\\
        &q_z = \braket{\tilde{\Phi}_z},\\
        &z\in\{0,1\}^{n_B}.
    \end{align}
    Then the projected ensemble $\mathcal{E}$ satisfies
    \begin{align}
        \E_G F^{(k)} \leq  F^{(k)}_{Haar}+ \mathcal{O}(t^2/N_A^2N_B).
    \end{align}
    and with a probability of $1-\mathcal{O}(1/N_A^{1-\alpha})$, it is a $\epsilon$-approximate state $k$-design for any $\epsilon < \mathcal{O}(1/N_A^\alpha)$, for any $k$ such that $N_B>N_A^{k+2\alpha+2}$, for time $t < N_A$ that satisfies $J_1(2t)/t < \mathcal{O}(1/N_{AB})$, and for $N_{AB}>4^{16}$. 
\end{corollary}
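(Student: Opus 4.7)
The plan is to obtain this result as a direct adaptation of the proof of \autoref{lem:finite_N_corrections_GUE}, changing only one ingredient: the bound on the single-trace factor $|\mathbb{E}\tr D|$ that appears in the Case~2 analysis of subleading permutations. All structural components of that proof (the Weingarten asymptotics of \cite{chen2024efficient}, the permutation decomposition into ``Case~1'' $\sigma = \sigma'\tau\pi_1\pi_2$ terms that reproduce $F^{(k)}_{Haar}$, and the enumeration of the four subleading trace-product types involving $\tr D$ and $\tr D^{\dagger}$) depend on the time condition only through the numerical value of $|\mathbb{E}\tr D|$, so relaxing the exact-zero condition $J_1(2t)/t=0$ to $|J_1(2t)/t| < \mathcal{O}(1/N_{AB})$ at most shifts constants.

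Concretely, by \autoref{lem:exp_gue_tr_avg},
\begin{align}
|\mathbb{E}\tr D| \;\leq\; N_{AB}\left|\frac{J_1(2t)}{t}\right| + K_0 t \;\leq\; \mathcal{O}(1) + K_0 t,
\end{align}
and, for $t\geq 1$, this still scales as $\mathcal{O}(t)$. Substituting into the second-moment calculation of \autoref{lem:finite_N_corrections_GUE},
\begin{align}
\mathbb{E}_G |\tr D|^2 \;\leq\; |\mathbb{E}\tr D|^2 + 2N_{AB}t^2 + |\mathbb{E}\tr D|\sqrt{2\pi N_{AB}}\, t \;\leq\; \mathcal{O}(N_{AB}\, t^2),
\end{align}
since the concentration term $2N_{AB}t^2$ dominates regardless. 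The remainder of the Case~2 bookkeeping, which pairs this factor with an $\mathcal{O}(N_B)$ contribution from the $\mathcal{Q}_B$ sum on mixed permutations and an $\mathcal{O}(1/N_{AB}^{2R})$ Weingarten weight, produces the same per-permutation bound $\mathcal{O}(t^2/(N_A^2 N_B))$. Since the permutation enumeration is finite and independent of $t$, one concludes exactly as before that
\begin{align}
\mathbb{E}_G F^{(k)} \;\leq\; F^{(k)}_{Haar} + \mathcal{O}\!\left(\frac{t^2}{N_A^2\, N_B}\right),
\end{align}
and hence $\Delta^{(k)}_{rms} \leq \mathcal{O}\!\left(t\, N_A^{k/2-1}/\sqrt{N_B}\right)$.

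The final step re-runs the Markov argument with the relaxed error $\epsilon = 1/N_A^{\alpha}$ instead of $1/N_A$. Using $\mathbb{E}_G \lVert \rho^{(k)} - \rho^{(k)}_{Haar}\rVert_1 \leq \Delta^{(k)}_{rms}$, Markov's inequality yields
\begin{align}
\Pr_{G}\!\left[\,\lVert \rho^{(k)} - \rho^{(k)}_{Haar}\rVert_1 \geq \frac{1}{N_A^{\alpha}}\,\right] \;\leq\; N_A^{\alpha}\cdot \Delta^{(k)}_{rms} \;\leq\; \mathcal{O}\!\left(\frac{N_A^{\alpha + k/2}\, t}{\sqrt{N_B}}\right),
\end{align}
and substituting $t < N_A$ together with the stated scaling $N_B > N_A^{k+2\alpha+2}$ drives this probability to $\mathcal{O}(1/N_A^{1-\alpha})$, matching the claim. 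The main (and only) subtlety is verifying that the $\mathcal{O}(1)$ slack created in $|\mathbb{E}\tr D|$ by the approximate-time condition is genuinely absorbed by the $K_0 t$ term of \autoref{lem:exp_gue_tr_avg} and by the concentration scale $\sqrt{N_{AB}}t$ of \autoref{lem:single_exp_gue_concentration}; everything else is a re-parameterization of \autoref{lem:finite_N_corrections_GUE}. No new techniques are required, which is why the corollary is presented as an immediate byproduct of the finite-$N$ analysis.
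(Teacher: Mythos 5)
Your proposal is correct and follows exactly the route the paper takes: the paper offers no separate proof of this corollary, asserting only that it is a re-run of \autoref{lem:finite_N_corrections_GUE} in which the single-trace factor picks up an extra $N_{AB}\cdot\mathcal{O}(1/N_{AB})=\mathcal{O}(1)$ that is absorbed by the existing $K_0 t$ term, which is precisely your argument. The only blemish is a dropped exponent in your displayed Markov bound (it should read $\mathcal{O}(N_A^{\alpha+k/2-1}\,t/\sqrt{N_B})$, coming from $\Delta^{(k)}_{rms}\leq\mathcal{O}(t\,N_A^{k/2-1}/\sqrt{N_B})$); with that corrected, $t<N_A$ and $N_B>N_A^{k+2\alpha+2}$ give failure probability $\mathcal{O}(1/N_A)\leq\mathcal{O}(1/N_A^{1-\alpha})$ as claimed.
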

Another thing to note about the proof is that implicitly we have bounded an object physicists often like to analyze in regards to chaos: the spectral form factor, $|\tr D|^2.$ Our bound here is loose, and not sensitive to the particular features (dip, ramp, plateau) of the spectral form factor, but a tighter analysis using knowledge of the SFF of the GUE would find a better bound for some time periods. In particular, the specific features of the GUE SFF should give us better bounds in the long timescale period. Right now our results only apply up to an upper bound on $t$, although the design quality proved here holds for longer times the larger $N_B$ is compared to $N_A$. We do believe that in general, though, the design quality should hold for longer times than what we are able to prove here. Moreover at late enough times, the we expect the first trace moment to get small enough such that the exponentiated GUE projected ensemble is a good design at continuous times instead of just discrete points; this would also interact with the spectral form factor. We plan to do this more fine-grained analysis at a later date. 

Lastly, we want to make a remark that this proof again requires incredible faith in the replica trick. The use of the 4th moment is again relevant for our analytic tractability. Whether or not the replica trick can be stretched this far remains to be seen; perhaps our work could actually provide a test for specialists in numerics to analyze.


\section{From a 2k-design to a k-design by subsystem projection}
For more generic systems in which the global state may not be the result of time evolution of a GUE Hamiltonian, we also show that we may go from an approximate $2k$-design to an approximate $k$-design via projection. Our result improves upon the results and significantly simplifies the proof in \cite{Cotler2023EmergentQuantumStateDesigns}, which previously showed that projecting $n_B$ qubits from an approximate $2k$-design yields a $2k/n_B$ design. Notably, our proof in this section does not utilize the replica trick.

To make this section self-contained, let us remind readers of the notation we use: the bath subsystem $B$, consisting of $n_B$ qubits and of dimension $N_B=2^{n_B}$, is measured, obtaining a string $z\in \{0,1\}^{n_B}$. The remaining quantum state lives on system $A$, consisting of $n_A$ qubits and of dimension $N_A=2^{n_A}$. 

\begin{theorem}\label{thm:2k->k}
    Let $\Phi\in\mathcal{H}_{AB}$ be a state drawn from an $\varepsilon'$-approximate $2k$-design, i.e. for $\varepsilon' < \mathcal{O}\left(\frac{1}{N_B^{2k}N_A}\right)$:
    \begin{align}
        \norm{\E_\Phi \left(\ketbra{\Phi}\right)^{\otimes l} - \E_{\Psi\leftarrow Haar}\left(\ketbra{\Psi}\right)^{\otimes l}}_1 < \mathcal{O}\left(\frac{1}{N_B^{2k}N_A}\right), \; \forall l\leq 2k.
    \end{align}
    where $n_B > (k+1)n_A$ (equivalently $N_B > N_A^{k+1}$). Consider the projected ensemble formed from measurement in the computational basis on $n_B$ qubits:
    \begin{align}
        \mathcal{E} = \{q_z, \ket{\Phi_z}\},
    \end{align}
    where 
    \begin{align}
        &\ket{\Phi_z} = \ket{\tilde{\Phi}_z}/\sqrt{q_z},\\
        &\ket{\tilde{\Phi}_z} = \left(\mathbb{I}_A\otimes \bra{z}_B \right)\;\ket\Phi_{AB},\\
        &q_z = \braket{\tilde{\Phi}_z}, \label{eq:G5}\\
        &z\in\{0,1\}^{n_B}.
    \end{align}
    Then with probability $1-\mathcal{O}(1/N_A^{(1/2)-\alpha})$, where $\alpha \in (0,\frac{1}{2})$ is a constant, the projected ensemble $\mathcal{E}$ is an $\epsilon$-approximate quantum state $k$-design for $\epsilon<k/N_A^\alpha$ where $N_A=2^{n_A}$. In particular, with probability inverse-exponentially (in $n_A$) close to 1, the ensemble $\mathcal{E}$ is an inverse-exponentially approximate state k-design, assuming $k = \mathcal{O}(\text{poly}(n))$.
\end{theorem}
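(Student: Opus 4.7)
My plan is to prove \thm{2k->k} via the two intermediate lemmas \lem{main_2k-k_lem1} and \lem{main_2k-k_lem2} stated in the sketch of \thm{main_2k-k_thm}: first I will bound the expected trace distance $\mathbb{E}_\Phi\|A(\Phi) - \rho_{\mathrm{Haar}}^{(k)}\|_1$, where $A(\Phi) = \sum_z q_z \ketbra{\Phi_z}^{\otimes k}$ is the $k$-th moment of the projected ensemble, and then promote this average bound to a high-probability statement for a single draw of $\Phi$ by Markov's inequality. The central obstacle is that $A(\Phi)$ is a \emph{rational} function of $\Phi$, so the $2k$-design hypothesis cannot be applied to it directly. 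To sidestep this, I will introduce the polynomial intermediate
\begin{equation*}
B(\ket{\Phi}) \;:=\; \sum_{z \in \{0,1\}^{n_B}} \frac{\ket{\tilde{\Phi}_z}\!\bra{\tilde{\Phi}_z}^{\otimes k}}{\mathbb{E}_{\Psi \sim \mathrm{Haar}}[q_z^{k-1}]},
\end{equation*}
which replaces the state-dependent factor $q_z^{k-1}$ in the denominator of $A(\Phi)$ by its (constant) Haar expectation, making $B$ a homogeneous polynomial of degree $2k$ in $\ket{\Phi},\bra{\Phi}$. A triangle inequality $\mathbb{E}_\Phi\|A(\Phi) - \rho_{\mathrm{Haar}}^{(k)}\|_1 \leq \mathbb{E}_\Phi\|A(\Phi) - B(\Phi)\|_1 + \mathbb{E}_\Phi\|B(\Phi) - \rho_{\mathrm{Haar}}^{(k)}\|_1$ then reduces the problem to two sub-bounds.

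For the first sub-bound, each summand $\ketbra{\tilde{\Phi}_z}^{\otimes k}$ is a rank-one tensor power with trace norm exactly $q_z^k$, so I will pull norms inside the sum to reduce the first term to $\sum_z \mathbb{E}_{\mathrm{Haar}}[q_z^{k-1}]^{-1} \cdot \mathbb{E}_\Phi|q_z^k - q_z\mathbb{E}_{\mathrm{Haar}}[q_z^{k-1}]|$. A second triangle inequality splits this mean absolute deviation into (i) the discrepancy $|\mathbb{E}_{\Phi \sim 2k\text{-design}}[q_z^k] - \mathbb{E}_{\mathrm{Haar}}[q_z^k]|$, controlled by the design error $\varepsilon'$ since $q_z^k$ is a degree-$2k$ polynomial in $\Phi$, and (ii) the Jensen gap $\mathbb{E}_{\mathrm{Haar}}[q_z^k] - \mathbb{E}_{\mathrm{Haar}}[q_z]\,\mathbb{E}_{\mathrm{Haar}}[q_z^{k-1}]$, which I will compute directly from the closed-form Haar moments $\mathbb{E}_{\mathrm{Haar}}[q_z^m] = \binom{N_A+m-1}{m}\big/\binom{N_{AB}+m-1}{m}$ and show is small because of the strong concentration of $q_z$ around $1/N_B$. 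Summing over the $N_B$ possible outcomes, the assumption $\varepsilon' \lesssim 1/(N_B^{2k}N_A)$ will keep this piece at $O(1/N_A)$.

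For the second sub-bound, both $B(\Phi)$ and $\rho_{\mathrm{Haar}}^{(k)}$ lie in $\mathrm{Sym}^k(\mathcal{H}_A)$, so I will convert $L^1 \to L^2$ using $\|X\|_1 \leq \sqrt{\dim \mathrm{Sym}^k(\mathcal{H}_A)}\,\|X\|_2$ with $\dim \mathrm{Sym}^k = \binom{N_A+k-1}{k}$ --- crucially a much tighter blow-up than the naive $\sqrt{N_A^k}$ given by Cauchy--Schwarz on the full $k$-copy space. Jensen's inequality then gives $\mathbb{E}_\Phi\|B - \rho_{\mathrm{Haar}}^{(k)}\|_1 \leq \sqrt{\dim \mathrm{Sym}^k}\,\sqrt{\mathbb{E}_\Phi\|B - \rho_{\mathrm{Haar}}^{(k)}\|_2^2}$, and expanding the squared Frobenius norm and invoking the $2k$-design property to evaluate $\mathbb{E}_\Phi \Tr(B(\Phi)^2)$ will collapse the inner quantity to a function of the frame potential ratio $\mathbb{E} F^{(k)}/F^{(k)}_{\mathrm{Haar}} - 1$ (plus sub-leading Jensen gap corrections of the same form as in the first sub-bound), which is $O(\varepsilon')$ because the $2k$-design hypothesis forces $\mathbb{E} F^{(k)}$ close to $F^{(k)}_{\mathrm{Haar}}$. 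This is the step that crucially leverages both the symmetric-subspace $L^1$-to-$L^2$ trick and the fact that $B$ (unlike $A$) sees the $2k$-design property directly.

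Summing the two sub-bounds yields \lem{main_2k-k_lem1}: $\mathbb{E}_\Phi\|A(\Phi) - \rho_{\mathrm{Haar}}^{(k)}\|_1 \leq O(k/\sqrt{N_A})$. The hard part I anticipate is the careful error-budget bookkeeping, since the three scales --- $\varepsilon'$ from the design property, the Haar Jensen gap, and the $\sqrt{\dim \mathrm{Sym}^k} \sim N_A^{k/2}$ symmetric-subspace blow-up --- must balance against each other, which is precisely what forces the hypotheses $\varepsilon' < O(1/(N_B^{2k}N_A))$ and $N_B > N_A^{k+1}$. Once \lem{main_2k-k_lem1} is in hand, a single application of Markov's inequality with threshold $k/N_A^\alpha$ will immediately yield \lem{main_2k-k_lem2}, hence the theorem; the inverse-exponential sub-case follows by taking $k = \mathrm{poly}(n)$ and $\alpha$ a constant strictly between $0$ and $1/2$.
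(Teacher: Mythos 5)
Your proposal follows essentially the same route as the paper's proof: the same intermediate polynomial operator $B(\ket{\Phi})$ with the Haar-averaged denominator, the same triangle-inequality split, the same reduction of the first term to the mean absolute deviation $\mathbb{E}|q_z^k - q_z\mu_{k-1}|$ controlled by the design error plus the Jensen gap, the same symmetric-subspace $L^1$-to-$L^2$ conversion via the frame potential ratio for the second term, and the same Markov step. One minor bookkeeping note: in the paper it is the \emph{first} term (after summing over $z$ against the $1/\mu_{k-1}\sim N_B^{k-1}$ prefactor) that contributes the dominant $\mathcal{O}(k/N_A^{1/2})$, while the frame-potential term contributes only $\mathcal{O}(1/N_A)$ --- the reverse of what your intermediate estimates suggest --- but your final stated bound for the lemma is the correct one.
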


To prove this theorem, we need to first see how, on average, a draw from the global $2k$-design ensemble performs in terms of its subsystem design quality. Then, using a concentration statement, we will show that with high probability, each individual draw from the global $2k$-design will result in a $k$-design projected ensemble. 
We tackle these issues in the next two lemmas.
\begin{lemma}\label{lem:td_bound_lemma}
    For the assumptions given in \autoref{thm:2k->k}, the following holds:
    \begin{equation}
         \mathbb{E}_{\Phi \sim (\epsilon',2k)\text{ design}}\Biggl\lVert \sum_{z\in\{0,1\}^{n_B}} q_z\bigl(\ket{\Phi_z}\!\bra{\Phi_z}\bigr)^{\otimes k} - \mathbb{E}_{\Psi \sim \text{Haar}}\bigl(\ket{\Psi}\!\bra{\Psi}^{\otimes k} \bigr)\Biggr\rVert_1 \leq \epsilon 
    \end{equation}
    for $\epsilon<\mathcal{O}(k/N_A^{1/2})$.
\end{lemma}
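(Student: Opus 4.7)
The plan is to introduce a polynomial proxy operator $B(\ket\Phi)$ that bridges between the rational function $A(\ket\Phi) := \sum_z q_z \ketbra{\Phi_z}{\Phi_z}^{\otimes k}$ and the Haar target $\rho^{(k)}_{\mathrm{Haar}}$, so that the $2k$-design hypothesis can be applied cleanly to a polynomial. Concretely, I would define
\begin{align}
B(\ket\Phi) := \sum_{z\in\{0,1\}^{n_B}} \frac{\ket{\tilde\Phi_z}\bra{\tilde\Phi_z}^{\otimes k}}{\mathbb{E}_{\Psi\sim\mathrm{Haar}}(q_z^{k-1})},
\end{align}
replacing the random denominator $q_z^{k-1}$ appearing in the equivalent expression $A(\ket\Phi) = \sum_z \ket{\tilde\Phi_z}\bra{\tilde\Phi_z}^{\otimes k}/q_z^{k-1}$ by its Haar expectation. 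The triangle inequality then gives
\begin{align}
\mathbb{E}_\Phi \|A(\ket\Phi) - \rho^{(k)}_{\mathrm{Haar}}\|_1 \leq \mathbb{E}_\Phi \|A(\ket\Phi) - B(\ket\Phi)\|_1 + \mathbb{E}_\Phi \|B(\ket\Phi) - \rho^{(k)}_{\mathrm{Haar}}\|_1,
\end{align}
and I would bound the two pieces independently.

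For the first term, I would use that each $\ket{\tilde\Phi_z}\bra{\tilde\Phi_z}^{\otimes k}$ is rank-one positive with trace $q_z^k$, giving the bound
\begin{align}
\|A - B\|_1 \leq \sum_z \frac{\bigl|q_z^k - q_z\,\mathbb{E}_{\mathrm{Haar}}(q_z^{k-1})\bigr|}{\mathbb{E}_{\mathrm{Haar}}(q_z^{k-1})}.
\end{align}
After taking expectations over $\Phi$, each numerator is essentially the mean absolute deviation of $q_z^k$ under the design. I would split the difference as $[q_z^k - \mathbb{E}_{\mathrm{Haar}}(q_z^k)] + [\mathbb{E}_{\mathrm{Haar}}(q_z^k) - q_z\,\mathbb{E}_{\mathrm{Haar}}(q_z^{k-1})]$, and pass absolute values through. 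The first bracket I would control by Cauchy--Schwarz to the second moment, which by the $2k$-design property reduces (up to $\varepsilon'$) to the Haar variance of $q_z^k$; the second bracket is deterministic in $z$ and measures a Jensen gap of Haar moments. Using the explicit identity $\mathbb{E}_{\mathrm{Haar}}(q_z^k) = \prod_{j=0}^{k-1}(N_A+j)/(N_{AB}+j)$ and the Haar concentration of $q_z$ around $1/N_B$, both pieces turn out to be of order $k/(N_B^k\sqrt{N_A})$ per $z$; multiplying by the reciprocal prefactor $1/\mathbb{E}_{\mathrm{Haar}}(q_z^{k-1})\sim N_B^{k-1}$ and summing over $N_B$ outcomes aggregates to $\mathcal{O}(k/\sqrt{N_A})$.

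For the second term, I would exploit that $B$ and $\rho^{(k)}_{\mathrm{Haar}}$ both live in the symmetric subspace $\mathrm{Sym}^k(\mathbb{C}^{N_A})$ and invoke the trace-norm-to-two-norm conversion of \autoref{sec:frame} together with Jensen's inequality to obtain
\begin{align}
\mathbb{E}_\Phi \|B - \rho^{(k)}_{\mathrm{Haar}}\|_1 \leq \sqrt{\binom{N_A+k-1}{k}\cdot \mathbb{E}_\Phi\tr\!\bigl[(B - \rho^{(k)}_{\mathrm{Haar}})^2\bigr]}.
\end{align}
Expanding the squared 2-norm produces a self-term $\mathbb{E}_\Phi\tr(B^2) = \sum_{z_1,z_2}\mathbb{E}_\Phi|\langle\tilde\Phi_{z_1}|\tilde\Phi_{z_2}\rangle|^{2k}/[\mathbb{E}_{\mathrm{Haar}}(q_{z_1}^{k-1})\mathbb{E}_{\mathrm{Haar}}(q_{z_2}^{k-1})]$, which is a degree-$2k$ polynomial expectation and so by the $2k$-design hypothesis collapses onto its Haar version up to $\varepsilon'$; a cross term $\tr(\mathbb{E}_\Phi B\cdot \rho^{(k)}_{\mathrm{Haar}})$ that simplifies via $\tr(\ket{\tilde\Phi_z}\bra{\tilde\Phi_z}^{\otimes k}\rho^{(k)}_{\mathrm{Haar}}) = q_z^k/\binom{N_A+k-1}{k}$ and similarly reduces to Haar moments; and the constant $F^{(k)}_{\mathrm{Haar}}$. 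After substitution of the explicit Haar moment identities, the bracketed quantity collapses to a frame-potential ratio of the form $\mathbb{E} F^{(k)}/F^{(k)}_{\mathrm{Haar}} - 1$, which is suppressed by the same combination of design error and Haar concentration controlling the first term.

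The main obstacle, and what dictates the regime $n_B > (k+1)n_A$, is the bookkeeping in the first-term bound: because the reciprocal prefactor $1/\mathbb{E}_{\mathrm{Haar}}(q_z^{k-1}) \sim N_B^{k-1}$ multiplies a sum over $N_B$ outcomes, the per-$z$ mean absolute deviation of $q_z^k$ must be shown to decay as $\mathcal{O}(k/(N_B^k\sqrt{N_A}))$ --- one factor of $\sqrt{N_A}$ tighter than a naive L2-to-L1 conversion on the full $N_A^k$-dimensional space would yield. This tightness, achieved by using the Haar variance of $q_z^k$ directly rather than invoking higher-order Levy-type inequalities as in \cite{Cotler2023EmergentQuantumStateDesigns}, is what ultimately upgrades the projected design quality from $k/n_B$ to $k$.
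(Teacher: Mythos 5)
Your proposal follows essentially the same route as the paper's proof: the same intermediate operator $B(\ket\Phi)$ with the Haar-averaged denominator, the same triangle inequality, the same reduction of the first term to the mean absolute deviation of $q_z^k$ controlled by the Jensen gap and the concentration of $q_z$, and the same symmetric-subspace/frame-potential conversion for the second term. Two minor imprecisions only: the bracket $\mathbb{E}_{\mathrm{Haar}}(q_z^k) - q_z\,\mathbb{E}_{\mathrm{Haar}}(q_z^{k-1})$ is not deterministic (it still contains the random $q_z$; the paper controls it via the variance of $q_z$, which your appeal to Haar concentration effectively supplies), and the condition $n_B > (k+1)n_A$ is actually consumed in the second, frame-potential term rather than in the first-term bookkeeping.
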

\begin{proof}
    For any $z\in \{0,1\}^{n_B}$, let $\mu_k:= \mathbb{E}_{\Psi\sim \text{Haar}}[q_z^k].$ Note that the value of this quantity does not depend on $z$. 
    
    Define 
    \begin{equation}\label{eq:Adef}
        A(\ket{\Phi}) = \sum_{z\in\{0,1\}^{n_B}} q_z\ket{\Phi_z}\!\bra{\Phi_z}^{\otimes k} = \sum_{z\in\{0,1\}^{n_B}} \frac{\bigl(\ket{\tilde{\Phi}_z}\!\bra{\tilde{\Phi}_z}\bigr)^{\otimes k}}{q_z^{k-1}},
    \end{equation}
    so that the trace distance expression above can be rewritten as:
    \begin{equation}\label{eq:trdA}
        \mathbb{E}_{\Phi \sim (\epsilon',2k)\text{ design}}\Biggl\lVert A(\ket{\Phi}) - \mathbb{E}_{\Psi \sim \text{Haar}}A(\ket{\Psi}) \Biggr\rVert_1 .
    \end{equation}
    The equivalence of the Haar averaged terms, i.e. $\mathbb{E}_{\Psi \sim \text{Haar}}\bigl(\ket{\Psi}\!\bra{\Psi}^{\otimes k} \bigr) = \mathbb{E}_{\Psi \sim \text{Haar}}A(\ket{\Psi})$, is proven in Lemma 3 of \cite{Cotler2023EmergentQuantumStateDesigns}.
    
    Our strategy to bound this expression is inspired by the approach used by \cite{Cotler2023EmergentQuantumStateDesigns} to prove their Theorem 1, in that we will introduce an intermediate operator $B(\ket{\Phi})$, such that the trace distance in \autoref{eq:trdA} can be bounded by a triangle inequality:
    \begin{align}\label{eq:td_triangle}
        &\mathbb{E}_{\Phi \sim (\epsilon',2k)\text{ design}}\Biggl\lVert A(\ket{\Phi}) - \mathbb{E}_{\Psi \sim \text{Haar}}A(\ket{\Psi}) \Biggr\rVert_1 \notag \\ 
        &\leq \mathbb{E}_{\Phi \sim (\epsilon',2k)\text{ design}}\Biggl\lVert A(\ket{\Phi}) - B(\ket{\Phi}) \Biggr\rVert_1 + \mathbb{E}_{\Phi \sim (\epsilon',2k)\text{ design}}\Biggl\lVert B(\ket{\Phi}) - \mathbb{E}_{\Psi \sim \text{Haar}}A(\ket{\Psi}) \Biggr\rVert_1
    \end{align}

    However, our proof differs from that of \cite{Cotler2023EmergentQuantumStateDesigns} by virtue of the fact that we choose a different $B(\ket{\Phi})$. Our choice of $B(\ket{\Phi})$ both simplifies the proof significantly and improves the design quality of the projected ensemble from $k/(n_B/2)$ to $k$. Specifically, we use the following $B(\ket{\Phi})$: 
    \begin{align}
        B(\ket{\Phi}) =  \sum_{z\in\{0,1\}^{n_B}} \frac{\ket{\tilde{\Phi}_z}\!\bra{\tilde{\Phi}_z}^{\otimes k}}{\mathbb{E}_{\Phi\sim\text{Haar}}(q_z^{k-1})} =  \sum_{z\in\{0,1\}^{n_B}} \frac{\ket{\tilde{\Phi}_z}\!\bra{\tilde{\Phi}_z}^{\otimes k}}{\mu_{k-1}} .
    \end{align} 
    This operator has no physical significance in the sense that it does not describe the quantum state that appears in the experiment. We choose it for its mathematical convenience in facilitating the proof. Before embarking on the detailed proof, we provide some intuition for why each of the terms in the triangle inequality is bounded. Bounding both terms crucially relies on the fact that the expectation values of polynomial functions of the unnormalized states $\ket{\tilde{\Phi}_z}$ on subsystem $A$ (obtained by projective measurements on the initial global state-design) are close to their Haar counterparts. Additionally, bounding each term involves an extra core ingredient:
    
    \textit{The first term} --- this relies on the Jensen gap $\mathbb{E}_{\Phi\sim\text{Haar}}(q_z^m) - \mathbb{E}_{\Phi\sim\text{Haar}}(q_z)^m$ being suppressed as $\mathcal{O}\left(\frac{1}{N_B^mN_A}\right)$. While the smallness of this gap reflects the strong concentration properties of the Haar measure, it is notable that we do not explicitly need to use any concentration inequalities, such as the higher-order variant of Levy's lemma used by \cite{Cotler2023EmergentQuantumStateDesigns}. Directly bounding the Jensen gap (a simple algebraic calculation) suffices.
    
    \textit{The second term} --- this relies on the closeness of the design frame potential to the Haar frame potential, via a version of the L1-L2 norm inequality expressed in terms of the frame potential. The dimension blowup is less drastic than that of the usual L1-L2 inequality, since it is just the dimension of the symmetric subspace rather than the full $k$-copy Hilbert space.\\
   
    Before diving into bounding each term in \autoref{eq:td_triangle} individually, we note the facts about the closeness of design expectations to Haar expectations that will be used for both terms. First,
    \begin{equation}\label{eq:haar_exp_qk}
        \mu_k := \E_{\Phi\sim \text{Haar}}(q_z^{k}) = \E_{\Phi\sim \text{Haar}}\left(\braket{\tilde{\Phi}_z}^k\right)=\frac{\binom{N_A+k-1}{k}}{\binom{N+k-1}{k}}
        = \prod_{j=0}^{k-1}\frac{N_A+j}{N+j}.
    \end{equation}
    Here $N=N_AN_B$. The analogous expectation when $\ket{\Phi}$ is drawn from an $\varepsilon'$-approximate $2k$-design is $\varepsilon'$-close to $\E_{\Phi\sim \text{Haar}}(q_z^{k})$:
    \begin{equation}\label{eq:d_to_haar_qk}
         \abs{\mathbb{E}_{\Phi \sim (\epsilon',2k)\text{-design}}(q_z^k) -  \mu_k} \leq \varepsilon'
    \end{equation}
    In fact, this closeness of the design expectation to Haar expectation holds also for the sum of $q_z^k$ over all values of $z$:
    \begin{equation}\label{eq:d_to_haar_sum_qk}
        \abs{\sum_{z\in\{0,1\}^{n_B}}\mathbb{E}_{\Phi \sim (\epsilon',2k)\text{-design}}(q_z^k) - \sum_{z\in\{0,1\}^{n_B}}\mu_k} \leq \varepsilon'
    \end{equation}
    The above can be shown by the following argument (which appears in \cite{Cotler2023EmergentQuantumStateDesigns}):
    \begin{align}
        &\abs{\sum_{z\in\{0,1\}^{n_B}}\mathbb{E}_{\Phi \sim (\epsilon',2k)\text{-design}}(q_z^k) - \sum_{z\in\{0,1\}^{n_B}}\mu_k}\\
        = &\abs{\Tr\Biggl[\left(\mathbb{I}_A^{\otimes k} \otimes \sum_{z\in\{0,1\}^{n_B}} \ketbra{z}{z}_B^{\otimes k}\right) \left(\mathbb{E}_{\Phi \sim (\epsilon',2k)\text{-design}} \ketbra{\Phi}{\Phi}^{\otimes k} - \E_{\Phi\sim \text{Haar}} \ketbra{\Phi}{\Phi}^{\otimes k} \right)\Biggr]}\\
        = &\norm{\mathbb{I}_A^{\otimes k} \otimes \sum_{z\in\{0,1\}^{n_B}} \ketbra{z}{z}_B^{\otimes k}}_{\infty} \norm{\mathbb{E}_{\Phi \sim (\epsilon',2k)\text{-design}} \ketbra{\Phi}{\Phi}^{\otimes k} - \E_{\Phi\sim \text{Haar}} \ketbra{\Phi}{\Phi}^{\otimes k}}_1\\
        \leq &1 \cdot \varepsilon'\\
        \leq &\varepsilon'.
    \end{align}
     
    Relatedly,
    \begin{equation}
        \E_{\Phi\sim \text{Haar}}\left(\left|\langle{\tilde{\Phi}_{z_1}}|\tilde{\Phi}_{z_2}\rangle\right|^{2k}\right)
        = \frac{k!\prod_{j=0}^{k-1}N_A+j}{\prod_{i=0}^{2k-1} N+i} \;\;\; \text{for }z_1\neq z_2.
    \end{equation}
    By an argument similar to that used to prove \autoref{eq:d_to_haar_sum_qk}, one can show that the analogous design quantity, summed over all $z_1$ and $z_2$, is close to the above:
    \begin{equation}\label{eq:d_to_haar_2k}
        \abs{\sum_{z_1,z_2\in\{0,1\}^{n_B}} \mathbb{E}_{\Phi \sim (\epsilon',2k)\text{-design}}\left(\left|\langle{\tilde{\Phi}_{z_1}}|\tilde{\Phi}_{z_2}\rangle\right|^{2k}\right) - \sum_{z_1,z_2\in\{0,1\}^{n_B}} \E_{\Phi\sim \text{Haar}}\left(\left|\langle{\tilde{\Phi}_{z_1}}|\tilde{\Phi}_{z_2}\rangle\right|^{2k}\right)} \leq \varepsilon'
    \end{equation}
    Now we will proceed to bounding each of our trace norm expressions in the triangle inequality \autoref{eq:td_triangle}.
    
    \subsection{Bounding the first term, $\mathbb{E}_{\Phi \leftarrow 2 k \text {-design }} \| A(|\Phi\rangle)-B(|\Phi\rangle) \|_1$} 

     \begin{align}
        &\mathbb{E}_{\Phi \sim (\epsilon',2k)\text{-design}}\Biggl\lVert A(\ket{\Phi}) - B(\ket{\Phi}) \Biggr\rVert_1 \notag \\
        &= \mathbb{E}_{\Phi \sim (\epsilon',2k)\text{-design}}\Biggl\lVert \sum_{z\in\{0,1\}^{n_B}} \frac{\bigl(\ket{\tilde{\Phi}_z}\!\bra{\tilde{\Phi}_z}\bigr)^{\otimes k}}{q_z^{\,k-1}} - \sum_{z\in\{0,1\}^{n_B}} \frac{\bigl(\ket{\tilde{\Phi}_z}\!\bra{\tilde{\Phi}_z}\bigr)^{\otimes k}}{\mu_{k-1}} \Biggr\rVert_1 \\ 
        &\leq \mathbb{E}_{\Phi \sim (\epsilon',2k)\text{-design}}\Biggl\lVert \sum_{z\in\{0,1\}^{n_B}} \Bigl(\frac{1}{q_z^{\,k-1}} - \frac{1}{\mu_{k-1}}\Bigr) \bigl(\ket{\tilde{\Phi}_z}\!\bra{\tilde{\Phi}_z}\bigr)^{\otimes k}\Biggr\rVert_1 \\ 
        &\leq  \mathbb{E}_{\Phi \sim (\epsilon',2k)\text{-design}} \Biggl[\sum_{z\in\{0,1\}^{n_B}} \abs{\frac{1}{q_z^{\,k-1}} - \frac{1}{\mu_{k-1}}} \;\braket{\tilde{\Phi}_z}{\tilde{\Phi}_z}^{k}\Biggr]\\
        &= \mathbb{E}_{\Phi \sim (\epsilon',2k)\text{-design}} \Biggl[ \sum_{z\in\{0,1\}^{n_B}} \abs{\frac{1}{q_z^{\,k-1}} - \frac{1}{\mu_{k-1}}}\;q_z^k \Biggr]\\
        &= \mathbb{E}_{\Phi \sim (\epsilon',2k)\text{-design}} \Biggl[ \sum_{z\in\{0,1\}^{n_B}} \abs{\,q_z - \frac{q_z^{k}}{\mu_{k-1}}\,}\Biggr]\\
        &= \mathbb{E}_{\Phi \sim (\epsilon',2k)\text{-design}} \Biggl[ \sum_{z\in\{0,1\}^{n_B}} \frac{\abs{\,q_z^{k} - q_z\,\mu_{k-1}\,}}{\mu_{k-1}}\Biggr]\\
        &= \sum_{z\in\{0,1\}^{n_B}}\frac{1}{\mu_{k-1}}\cdot\mathbb{E}_{\Phi \sim (\epsilon',2k)\text{-design}} \abs{\,q_z^{k} - q_z\,\mu_{k-1}\,}\label{eq:sum_of_devs}.
    \end{align}

    For each $z$, we can bound the above expectation as follows. First, note that by Jensen's inequality, the mean absolute deviation is upper bounded by the standard deviation:
    \begin{equation}\label{eq:jensen_mad}
        \mathbb{E}_{\Phi \sim (\epsilon',2k)\text{ design}} \abs{q_z^{k} - q_z\mu_{k-1}} \leq \sqrt{ \mathbb{E}_{\Phi \sim (\epsilon',2k)\text{ design}} \abs{q_z^{k} - q_z\mu_{k-1}}^2}
    \end{equation}
    Now, by expanding the term inside the square root and using \autoref{eq:d_to_haar_qk}, we can upper bound it by its Haar counterpart plus an error term of $\mathcal{O}(\varepsilon')$:
    \begin{equation}\label{eq:designtohaar}
         \mathbb{E}_{\Phi \sim (\epsilon',2k)\text{ design}} \abs{q_z^{k} - q_z\mu_{k-1}} \leq \sqrt{\mathbb{E}_{\Phi \sim \text{Haar}} \abs{q_z^{k} - q_z\mu_{k-1}}^2 + \mathcal{O}(\varepsilon')}
    \end{equation}
    The original design error $\varepsilon'$ can be chosen to be $\frac{1}{N_B^{2k}N_A}$.
    So now our task is to bound the first term in the square root,
    \begin{equation}\label{eq:haar_abs_dev}
        \mathbb{E}_{\Phi \sim \text{Haar}} \abs{q_z^{k} - q_z\mu_{k-1}}^2 = \mu_{2k}^2 - 2\mu_1\mu_{k-1} + \mu_1^2\mu_{k-1}^2
    \end{equation}    
    Instead of a messy brute-force calculation of the right-hand-side of \autoref{eq:haar_abs_dev}, we will expand the left-hand-side in a series of triangle inequalities. The latter approach has the advantage of organizing the calculation in a conceptually clearer way, highlighting the essential role played by the smallness of the Jensen gap, $\mu_m - \mu_1^m$ for $m>0$, at multiple steps of the proof. The small Jensen gap reflects the strong concentration of $q_z^m$ over the Haar measure. 

    Before bounding the expression in \autoref{eq:haar_abs_dev}, we will first bound the aforementioned gap. By Jensen's inequality, which holds for any $m$ because $q_z$ is always non-negative, $\mu_m \geq \mu_1^m$. The gap between the two sides of the inequality is
    \begin{align}
        \mu_m - \mu_1^m &= \prod_{i=0}^{m-1}\frac{N_A + i}{N_AN_B + i} - \frac{1}{N_B^m}\\
        &\leq \frac{\prod_{i=0}^{m-1}(N_A + i)}{N_A^mN_B^m} - \frac{1}{N_B^m}\\
        &\leq \frac{(N_A + m)^m}{N_A^mN_B^m} - \frac{1}{N_B^m}\\
        &= \frac{1}{N_B^m}\Biggl[\frac{(N_A + m)^m}{N_A^m} - 1\Biggr]\\
        &= \frac{1}{N_B^m}\Biggl[\left(1 + \frac{m}{N_A}\right)^m - 1\Biggr]\\
        &\leq \frac{1}{N_B^m}\Biggl[e^{\frac{m^2}{N_A}} - 1\Biggr]\\
        &\leq \mathcal{O}\left(\frac{m^2}{N_B^mN_A}\right)\label{eq:jensen_gap_bound}
    \end{align}
    where we have used the big-$\mathcal{O}$ to absorb the subleading terms for notational clarity.
    
    Now we turn to bounding the expression in \autoref{eq:haar_abs_dev}. By the triangle inequality,
    \begin{align}
        &\mathbb{E}_{\Phi \sim \text{Haar}} \abs{q_z^{k} - q_z\mu_{k-1}}^2\\
        &\leq \mathbb{E}_{\Phi \sim \text{Haar}} \Biggl[\Biggl(\abs{q_z^{k} - \mu_1^k} + \abs{\mu_1^k - q_z\mu_{k-1}}\Biggr)^2\Biggr]\\
        &= \mathbb{E}_{\Phi \sim \text{Haar}} \Biggl[\abs{q_z^{k} - \mu_1^k}^2 + \abs{\mu_1^k - q_z\mu_{k-1}}^2 + 2\abs{q_z^{k} - \mu_1^k}\abs{\mu_1^k - q_z\mu_{k-1}}\Biggr]\\\nonumber
        &\leq \underbrace{\mathbb{E}_{\Phi \sim \text{Haar}} \Biggl[\abs{q_z^{k} - \mu_1^k}^2\Biggr]}_{\text{Term (a)}} + \underbrace{\mathbb{E}_{\Phi \sim \text{Haar}} \Biggl[\abs{\mu_1^k - q_z\mu_{k-1}}^2\Biggr]}_{\text{Term (b)}} \\
        &\;\;\;\;\;\;\;+ 2\sqrt{\mathbb{E}_{\Phi \sim \text{Haar}} \Biggl[\abs{q_z^{k} - \mu_1^k}^2\Biggr] \mathbb{E}_{\Phi \sim \text{Haar}} \Biggl[\abs{\mu_1^k - q_z\mu_{k-1}}^2\Biggr]}\label{eq:termsIandII_def}
    \end{align}
    where in the final step above we have used the Cauchy-Schwartz inequality. To bound this expression it suffices to bound Terms (a) and (b), because the third term is just the square root of their product. 
    
    \paragraph{Bounding Term (a)} ---
    \begin{align}
        &\mathbb{E}_{\Phi \sim \text{Haar}} \Biggl[\abs{q_z^{k} - \mu_1^k}^2\Biggr] = \mathbb{E}_{\Phi \sim \text{Haar}} \Biggl[q_z^{2k} + \mu_1^{2k} - 2q_z^k\mu_1^k\Biggr]\\
        &= \mathbb{E}_{\Phi \sim \text{Haar}} \Biggl[q_z^{2k} + \mu_1^{2k} - 2\mu_k\mu_1^k\Biggr]
    \end{align}
    where we have used linearity of expectation. By Jensen's inequality, $\mu_k \geq \mu_1^k$, allowing us to upper bound the above expression by replacing $\mu_k$ in the third term by $\mu_1^k$:
    \begin{align}
        &\mathbb{E}_{\Phi \sim \text{Haar}} \Biggl[q_z^{2k} + \mu_1^{2k} - 2\mu_k\mu_1^k\Biggr]\\
        &\leq \mathbb{E}_{\Phi \sim \text{Haar}} \Biggl[q_z^{2k} + \mu_1^{2k} - 2\mu_1^{2k}\Biggr]\\
        &\leq \mathbb{E}_{\Phi \sim \text{Haar}} \Biggl[q_z^{2k} - \mu_1^{2k}\Biggr]\label{eq:exp_q2k_bound}
    \end{align}
    This is simply the Jensen gap for $m=2k$. Using the bound computed in \autoref{eq:jensen_gap_bound}, we get the following bound on Term $(a)$:
    \begin{equation}\label{eq:termI_bound}
        \mathbb{E}_{\Phi \sim \text{Haar}} \Biggl[\abs{q_z^{k} - \mu_1^k}^2\Biggr] \leq \mathcal{O}\left(\frac{k^2}{N_B^{2k}N_A}\right).    
    \end{equation}
    
    \paragraph{Bounding Term (b)} --- By the triangle inequality,
    \begin{align}
        \mathbb{E}_{\Phi \sim \text{Haar}} \Biggl[\abs{\mu_1^k - q_z\mu_{k-1}}^2\Biggr] &\leq \mathbb{E}_{\Phi \sim \text{Haar}} \Biggl[\Biggl(\abs{\mu_1^k - \mu_k} + \abs{\mu_k - q_z\mu_{k-1}}\Biggr)^2\Biggr]\label{eq:termII_TI}\\
        &= \mathbb{E}_{\Phi \sim \text{Haar}} \Biggl[\Biggl(\abs{\mu_k - \mu_1^k} + \mu_{k-1}\abs{r_k - q_z}\Biggr)^2\Biggr]\label{eq:termII_TI_r}
    \end{align}
    where in \autoref{eq:termII_TI_r} we have re-written the right-hand side of \autoref{eq:termII_TI} in terms of $r_k$, defined as the ratio $r_k := \mu_k/\mu_{k-1}$:
    \begin{equation}
        r_k := \frac{\mu_k}{\mu_{k-1}} = \frac{N_A + k-1}{N_AN_B + k-1}
    \end{equation}
    Expanding the square in \autoref{eq:termII_TI_r}, and using linearity of expectation and $\mathbb{E}_{\Phi \sim \text{Haar}}[\abs{r_k - q_z}] \leq \sqrt{\mathbb{E}_{\Phi \sim \text{Haar}}[\abs{r_k - q_z}^2]}$, gives
    \begin{align}
        &\mathbb{E}_{\Phi \sim \text{Haar}} \Biggl[\abs{\mu_1^k - q_z\mu_{k-1}}^2\Biggr]\\
        &\leq \underbrace{\abs{\mu_k - \mu_1^k}^2}_{\text{Term (b.i)}} + \mu_{k-1}^2\underbrace{\mathbb{E}_{\Phi \sim \text{Haar}}\Biggl[\abs{r_k - q_z}^2\Biggr]}_{\text{Term (b.ii)}} + 2\mu_{k-1}\abs{\mu_k - \mu_1^k}\sqrt{\mathbb{E}_{\Phi \sim \text{Haar}}\Biggl[\abs{r_k - q_z}^2\Biggr]}
    \end{align}
    To bound Term (b.i), we note that it is the square of the Jensen gap for $m=k$, giving the following bound by 
    \begin{equation}\label{eq:termIIa_bound}
        \abs{\mu_k - \mu_1^k}^2 \leq \mathcal{O}\left(\frac{k^4}{N_B^{2k}N_A^2}\right)
    \end{equation}
    Next consider Term (b.ii). Let $\delta := r_k - \mu_1$, whose leading order term can be shown by explicit calculation to be $\mathcal{O}(\frac{k-1}{N_AN_B}) > 0$, so $r_k > \mu_1$. Term (b.ii) can then be bounded as follows:
    \begin{align}
        &\mathbb{E}_{\Phi \sim \text{Haar}}\Biggl[\abs{r_k - q_z}^2\Biggr]\\
        &= \mathbb{E}_{\Phi \sim \text{Haar}}\Biggl[r_k^2 + q_z^2 - 2q_zr_k\Biggr]\\
        &= \mathbb{E}_{\Phi \sim \text{Haar}}\Biggl[r_k^2 + q_z^2 - 2\mu_1r_k\Biggr]\\
        &\leq  \mathbb{E}_{\Phi \sim \text{Haar}}\Biggl[r_k^2 + q_z^2 - 2\mu_1^2\Biggr]\\
        &= \mathbb{E}_{\Phi \sim \text{Haar}}\Biggl[(\mu_1 + \delta)^2 + q_z^2 - 2\mu_1^2\Biggr]\\
        &= \mathbb{E}_{\Phi \sim \text{Haar}}\Biggl[ q_z^2 - \mu_1^2\Biggr] + \delta^2 + 2\mu_1\delta.\label{eq:termIIb_bound_intermediate}
    \end{align}
    Let us bound each of these terms. Starting with $\mathbb{E}_{\Phi \sim \text{Haar}}[ q_z^2 - \mu_1^2]$, notice that this is simply the Jensen bound for $m=2$, so by \autoref{eq:jensen_gap_bound}:
    \begin{equation}\label{eq:termIIb_subbound}
        \mathbb{E}_{\Phi \sim \text{Haar}}\Biggl[ q_z^2 - \mu_1^2\Biggr] \leq \mathcal{O}\left(\frac{1}{N_B^{2}N_A}\right).
    \end{equation}
    Next, we need to bound $\delta = r_k - \mu_1$.
    \begin{align}
        \delta &= r_k - \mu_1\\
        &= \frac{N_A + k-1}{N_AN_B + k-1} - \frac{1}{N_B}\\
        &= \frac{(N_B-1)(k-1)}{N_B(N_AN_B + k-1)}\\
        &\leq  \frac{k-1}{N_AN_B + k-1}\label{eq:delta_bound}
    \end{align}
    Bounding \autoref{eq:termIIb_bound_intermediate} using \autoref{eq:termIIb_subbound} and \autoref{eq:delta_bound} gives the following bound on Term (b.ii):
    \begin{align}
        &\mathbb{E}_{\Phi \sim \text{Haar}}\Biggl[\abs{r_k - q_z}^2\Biggr]\\
        &\leq  \mathcal{O}\left(\frac{1}{N_B^{2}N_A}\right) +\frac{(k-1)^2}{(N_AN_B + k-1)^2} + \frac{2(k-1)}{N_B(N_AN_B + k-1)}\\
        &\leq \mathcal{O}\left(\frac{k}{N_B^{2}N_A} \right)\label{eq:termIIb_bound}
    \end{align}
    to leading order. 
    
    Now we have all the ingredients, namely bounds on all the terms appearing in the various triangle inequalities used above. What remains is to put them together. By \autoref{eq:termIIa_bound} and \autoref{eq:termIIb_bound}, Term II is bounded as
    \begin{align}
        &\mathbb{E}_{\Phi \sim \text{Haar}} \Biggl[\abs{\mu_1^k - q_z\mu_{k-1}}^2\Biggr]\\
        &\leq \abs{\mu_k - \mu_1^k}^2 + \mu_{k-1}^2\mathbb{E}_{\Phi \sim \text{Haar}}\Biggl[\abs{r_k - q_z}^2\Biggr] + 2\mu_{k-1}\abs{\mu_k - \mu_1^k}\sqrt{\mathbb{E}_{\Phi \sim \text{Haar}}\Biggl[\abs{r_k - q_z}^2\Biggr]}\\
        &\leq \mathcal{O}\left(\frac{k^4}{N_B^{2k}N_A^2} \right) + \mathcal{O}\left( \frac{k}{N_B^{2k}N_A} \right) + \mathcal{O} \left(  \frac{k^{3/2}}{N_B^{2k}N_A^{3/2}} \right)\\
        &\leq \mathcal{O}\left( \frac{k}{N_B^{2k}N_A} \right).
    \end{align}
    Plugging this bound on Term II, and the bound \autoref{eq:termI_bound} on Term I, into \autoref{eq:termsIandII_def} gives
    \begin{align}
        &\mathbb{E}_{\Phi \sim \text{Haar}} \abs{q_z^{k} - q_z\mu_{k-1}}^2\\\nonumber
        &\leq \mathbb{E}_{\Phi \sim \text{Haar}} \Biggl[\abs{q_z^{k} - \mu_1^k}^2\Biggr] + \mathbb{E}_{\Phi \sim \text{Haar}} \Biggl[\abs{\mu_1^k - q_z\mu_{k-1}}^2\Biggr] \\
        &\;\;\;\;\;\;\;\;+ 2\sqrt{\mathbb{E}_{\Phi \sim \text{Haar}} \Biggl[\abs{q_z^{k} - \mu_1^k}^2\Biggr] \mathbb{E}_{\Phi \sim \text{Haar}} \Biggl[\abs{\mu_1^k - q_z\mu_{k-1}}^2\Biggr]}\\
        &\leq \mathcal{O}\left( \frac{k^2}{N_B^{2k}N_A} \right) + \mathcal{O}\left( \frac{k}{N_B^{2k}N_A} \right) + \mathcal{O}\left( \frac{k^{3/2}}{N_B^{2k}N_A} \right)\\
        &\leq \mathcal{O}\left( \frac{k^2}{N_B^{2k}N_A} \right)
    \end{align}
    By \autoref{eq:designtohaar}, this gives us the desired bound on $\mathbb{E}_{\Phi \sim (\epsilon',2k)\text{ design}} \abs{q_z^{k} - q_z\mu_{k-1}}$ for each $z$:\\
    \begin{align}
         \mathbb{E}_{\Phi \sim (\epsilon',2k)\text{ design}} \abs{q_z^{k} - q_z\mu_{k-1}} &\leq \sqrt{\mathbb{E}_{\Phi \sim \text{Haar}} \abs{q_z^{k} - q_z\mu_{k-1}}^2 + \mathcal{O}(\varepsilon')}\\
         &\leq \sqrt{\mathcal{O}\left( \frac{k^2}{N_B^{2k}N_A} \right) + \mathcal{O}\left(\frac{1}{N_B^{2k}N_A}\right)}\\
         &\leq \mathcal{O}\left( \frac{k}{N_B^{k}N_A^{1/2}} \right)
    \end{align}
   Finally, via \autoref{eq:sum_of_devs}, this gives us the desired bound on the first term of our trace distance triangle inequality \autoref{eq:td_triangle}:
    \begin{align}
        &\mathbb{E}_{\Phi \sim (\epsilon',2k)\text{ design}}\Biggl\lVert A(\ket{\Phi}) - B(\ket{\Phi}) \Biggr\rVert_1 \notag \\
        &\leq \sum_{z\in\{0,1\}^{|B|}}\frac{1}{\mu_{k-1}}\cdot\mathbb{E}_{\Phi \sim (\epsilon',2k)\text{ design}} \abs{q_z^{k} - q_z\mu_{k-1}}\\
        &\leq N_B^k\; \mathcal{O}\left( \frac{k}{N_B^{k}N_A^{1/2}} \right)\\
        &\leq \mathcal{O}\left( \frac{k}{N_A^{1/2}} \right).\label{eq:td_first_bound}
    \end{align}
    
    \subsection{Bounding the second term $\mathbb{E}_{\Phi \leftarrow 2 k \text {-design }} \| B(|\Phi\rangle)-\mathbb{E}_{\Psi \sim \text { Haar }} A(|\Psi\rangle) \|_1$}
    For the second term, we will need to convert to the Schatten 2-norm. We will be employing the method of frame potentials from \autoref{sec:frame}. First, let us re-state the term we wish to bound: 
    \begin{align}
        \mathbb{E}_{\Phi \sim (\epsilon',2k)\text{ design}}\Biggl\lVert B(\ket{\Phi}) - \mathbb{E}_{\Psi \sim \text{Haar}}A(\ket{\Psi}) \Biggr\rVert_1 &= \mathbb{E}_{\Phi \sim (\epsilon',2k)\text{ design}}\Biggl\lVert \frac{\sum_z\ketbra{\tilde{\Phi}_z}^{\otimes k}}{\mathbb{E}(q_z^{k-1})} - \mathbb{E}_{\Psi \sim \text{Haar}}\ketbra{\Psi}^{\otimes k}\Biggr\rVert_1
    \end{align}
    which is bounded in terms of the design and Haar frame potentials as follows:
    \begin{align}\label{eq:frame_pot_l1l2}
        \mathbb{E}_{\Phi \sim (\epsilon',2k)\text{ design}}\Biggl\lVert B(\ket{\Phi}) - \mathbb{E}_{\Psi \sim \text{Haar}}A(\ket{\Psi}) \Biggr\rVert_1 &\leq  \sqrt{\frac{\E F^{(k)}}{F^{(k)}_{Haar}}-1}.
    \end{align}
    The design frame potential appearing in the numerator of \autoref{eq:frame_pot_l1l2} is
    \begin{align}
        \E_{design} F^{(k)} = \Tr ({\rho^{(k)}}^2) &= \left(\frac{1}{\mathbb{E}(q_z^{k-1})}\right)^2\E_{design} \sum_{z_1,z_2}|\langle \tilde{\Phi}_{z_1}|\tilde{\Phi}_{z_2}\rangle|^{2k}
    \end{align}
    By \autoref{eq:d_to_haar_2k}, we can upper bound the design expectation of $\sum_{z_1,z_2}|\langle \tilde{\Phi}_{z_1}|\tilde{\Phi}_{z_2}\rangle|^{2k}$ by the corresponding Haar expectation plus the design error $\varepsilon'$:
    \begin{align}
        \E_{design} F^{(k)} &\leq \left(\frac{1}{\mathbb{E}(q_z^{k-1})}\right)^2 \Biggl[\E_{Haar} \sum_{z_1,z_2}|\langle \tilde{\Phi}_{z_1}|\tilde{\Phi}_{z_2}\rangle|^{2k} \; + \; \varepsilon'\Biggr] \\
        &= \left(\frac{1}{\mathbb{E}(q_z^{k-1})}\right)^2\left[(N_B^2-N_B) \frac{k!\prod_{j=0}^{k-1}N_A+j}{\prod_{i=0}^{2k-1} N+i} + N_B \prod_{j=0}^{2k-1}\frac{N_A+j}{N+j} \; + \; \mathcal{O}\left(\frac{1}{N_B^{2k}N_A}\right) \right]\label{eq:frame_pot_bound}
    \end{align}
    In the second line we have broken up the inner product term into the cases $z_1\neq z_2$ and $z_1=z_2$.
    
    For the Haar frame potential in the denominator in \autoref{eq:frame_pot_l1l2}, recall that we are looking at the Haar average on the remaining subsystem $A$, so
    \begin{align}\label{eq:haar_frame_pot}
        F^{(k)}_{Haar} = {{k+N_A-1}\choose{k}}^{-1} = \frac{k!}{\prod_{j=0}^{k-1} N_A+j} .
    \end{align}
    Plugging \autoref{eq:frame_pot_bound} and \autoref{eq:haar_frame_pot} into \autoref{eq:frame_pot_l1l2}, we have
\begin{align}
        &\mathbb{E}_{\Phi \sim (\epsilon',2k)\text{ design}}\Biggl\lVert B(\ket{\Phi}) - \mathbb{E}_{\Psi \sim \text{Haar}}A(\ket{\Psi}) \Biggr\rVert_1 \\
        &\leq   \sqrt{\left(\prod_{j=0}^{k-2}\frac{N+j}{N_A+j}\right)^2\left((N_B^2-N_B) \frac{k!\prod_{j=0}^{k-1}N_A+j}{\prod_{i=0}^{2k-1} N+i} + N_B \prod_{j=0}^{2k-1}\frac{N_A+j}{N+j} \; + \; \mathcal{O}\left(\frac{1}{N_B^{2k}N_A}\right)\right) \frac{\prod_{j=0}^{k-1} N_A+j}{k!}-1}\\
        &\leq \sqrt{N_B^2(N_A+k-1)^2\frac{\prod_{j=0}^{k-2}N+j}{\prod_{i=k-1}^{2k-1}N+i}+\mathcal{O}\left(\frac{N_A^{k}}{k!N_B}\right) + \mathcal{O}\left(\frac{N_A^{2k-2}}{k!N_B^2N_A^{k-1}}\right)-1}\\
        &\leq \sqrt{\frac{N^{k+1}}{\prod_{i=k-1}^{2k-1}N+i}+\mathcal{O}\left(\frac{1}{N_A}\right)+\mathcal{O}\left(\frac{N_A^{k}}{N_B}\right)+\mathcal{O}\left(\frac{N_A^{k-1}}{k!N_B^2}\right) -1}\\
        &\leq \sqrt{1-\mathcal{O}(1/N)+\mathcal{O}\left(\frac{1}{N_A}\right)+\mathcal{O}\left(\frac{N_A^{k}}{k!N_B}\right)+\mathcal{O}\left(\frac{N_A^{k-1}}{k!N_B^2}\right)-1}\\
        &\leq \sqrt{1-\mathcal{O}(1/N)+\mathcal{O}\left(\frac{1}{N_A}\right)+\mathcal{O}\left(\frac{1}{k!N_A}\right)+\mathcal{O}\left(\frac{1}{k!N_A^{k+1}}\right)-1}\\
        &\leq \mathcal{O}\left(\frac{1}{N_A}\right)\label{eq:td_second_bound}
    \end{align}
    where we have used the assumed condition $N_B \geq N_A^{k+1}$.
    \subsection{Overall bound}
    Combining the bounds obtained on the two terms of our triangle inequality, from \autoref{eq:td_first_bound} and \autoref{eq:td_second_bound} respectively, we get the claimed result, concluding the proof of Lemma 16:
    \begin{align}
         \mathbb{E}_{\Phi \sim (\epsilon',2k)\text{ design}}\Biggl\lVert \sum_{z\in\{0,1\}^{n_B}} q_z\bigl(\ket{\Phi_z}\!\bra{\Phi_z}\bigr)^{\otimes k} - \mathbb{E}_{\Psi \sim \text{Haar}}\bigl(\ket{\Psi}\!\bra{\Psi}^{\otimes k} \bigr)\Biggr\rVert_1 &\leq \mathcal{O}\left(\frac{1}{N_A}\right) + \mathcal{O}\left( \frac{k}{N_A^{1/2}} \right)\\ &\leq \mathcal{O}\left( \frac{k}{N_A^{1/2}} \right).
    \end{align}
\end{proof}
\begin{lemma}
    Suppose the result of Lemma 16 holds, namely:
    \begin{equation}
         \mathbb{E}_{\Phi \sim (\epsilon',2k)\text{ design}}\Biggl\lVert \sum_{z\in\{0,1\}^{n_B}} q_z\bigl(\ket{\Phi_z}\!\bra{\Phi_z}\bigr)^{\otimes k} - \mathbb{E}_{\Psi \sim \text{Haar}}\bigl(\ket{\Psi}\!\bra{\Psi}^{\otimes k} \bigr)\Biggr\rVert_1 \leq \mathcal{O}\left( \frac{k}{N_A^{1/2}} \right).
    \end{equation}
    Then with probability at least $1 - \mathcal{O}\left(1/N_A^{(1/2)-\alpha}\right)$, for $\alpha \in (0,1/2)$, the trace norm deviation of any individual sample $\Phi \sim (\epsilon',2k)\text{ design}$ is bounded by $k/N_A^{\alpha}$:
    \begin{equation}
        \Pr_{\Phi \sim (\epsilon',2k)\text{ design}}\Biggl[\Biggl\lVert \sum_{z\in\{0,1\}^{n_B}} q_z\bigl(\ket{\Phi_z}\!\bra{\Phi_z}\bigr)^{\otimes k} - \mathbb{E}_{\Psi \sim \text{Haar}}\bigl(\ket{\Psi}\!\bra{\Psi}^{\otimes k} \bigr)\Biggr\rVert_1 \geq \frac{k}{N_A^{\alpha}}\Biggr] \leq \mathcal{O}\left(\frac{1}{N_A^{(1/2)-\alpha}}\right)
    \end{equation}
\end{lemma}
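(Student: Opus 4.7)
The plan is to apply Markov's inequality directly. Define the nonnegative random variable
\begin{align}
    X(\Phi) := \Biggl\lVert \sum_{z\in\{0,1\}^{n_B}} q_z\bigl(\ket{\Phi_z}\!\bra{\Phi_z}\bigr)^{\otimes k} - \mathbb{E}_{\Psi \sim \text{Haar}}\bigl(\ket{\Psi}\!\bra{\Psi}^{\otimes k} \bigr)\Biggr\rVert_1,
\end{align}
which is nonnegative (as a trace norm) for every realization of $\Phi$. The hypothesis of the lemma, inherited from Lemma 16, is that $\mathbb{E}_{\Phi \sim (\epsilon',2k)\text{ design}}[X(\Phi)] \leq \mathcal{O}(k/N_A^{1/2})$.

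Since $X(\Phi) \geq 0$ almost surely, Markov's inequality gives, for any threshold $\tau > 0$,
\begin{align}
    \Pr_{\Phi \sim (\epsilon',2k)\text{ design}}\bigl[X(\Phi) \geq \tau\bigr] \;\leq\; \frac{\mathbb{E}[X(\Phi)]}{\tau} \;\leq\; \frac{\mathcal{O}(k/N_A^{1/2})}{\tau}.
\end{align}
The plan is then to set $\tau = k/N_A^{\alpha}$ for the chosen $\alpha \in (0,1/2)$, yielding
\begin{align}
    \Pr_{\Phi \sim (\epsilon',2k)\text{ design}}\!\left[X(\Phi) \geq \frac{k}{N_A^{\alpha}}\right] \;\leq\; \frac{\mathcal{O}(k/N_A^{1/2})}{k/N_A^{\alpha}} \;=\; \mathcal{O}\!\left(\frac{1}{N_A^{(1/2)-\alpha}}\right),
\end{align}
which is precisely the claimed concentration bound. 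The restriction $\alpha < 1/2$ is exactly what is needed to make the right-hand side tend to zero as $N_A \to \infty$, and $\alpha > 0$ is needed so that the threshold $k/N_A^\alpha$ itself is nontrivial (i.e.\ inverse-polynomially small in $N_A$).

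There is no real obstacle here: the argument is a one-line invocation of Markov, and the only thing to verify is that $X(\Phi)$ is indeed nonnegative for each sample (true since trace norms are nonnegative) and that the hypothesis of Lemma 16 may be applied to its expectation, which it can because the expectation is taken over the same $(\epsilon',2k)$-design ensemble. To obtain the ``inverse-exponentially close to 1'' statement in \thm{main_2k-k_thm}, one sends $\alpha \to 1/2$ at the appropriate rate (e.g.\ $\alpha = 1/2 - 1/n_A$); combined with the polynomial scaling $k = \mathcal{O}(\mathrm{poly}(n))$, both the failure probability and the trace-norm threshold decay inverse-exponentially in $n_A$.
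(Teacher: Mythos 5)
Your proof is correct and is exactly the paper's argument: a direct application of Markov's inequality to the nonnegative trace-norm random variable, with threshold $k/N_A^{\alpha}$, yielding the failure probability $\mathcal{O}(1/N_A^{(1/2)-\alpha})$. The paper states this in one line; your version simply makes the same computation explicit.
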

\begin{proof}
    The desired statement follows by an immediate application of Markov's inequality to the result proved in Lemma 16.
\end{proof}
With these two lemmas, we can now present the proof of \autoref{thm:2k->k}.
\begin{proof} (Proof of \autoref{thm:2k->k})
    This follows directly from Lemmas 16 and 17.
\end{proof}

\end{document}